\documentclass[runningheads]{llncs}

\usepackage{geometry}
\geometry{
  a4paper,         
  textwidth=15cm,  
  textheight=24cm, 
  heightrounded,   
  hratio=1:1,      
  vratio=2:3,      
}

\usepackage{framed}

\usepackage{amsmath,amssymb}
\usepackage{nicefrac}

\usepackage[noend,noline]{algorithm2e}
\RestyleAlgo{ruled}

\usepackage{tikz}
\usetikzlibrary{calc,arrows.meta,patterns}

\usepackage{hyperref}
\usepackage{cleveref}

\usepackage{xspace}
\newcommand{\bigO}{\ensuremath{O}}
\newcommand{\E}{\ensuremath{\mathbb{E}}\xspace}

\newcommand{\poly}{\text{poly}}

\title{Supervised Distributed Computing\thanks{John Augustine is supported by the Centre for Cybersecurity, Trust and Reliability (CyStar), IIT Madras. Christian Scheideler and Julian Werthmann are funded by the Deutsche Forschungsgemeinschaft (DFG, German
Research Foundation) – 549499840.}}

\author{John Augustine\inst{1}\orcidID{0000-0003-0948-3961} \and Christian Scheideler\inst{2}\orcidID{0000-0002-5278-528X} \and Julian Werthmann\inst{3}\orcidID{0000-0002-5110-5625}}

\authorrunning{J. Augustine, C. Scheideler and J. Werthmann}

\institute{
    IIT Madras, India \email{augustine@iitm.ac.in}
    \and Paderborn University, Germany \email{scheideler@upb.de}
    \and Paderborn University, Germany \email{jwerth@mail.upb.de}
}

\begin{document}
    
\maketitle
    
\begin{abstract}

We introduce a new framework for distributed computing that extends and refines the standard master-worker approach of scheduling multi-threaded computations. In this framework, there are different roles: a supervisor, a source, a target, and a collection of workers. Initially, the source stores some instance $I$ of a computational problem, and at the end, the target is supposed to store a correct solution $S(I)$ for that instance. We assume that the computation required for $S(I)$ can be modeled as a directed acyclic graph $G=(V,E)$, where $V$ is a set of tasks and $(v,w) \in E$ if and only if task $w$ needs information from task $v$ in order to be executed. Given $G$, the role of the supervisor is to schedule the execution of the tasks in $G$ by assigning them to the workers. If all workers are honest, information can be exchanged between the workers, and the workers have access to the source and target, the supervisor only needs to know $G$ to successfully schedule the computations. I.e., the supervisor does not have to handle any data itself like in standard master-worker approaches, which has the tremendous benefit that tasks can be run massively in parallel in large distributed environments without the supervisor becoming a bottleneck. But what if a constant fraction of the workers is adversarial? Interestingly, we show that under certain assumptions a data-agnostic scheduling approach would even work in an adversarial setting without (asymptotically) increasing the work required for communication and computations. We demonstrate the validity of these assumptions by presenting concrete solutions for supervised matrix multiplication and sorting.

\keywords{Distributed Computing \and Fault Tolerance \and Peer-to-peer Networks}
\end{abstract}

\section{Introduction}

There is a long line of work in the literature on scheduling multithreaded computations (see, e.g., \cite{Leung2004} for a survey). Many approaches follow the master-worker paradigm, where a master delivers tasks among workers and collects their outcomes. In the most simple setting, there is a single master that delivers all work, but this limits scalability. However, with approaches such as MapReduce, where every worker can become a master as well so that work can be delivered recursively, the scalability can be significantly increased. In massive computations using a large number of workers, it cannot be assumed that all workers work reliably all the time. Therefore, suitable mechanisms are needed to obtain a desired degree of reliability. In the Hadoop MapReduce framework, for example, a worker is expected to report back periodically to its master with completed work and status updates. If a worker falls silent for longer than a given interval, the master records the worker as dead and sends the work assigned to that worker to another worker \cite{Hadoop}. While in a closed or trusted environment, it is reasonable to assume that the worst case that can happen to workers are crash failures, in situations where work leaves a trusted environment, one also has to worry about adversarial behavior. A popular approach where such behavior has to be taken into account is volunteer-based computing. An early example of this computing approach is SETI@home (1998-2020), where tasks were given to volunteers to analyze astronomical data for extraterrestrial life. To prevent cheating, SETI@home gave each task to two volunteers. If their results disagree, additional volunteers would be contacted for that task until enough results agree. Its follow-up service, BOINC, is nowadays used for a wide range of scientific computations. SETI@home and BOINC \cite{BOINC} both follow the approach that all tasks are delivered by a trusted server since it appears to be challenging to apply the MapReduce paradigm in a setting with untrusted volunteers. This server has to receive all input data from some source, which can be problematic for applications with massive amounts of data, like in data science and natural sciences, since that creates a bottleneck. If, instead, the server could just focus on scheduling computations without handling input and output data, it would be easy for a single server to schedule even millions of tasks. In fact, a recent study has shown that such an approach would indeed be beneficial \cite{Pastrana-CruzL23}. Thus, we will address the following central question: \\

\emph{Can the handling of data be decoupled from the problem of scheduling the execution of tasks, even under adversarial behavior of some of the workers?} \\

At first glance, that looks like a bad idea because how can a server know that tasks were executed correctly if it does not see the input or output. Alternatively, the source could do the checking, but the source might not have or be willing to provide the resources for output verifications. Thus, we are left with the workers. If there is an output verification mechanism for the tasks whose runtime is much lower than executing a task, an obvious strategy for the server would be the following: Ask a single worker to execute a task (based on input provided by the source) and send its output to a quorum of workers for verification. Suppose that the fraction of adversarial workers is sufficiently small and the quorum consists of a logarithmic number of randomly chosen workers. Then one can easily prove that if a majority of workers in that quorum tells the server that the output is correct, at least one honest worker in that quorum received the correct output, with high probability, irrespective of whether the executing worker is adversarial or not. However, for any constant fraction $\beta$ of adversarial workers with $\beta>0$, this approach has an inherent communication overhead of a logarithmic factor (given that the verification mechanism cannot easily be turned into a robust distributed version) because a logarithmic number of quorum members is necessary to be able to trust the quorum vote with high probability, and, in general, the quorum members need to know the entire input to be able to check whether the output is correct. Thus, the following question arises: \\

\emph{Is there a data-agnostic scheduling approach with a constant factor overhead for the computation and the communication?} \\

Interestingly, for the case that computations are given as task graphs, we  present a general framework that can achieve a constant factor overhead given that a lightweight verification mechanism is available for the tasks. Moreover, we demonstrate the validity of that assumption by presenting solutions for two standard problems: sorting and matrix multiplication. Our tailored verification mechanisms have the following approach in common, which might be of independent interest:

Standard verification mechanisms assume that a verifier knows the correct input and receives the output together with a certificate that allows it to check the correctness of the output. An exception are probabilistically checkable proofs, where a verifier can be convinced about the correctness of an output by just seeing a digest of the certificate. We are following a related approach in a sense that the server obtains digests of the outputs of the tasks that will help the workers assigned to a task to determine the correctness of the inputs received from preceding tasks. The problem with that approach is that the server does not know the instance stored in the source. But it can instead fetch a digest of the instance from the source, and we assume that the source is honest, so that the server has a trusted base for the local verifications.

Of course, it is not yet clear to which extent this approach can also be used for other problems, but in recent years, enormous progress has been made on light-weight verification, as discussed in the related works section below. Thus, we believe that our approach can also be applied to a wide range of other problems.

\subsection{Supervised distributed computing framework}

In our supervised framework, we have a \emph{supervisor}, a \emph{source}, a \emph{target}, and a collection of \emph{workers}. We use the word ``supervisor" instead of ``server" to emphasize our light-weight scheduling approach. The source just represents a storage environment with read-only access, while the supervisor and the target can execute algorithms. All three are assumed to be reliable in a sense that they operate on time and do not experience crashes or adversarial behavior. Given an instance $I$ of a computational problem, $I$ is initially stored in the source, and the goal of the supervisor is to use the workers to compute a solution $S(I)$ for $I$ that is ultimately stored in the target. Note that depending on the application, some of these roles might be associated with the same entity. For example, if $S(I)$ is small ("yes" or "no"), it might be convenient that the supervisor and the target are the same entity. On the other hand, the source might not be a single entity but multiple entities (e.g., webpages).

We will focus on approaches where the computation for $I$ can be represented as a directed acyclic graph (DAG) $G=(V,E)$, where each node $v \in V$ represents a task and $(v,w) \in E$ if and only if task $w$ needs information from task $v$ to be executed. We call such a graph a \emph{task graph}. The \emph{initial tasks} of $G$, which are tasks without incoming edges, need information from the source about $I$, while the \emph{final tasks} of $G$, which are tasks without outgoing edges, are supposed to send information to the target so that the target can assemble a solution for $I$. $G$ is assumed to be known to the supervisor. For most problems in this paper, we use $n=|V|$ to denote the size of $G$ and $D$ to denote the \emph{span} of $G$, which is the longest length of a directed path in $G$. $G$, $n$, and $D$ might depend on the size of $I$.

In this paper, we assume that the supervisor has access to a black-box worker-sampling mechanism that returns a worker that is adversarial with probability at most $\beta$ for some fixed $\beta \ge 0$, where $\beta$ is not known to the supervisor. Other ways of selecting workers can certainly be considered and are subject to future research. The number of workers might change over time. However, when an honest worker is selected for some task, we assume that it remains available and honest as long as the supervisor needs it for the computation. We allow all adversarial workers to be controlled by a single adversarial entity that can make decisions based on any information currently available in the system, but it does not know the random choices of the supervisor, the target, or the honest workers in the future. Thus, the adversary is omniscient w.r.t. the past and present but oblivious to the sampling process since it cannot convert an honest worker into an adversarial one \emph{after} being picked by the sampling process. Furthermore, the adversary cannot change, drop, delay, or reroute messages between the supervisor, source, target, and the honest workers.

For simplicity, we assume that time proceeds in synchronous \emph{rounds}. A round is long enough so that at the beginning of a round, the supervisor can select workers for all currently executable tasks and introduce them to all workers from which they need information (since a worker assigned to task $w$ needs the outputs of all tasks $v$ with $(v,w)\in E$ to execute $w$). On top of that, a round offers enough time for the selected workers to receive all required information (as long as both sides are honest) and perform the necessary verifications and computations for the tasks before the end of that round. This is reasonable if all tasks require approximately the same computational effort so that the overall computation can progress in an efficient way.

To find an efficient and robust supervised solution for a given problem $P$, several issues have to be addressed: First of all, instance $I$ might have to be stored in the source in a preprocessed way, and the time needed to preprocess $I$ should be bounded by $O(|I|)$. Second, a family of task graphs with low span, low total work, and low maximum work per task is needed for the given computational problem to allow a low runtime and high work efficiency and parallelism for the workers. And third, a robust scheduling strategy is needed for the supervisor to assign the tasks of the given task graph to the workers so that at the end a correct solution can be assembled at the target. The primary challenge here is to identify lightweight verification mechanisms for the supervisor as well as the workers that allow them to identify malicious behavior. Ideally, we would like to match the following bounds:
\begin{itemize}
\item Total preprocessing time and communication work of the source: $O(|I|)$.
\item Total work of supervisor: $\tilde{O}(|V|+|E|)$, where $\tilde{O}$ hides a polylogarithmic factor.
\item Total work of workers, including verifications: $O(W)$, where $W$ is the total work over all tasks in $G$, excluding verfications.
\item Runtime of the computation: $O(D)$.
\item Total work of the target: $O(S)$, where $S$ is the worst-case size of a solution for $I$.
\end{itemize}
By \emph{work} we mean here communication as well as computational work. This rules out trivial scheduling strategies like giving the entire computation to a single worker: Since a worker can only finish one task per round, the runtime would jump up to $O(|V|)$ in that case, which can be significantly larger than $D$. It also rules out getting the supervisor involved in inputs and outputs because then its work would be in the same order as the source or target, which can be significantly larger than the size of $G$.

\subsection{Use cases}

To motivate our framework, we discuss several use cases.

With our approach, services like SETI@home and BOINC can now be realized in a much more lightweight fashion for the server (w.r.t. storage as well as communication) since data handling would be decoupled from scheduling, without putting a much higher burden on the source. One can also envision to implement the supervisor in a standard cloud because with a small overhead, also the fees of using a cloud service would be very small.

Another use case would be \emph{purely} P2P-based volunteer computing. A standard approach for robust and scalable computations in P2P systems without mutual trust relationships is to partition the peers into random quorums of logarithmic size so that, w.h.p., the fraction of adversarial peers within a quorum is approximately equal to the total fraction of adversarial peers. Given that the fraction of adversarial peers is sufficiently low, the quorums can take care of the role of the supervisor. More precisely, whenever a quorum is supposed to execute some task, it selects one of its members at random to be the worker for that task, introduces that worker to the workers of the preceding tasks in the task graph by contacting their respective quorums, and then run a standard consensus algorithm to arrive at an agreement on the answer of the worker. Since the supervisor is data-agnostic, \emph{the quorum does not have to replicate any input or output data for this approach to work} so that the overhead of using a quorum is negligible given that the amount of data that needs to be handled is sufficiently high. As a by-product, this significantly improves the computational overhead of all \emph{provably} robust distributed computing solutions that have been proposed for P2P systems in the literature before, where the data handling and/or the computations have at least a logarithmic overhead. See the related work for further details.

Another interesting use case would be blockchain-based volunteer computing. In blockchains, space is a particularly delicate resource since blocks are of limited size and placing information in blocks is not for free. Thus, it is important to save as much space as possible to coordinate distributed computations via a blockchain. If the blockchain is used to emulate the supervisor, it just needs to store all information that a supervisor would have to handle, which only depends on the size of the task graph. Workers may then be assigned to tasks by announcing their interest in an executable task in the blockchain, so that the source or the predecessors in the task graph know which worker to send the information to. 

Working out the details might open up interesting new research directions. 

\subsection{Related work}

Several works proposed models that are related to our approach. A highly influential one is due to Blum et al.~\cite{BEGKN91}. They consider a setting where a data structure is under an adversary's control and operations on that data structure are initiated by a reliable checker. The goal is to come up with a strategy for the checker to verify the execution of the operations on that data structure using as little memory as possible so that any error in the execution of an operation will be detected by the checker with high probability. Their work inspired a large collection of works on authenticated data structures (see \cite{Tamassia03} for a survey). A standard assumption for that approach is that there is a trusted source, an untrusted server, and a client. The source provides the data for the server and a digest on the current state of the data to the clients. The server is supposed to organize the data received from the source in a desired data structure. Whenever the client asks the server to execute some request on that data structure, it requires the server to send it an answer together with a certificate that allows it to verify together with the digest from the source that the answer from the server is correct. Thus, the source role is similar to our source role, the untrusted server can be seen as the untrusted computation by the workers, and the client might be identified with the target. While research in this area has focused on data structures, some of the techniques, like using digests of the input, are also useful in our context.

Goodrich \cite{Goodrich08} adapted parallel fault diagnosis results algorithms to prevent participants in grid computing from cheating. Specifically, the supervisor reassigns previously computed tasks to other participants with a specific assignment strategy to identify false results. However, the supervisor still needs to receive all outputs resulting in an I/O bottleneck.

Closer to our approach are \emph{certifying algorithms} \cite{AlkassarBMR14}, which compute, in addition to an output, a witness certifying that the output is correct. A checker for such a witness, which is usually much faster than the original algorithm, then checks whether the output is indeed correct. Certifying algorithms have been found for various concrete problems in the sequential domain (e.g. \cite{McConnellMNS11}), which we expect to be useful for our approach as well, but so far no generic approach has been proposed in that area.

The problem of finding generic approaches for delegating work to potentially untrusted computational entities has been heavily studied in the cryptographic community (see, e.g., \cite{DBLP:conf/dlt/CrescenzoKKS22} for a survey). In this research area, a computationally weak client provides one (or more) computationally powerful servers with a program and an input. The server returns the output of the program on the input and some proof that the output is actually correct. This setting is formalized with interactive proof systems (see, e.g., \cite{feigenbaum1992overview} for an overview), where a prover aims to convince a verifier that some statement is true. For the delegation of computation to be useful, it is typically required that generating the proof takes no longer than executing the program, and verifying the proof takes time linear in the size of the input (both up to polylogarithmic factors). Especially but not only in the context of blockchains, non-interactive arguments have been considered, in the sense that the prover and the verifier do not need to exchange information in multiple rounds. Rather, the prover just provides one message (say, by posting it on the blockchain) that is sufficient to convince the verifier. An \emph{argument} is a relaxation of a proof. While the generation of proofs for false statements is normally required to be impossible, it suffices to be computationally infeasible for arguments. Notable classes of non-interactive arguments are SNARGs and SNARKs (see, e.g., \cite{chiesa2014succinct,nitulescu2020zk} for introductions to them).

Robust leader-based computations have been heavily pursued in the BFT (Byzantine fault tolerance), and blockchain communities (see, e.g., \cite{ZhangPMTDMZJ24} for a recent survey on BFT consensus protocols and blockchains) since in practice, they scale much better than decentralized approaches. There, leaders are either elected from the participants (a classical example is PAXOS \cite{PeaseSL80}) or are assumed to be trusted dealers so that, for example, a BFT algorithm can be initialized in an appropriate way (see, e.g., \cite{Kokoris-KogiasM20} for some background) or reliable broadcasting can be performed. The solutions proposed in these communities can certainly be used to perform robust distributed computations, but we are not aware of any works on how to use a trusted dealer to just \emph{supervise} (in our sense) executions of tasks for computational problems instead of getting directly involved in the data handling. A significant amount of work has also been invested in decentralized approaches, such as secure multi-party computations (see, e.g., \cite{ZHAO2019357} for a survey). However, their overhead under adversarial behavior (compared to the naive approach) is at least linear in the group size for all solutions presented so far, so that they are not helpful in our context to reduce the overhead to a constant.

Other directions assume the existence of a reliable cloud. Friedman et al.~\cite{FriedmanKK13} propose a model where processes exchange information by passing messages or by accessing a reliable and highly-available register hosted in the cloud. They show a lower bound on the number of register accesses in deterministic consensus protocols and provide a simple deterministic consensus protocol that meets this bound when the register is compare-and-swap (CAS). Other results on distributed computing with the cloud are considered in \cite{AfekGP21,AugustineBMPRT23}. In \cite{AfekGP21}, no adversarial behavior is considered, while in \cite{AugustineBMPRT23}, a certain fraction of the peers can be adversarial. The latter paper provides robust solutions for the Download problem, where the cloud stores a collection of bits and these bits must be downloaded to all of the (honest) peers, and the problem of computing the Disjunction and Parity of the bits in the cloud.

In algorithmic research on peer-to-peer systems, a common strategy to filter out adversarial behavior is to organize the peers into quorums of logarithmic size. Various strategies have been proposed to organize peers into random quorums (e.g., \cite{AwerbuchS09}), and various protocols have been presented that use the idea of quorums in order to ensure reliable operations in a peer-to-peer system (e.g., \cite{FiatSY05,HildrumK03,NaorW03}). There are also techniques that do not rely on groups (e.g., \cite{FiatS07,SaiaFGKS02}), but these still incur at least a logarithmic overhead compared to the setting where all peers are honest. Jaiyeola et al. \cite{JaiyeolaPSYZ18} investigate the case that quorums are only of $O(\log \log n)$ size and show that when combining this with a suitably chosen overlay network, all but a $O(\nicefrac{1}{\poly(\log n)})$ fraction of peers can successfully route messages to all but a $O(\nicefrac{1}{\poly(\log n)})$-fraction of the peers. As demonstrated by them, this can then be used to solve various important problems like routing or broadcasting with just $O(\poly(\log \log n))$ overhead per peer. However, they did not address the issue of reliable distributed computation.

\subsection{Our contributions}

In this paper, we consider a new approach for robust distributed computing, called supervised distributed computing. We first present a general framework that assumes, for simplicity, that the workers can locally and without the help of the supervisor verify the correctness of information received from other workers. For the case that the task graph is a path of length $n$ and the black-box sampling mechanism selects an adversarial worker with probability at most $\beta$ for some sufficiently small constant $\beta>0$, we obtain the following results (see Theorem~\ref{thm:path}):
\begin{itemize}
\item The number of rounds needed to successfully complete the computation is $(1+O(\beta))n$, w.h.p. (where ``w.h.p.'' stands for {\em with high probability}, meaning a probability of at least $1-1/n^c$ for some constant $c>1$ that can be chosen arbitrarily high);
\item on expectation, the source sends the input (to the initial task) $1+O(\beta)$ times and the target receives the solution (from the final task) 
$1+O(\beta)$ times; 
\item the expected total (computational and communication) work of the honest workers is within a $1+O(\beta)$ factor of optimal (i.e., when $\beta=0$).
\end{itemize}
All of these bounds are asymptotically optimal if the workers have to be selected via the black-box sampling mechanism, due to a simple observation: The expected number of times a worker has to be sampled for some task until some honest worker is chosen is
\[
 \sum_{i \ge 1} i \cdot \beta^{i-1} (1-\beta) = \frac{1}{1-\beta} = 1+\frac{\beta}{1-\beta} .
\]
Note that the bounds are not obvious since we are dealing with a mixed ad\-ver\-sa\-ri\-al-sto\-chas\-tic process, where adversarial workers are allowed to show \emph{any} behavior based on the supervisor strategy and the current state of the system. Consequently, the trivial solution of using the same worker for all tasks is not a good idea since an adversarial worker would simply behave honest until the last task, which would force the supervisor to redo the entire computation since it has no records of the task outputs. Thus, the best runtime guarantee that can be given in this case, w.h.p., would just be $O(n \log n)$ instead of $O(n)$.

For arbitrary DAGs of size $n$ with degree $d$ and span $D \ge \log n$ and a sufficiently small $\beta=O(1/d^{2+\epsilon})$ (for any constant $\epsilon>0$), we show that the number of rounds to successfully complete the computation is $O(D)$, w.h.p. Furthermore, any such DAG can be extended without increasing the runtime bound so that the expected communication work of the source and target are asymptotically optimal, as formally stated in Theorem~\ref{thm:arb}. The bound on $\beta$ is close to the best possible bound of $1/d$ when using our approach, as stated in Section~\ref{sec:dag}. For the analysis, we adapt a well-known technique for non-adversarial settings, the delay sequence argument (see, e.g., \cite{LeightonMRR94,Ranade91}), to a mixed adversarial-stochastic setting, which might be of independent interest. Note that we did not try to optimize the bounds on $\beta$ for the path and the DAG case since our primary focus is on showing the feasibility of our approach.

Afterwards, we present solutions for two concrete problems. 

For the matrix multiplication problem, we combine a well-known divide-and-conquer approach with Freivalds' algorithm for cross-checking matrix multiplications in order to obtain a task graph of span $O(\log n)$ for the multiplication of two $n \times n$-matrices. As long as $\beta>0$ is a sufficiently small constant, the number of rounds to complete the computation is $O(\log n)$, w.h.p., the expected overhead of the source and the target is a constant, and the expected computational work for the honest workers is $O(n^3)$ (see Theorem~\ref{thm:matrix}), which matches the sequential work for the standard matrix multiplication. Using more efficient approaches like Strassen's algorithm would result in better work bounds, but we did not explore these here.

For the sorting problem, we provide a new way of performing a parallel mergesort by designing a task graph representing a leveled network with $O(\log n)$ levels and $n$ nodes per level, which might be of independent interest. As summarized formally in Theorem~\ref{th:merge}, for any instance of $m$ numbers with $m \ge n \log n$ and a sufficiently low constant $\beta>0$, the number of rounds to complete the computation is $O(\log n)$, w.h.p., the expected overhead for the source and target is a constant, the expected total amount of computational work for the honest workers is $O(m \log m)$, and their expected total amount of communication work is $O(m \log n)$. Thus, our solution only incurs a constant overhead compared to the sequential mergesort algorithm.

The path case is considered in Section~\ref{sec:path} and the DAG case in Section~\ref{sec:dag}. The matrix multiplication solution is presented in Section~\ref{sec:matrix}, and the sorting solution is given in Section~\ref{sec:mergesort}. Due to space constraints, most of the proofs had to be dropped, but they can be found in the full version of the paper.

\section{The Case of Path Graphs} \label{sec:path}

Our starting point will be to solve the supervised distributed computing problem for the case that the task graph is a directed path of $n$ nodes $v_1,\ldots,v_n$, i.e., each node $v_i$ represents a task whose input is the output of $v_{i-1}$ (resp.~the input stored at the source if $i=1$) and the output of $v_n$ is the solution to the problem to be solved by that computation. 

We will exclusively focus on the scheduling part and therefore simply assume that an honest worker assigned to some task $v_i$, $1 \le i \le n$ is able to verify without the help of the supervisor whether the output of task $v_{i-1}$ is correct or not, and that the target can verify whether the output of task $v_n$ is correct. As mentioned in the model, we assume a round to be long enough so that it covers the time for receiving $v_i$'s input from $v_{i-1}$ (or the source if $i=1$), verifying it, executing $v_i$ and then sending a feedback on that to the supervisor, given that the workers assigned to $v_{i-1}$ and $v_i$ are honest. Also, recall that the supervisor has access to a black-box sampling mechanism that chooses an adversarial worker with probability at most $\beta$.

Consider the following scheduling approach. Initially, the supervisor $s$ picks a random worker $p_1$ for task $v_1$ and introduces it to the source so that it can obtain the given instance. Similar to the Hadoop MapReduce framework, $s$ expects $p_1$ to reply to it by the end of that round with a DONE, meaning that $p_1$ has finished the execution of task $v_1$. If $s$ does not receive that message by the end of the round, $s$ continues to pick a new random worker $p_1$ for task $v_1$ for the next round and introduces it to the source until it finally receives a DONE. Once this is the case, it will pick a random worker $p_2$ for $v_2$ and introduce $p_1$ and $p_2$ to each other (so that $p_1$ knows whom to send its output to and $p_2$ knows whom to expect the output from at the beginning of the next round).

In general, for any worker $p_i$ assigned to task $v_i$ at the beginning of some round, for some $i>1$, the supervisor introduces $p_{i-1}$ and $p_i$ to each other and expects $p_i$ to reply to it by the end of that round with one of the following messages:
\begin{itemize}
\item DONE: This means that $p_i$ is ready to forward its output to the worker assigned to $v_{i+1}$, resp. the target if $i=n$. 
\item REJECT: This means that $p_i$ rejects the input sent to it from $p_{i-1}$, or $p_i$ has not received anything from $p_{i-1}$ at the beginning of that round.
\end{itemize}
Note that for specific problems, more complex replies (like digests of outputs) might be expected by the supervisor, but we just use DONE as a placeholder here.

The supervisor $s$ will react to the reply of $p_i$ in the following way:
\begin{itemize}
\item If $s$ receives a DONE from $p_i$, $i<n$, it picks a random worker $p_{i+1}$ for $v_{i+1}$ in order to continue the execution at $v_{i+1}$. If $s$ receives a DONE from $p_n$, it asks $p_n$ to forward its output to the target. If $s$ receives a DONE from the target, the computation terminates.
\item If $s$ does not receive a reply from $p_i$, $s$ picks a new random worker $p'_i$ for $v_i$ in order to re-execute $v_i$.
\item If $s$ receives a REJECT from $p_i$ with $i>1$ or the target, it picks a new random worker $p_{i-1}$ for $v_{i-1}$ in order to continue the execution at $v_{i-1}$. We also call that a \emph{rollback}. For $i=1$ the supervisor selects a new worker for $v_1$ and introduces it to the source.
\end{itemize}
Given that the computation is currently at $v_i$, $s$ will remember the last worker $p_j$ assigned to $v_j$ for all $j<i$ so that if there is a rollback to $v_{i-1}$ in case of a REJECT from $p_i$, it will ask the worker last assigned to $v_{i-2}$ to deliver its output to the new worker assigned to $v_{i-1}$. If the rollback goes all the way back to $v_1$, the supervisor will choose a new worker for $v_1$.

\begin{theorem}
\label{thm:path}
For a path graph of size $n$ it holds: If $\beta \le 1/12$ then under any adversarial strategy, the supervised computation correctly terminates in $(1+O(\beta))n$ rounds, w.h.p. Furthermore, the source just needs to send the input $1+O(\beta)$ times and the target just needs to receive the solution $1+O(\beta)$ times, on expectation. Moreover, the total computational and communication work of the workers is within a $1+O(\beta)$ factor of optimal (i.e., when $\beta=0$).

\end{theorem}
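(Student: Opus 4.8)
The plan is to track the supervisor's \emph{position} $i$ (the index of the task currently being attempted), which performs a walk on $\{1,\dots,n+1\}$ --- with $n+1$ denoting the state ``the target is verifying $v_n$'s output'' --- that starts at $1$, moves by at most one per round, and is absorbed once the target accepts (plus $O(1)$ rounds for each rejection by the target). Correctness requires no assumption on $\beta$: the target is reliable and accepts only a correct solution, and, by the local-verification assumption, an honest worker at $v_i$ forwards an output only after confirming that it received the correct output of $v_{i-1}$; hence whenever the computation halts it halts correctly.

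For the round count I would first record the structural fact that \emph{the worker acting at the current position in any round was freshly sampled at the end of the previous round} --- this holds whether we reached the position by an advance, a rollback, or a re-pick. Writing $R_{\mathrm{rb}},R_{\mathrm{stay}}$ for the numbers of rounds in which the position decreases resp.\ stays, and noting that the position travels net distance $n$ (up to a lower-order additive term from the final handshakes with the target), the number of advances equals $n+R_{\mathrm{rb}}$, so $R = n + 2R_{\mathrm{rb}} + R_{\mathrm{stay}}$ up to that term. The crux is an \emph{injective charging} of each stay and each rollback to a distinct adversarial sample: a stay means the acting worker withheld its reply, so it is adversarial and is discarded; a rollback is caused by a \texttt{REJECT} from the acting worker $p_i$, which is either adversarial (and discarded) or honest, in which case its remembered predecessor $p_{i-1}$ delivered a wrong or missing output and, since an honest worker that ever said \texttt{DONE} holds a correct output, $p_{i-1}$ must be adversarial --- and the rollback discards $p_{i-1}$. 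A discarded worker is never reused and a worker surviving its round must have said \texttt{DONE}, so distinct stays/rollbacks are charged to distinct adversarial workers; therefore $R_{\mathrm{rb}}+R_{\mathrm{stay}}\le N_a$, the total number of adversarial samples, and the \emph{deterministic} inequality $R \le n + 2N_a$ (up to lower-order terms) follows.

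The remaining, and most delicate, step is to show $N_a = O(\beta n)$ w.h.p., which is subtle because $N_a$ is a sum of Bernoulli$(\le\beta)$ indicators over a \emph{data-dependent stopping time} and the adversary may choose each worker's behaviour adaptively from all past samples. The prefix form of the charging bound --- if the process has not terminated after $j$ rounds then $j \le n + O(S_j)$, where $S_j$ is the number of adversarial workers among the first $j$ samples --- turns $\{R > (1+c\beta)n\}$ into $\{S_{(1+c\beta)n} = \Omega(\beta n)\}$ for a suitable constant $c$; since the adversary is oblivious to the sampling, $S_j$ is stochastically dominated by $\mathrm{Bin}(j,\beta)$, and a Chernoff bound (using $\beta \le 1/12$ and $\beta n = \Omega(\log n)$, which holds for constant $\beta$) makes this superpolynomially small, giving $R \le (1+O(\beta))n$ w.h.p. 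The same computation with a supermartingale and optional stopping yields $\mathbb{E}[R] \le n/(1-2\beta) = (1+O(\beta))n$. The work statement is then immediate: each round the honest acting worker performs at most one task's worth of computation, communication and verification, all tasks have essentially equal cost, and the $\beta=0$ execution performs exactly $n$ such units, so the expected honest work is a $1+O(\beta)$ factor of optimal.

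Finally, for the source and target I would bound $Z_i := \mathbb{E}[\text{number of rounds at position } i]$, since the source resends its input once for each round spent at position $1$ and the target receives a candidate solution once for each round at position $n+1$. Splitting visits to $i$ into advances from $i-1$, rollbacks from $i+1$, and re-picks at $i$, and charging each rollback and each re-pick to a distinct adversarial worker sampled at position $i-1$, $i$ or $i+1$ (using $\mathbb{E}[\#\text{adversarial workers at }i] \le \beta Z_i$ by a Wald argument, as they are freshly sampled), one obtains a linear recursion $Z_i(1-2\beta) \le Z_{i-1} + \beta Z_{i+1}$ for interior $i$, with boundary relations $Z_1(1-2\beta) \le 1 + \beta Z_2$ and $Z_n(1-2\beta) \le Z_{n-1}$. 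Solving this tridiagonal system by a backward sweep, whose multipliers converge to the smaller root of $\beta r^2 - (1-2\beta)r + 1 = 0$ --- which equals $1+O(\beta)$ precisely because $\beta \le 1/12$ keeps the discriminant $1-8\beta+4\beta^2$ positive and the fixed point attracting --- gives $Z_2 = (1+O(\beta))Z_1$ and hence $Z_1 \le 1/(1-3\beta-O(\beta^2)) = 1+O(\beta)$, and similarly $Z_{n+1} = 1+O(\beta)$. The recurring obstacle is the adversary's adaptivity: the samples cannot be treated as a frozen i.i.d.\ stream, so every probabilistic step --- the concentration of $N_a$, the Wald bounds feeding the recursion, and the claim that a remembered predecessor is adversarial with probability $O(\beta)$ in spite of the conditioning induced by how the walk reached its state --- has to be argued through stopping times and the obliviousness of the sampling rather than about a fixed sequence.
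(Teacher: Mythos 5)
Your runtime argument (injective charging of each stay/rollback to a distinct, subsequently discarded adversarial sample, giving $R \le n + 2N_a$, followed by a Chernoff bound on $N_a$ using the obliviousness of the sampling) is essentially the paper's Lemma~\ref{th:line-time}, and your derivation of the work bound from $\E[R]=(1+O(\beta))n$ is fine. Your treatment of the source is a genuinely different route from the paper's: instead of the exchange/dominance argument of Lemma~\ref{th:line-input} (which shows that ``always REJECT'' is the worst adversarial strategy and then computes a return probability), you set up visit-count inequalities $Z_i(1-2\beta)\le Z_{i-1}+\beta Z_{i+1}$ via Wald and solve them by a backward sweep. That sweep is sound: the ratios $Z_{i+1}/Z_i$ stay below the smaller root $r^*=1+O(\beta)$ of the quadratic, and plugging into the left boundary gives $Z_1=1+O(\beta)$. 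This is arguably cleaner than the paper's case analysis.

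The genuine gap is the target bound, which you dismiss with ``similarly $Z_{n+1}=1+O(\beta)$.'' It is not similar: the inhomogeneous term of your system sits at $i=1$, and the only information your sweep produces about the right end of the path is the ratio bound $Z_{i+1}\le r^* Z_i$ with $r^*=1+\Theta(\beta)$. Compounding this over the path gives only $Z_{n+1}\le (r^*)^{n}Z_1 = e^{O(\beta n)}$, which is exponentially large; a forward sweep fares no better, since the accumulated constants $c_i$ in $Z_i\le c_i+\rho_i Z_{i+1}$ grow like $(1-2\beta)^{-i}$. Indeed, $Z_i=C\mu^i$ with $\mu=1+\Theta(\beta)$ satisfies all of your interior and left-boundary inequalities, so the system as stated simply does not pin down $Z_n=O(1)$; some additional structural input is required. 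The paper spends its most delicate lemma (Lemma~\ref{th:line-output}) on exactly this point, analyzing the adversary's per-level choice between delivering a correct or a wrong output via the coupled recursions $h_i=h_{i+1}+\frac{\beta}{1-\beta}a_{i+1}$ and $a_i=\frac{\beta}{1-\beta}a_{i+1}$, and separately proves a renewal-type statement (Lemma~\ref{th:line-middle}) that the computation returns to any fixed $v_i$ only $1+O(\beta)$ times in expectation by restarting the source analysis at $v_i$. To repair your proof you would need one of these ingredients, e.g., first establish $Z_n=O(1)$ by the renewal argument and then use $Z_{n+1}\le 1+\beta Z_n$ (every rejected transmission is charged to an adversarial worker sampled at $v_n$); as written, the target claim is unsupported.
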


We start the proof of this theorem with a bound on the runtime.

\begin{lemma} \label{th:line-time}
If $\beta \le \nicefrac{1}{5}$ and $t \ge (\nicefrac{4\beta}{1-4\beta}) n$ then for any adversarial strategy, the supervised computation terminates within $n + t$ rounds with probability at least $1-\nicefrac{1}{\poly(t)}$.
\end{lemma}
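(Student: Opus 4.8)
The plan is to track the ``frontier'' of the computation as it walks back and forth along the path and to charge every round that fails to make net progress to a round in which the sampling mechanism returned an adversarial worker. First I would formalize the frontier: let $\Phi_r\in\{0,1,\dots,n\}$ be the number of the tasks $v_1,\dots,v_n$ that have reported DONE and been accepted by the end of round $r$, so $\Phi_0=0$ and the computation finishes once $\Phi_r=n$ (the final delivery to the target behaves like one more such step, which we fold into $n$). Each round changes $\Phi$ by exactly $+1$ (an \emph{advance}: an accepted DONE), $-1$ (a \emph{rollback}: a REJECT from some $p_i$ with $i>1$), or $0$ (a \emph{stay}: the selected worker sends no reply, or we are at $v_1$ and get a REJECT). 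If the computation terminates at round $T$ and $A,R,S$ count the advances, rollbacks and stays among the first $T$ rounds, conservation of $\Phi$ gives $A-R=n$ and $A+R+S=T$, hence
\[
  T \;=\; n + 2R + S ,
\]
so it suffices to bound $2R+S$.

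The core step is a purely combinatorial claim: over any prefix of rounds, $2R+S\le 2Z$, where $Z$ is the number of those rounds in which the sampled worker is adversarial. An honest worker always answers within a round, so every stay round has an adversarial sample; charge it weight $1$. A rollback at $v_i$ is caused by a REJECT from $p_i$: if $p_i$ is adversarial, charge that (adversarial-sample) round weight $2$; if $p_i$ is honest, then by the assumed local verification $p_{i-1}$ must have delivered wrong data or nothing and hence is adversarial, and since $p_{i-1}$ was picked (for $v_{i-1}$) in an earlier round $r'$ of the same prefix, charge $r'$ (again an adversarial-sample round) weight $2$. No round ever accumulates total charge more than $2$: a stay round or an adversarial-REJECT round is charged only by itself and is afterwards discarded by the supervisor, and a round in which an adversarial worker answered DONE is charged by at most one later rollback, because the first time that worker's (possibly corrupted) output is rejected by an honest successor the supervisor discards it, so it can never trigger a second REJECT. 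Summing all charges gives $S+2R\le 2Z$.

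The remainder is routine. Suppose the computation has not terminated by the end of round $n+t$. Then, counting over rounds $1,\dots,n+t$, we have $A+R+S=n+t$ and $\Phi_{n+t}=A-R\le n-1$, so $2R+S=(n+t)-(A-R)>t$, and together with the charging claim this forces $Z_{n+t}>t/2$, where $Z_{n+t}=\sum_{r=1}^{n+t}X_r$ and $X_r$ indicates that the sample of round $r$ is adversarial. Hence the probability of not terminating within $n+t$ rounds is at most $\Pr[Z_{n+t}>t/2]$. Since the adversary cannot retroactively turn a sampled worker adversarial, $\Pr[X_r=1\mid\mathcal F_{r-1}]\le\beta$ for the natural filtration, so $\E[e^{\lambda Z_{n+t}}]\le e^{\beta(n+t)(e^{\lambda}-1)}$ for every $\lambda>0$. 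The hypothesis $t\ge\nicefrac{4\beta}{1-4\beta}\,n$ (meaningful precisely because $\beta<\nicefrac14$, which holds since $\beta\le\nicefrac15$) rearranges to $\beta(n+t)\le t/4$, so Markov's inequality with $\lambda=\ln 2$ gives
\[
  \Pr[Z_{n+t}>t/2] \;\le\; e^{-(\ln 2)t/2}\,e^{\beta(n+t)} \;\le\; e^{\,t/4-(\ln 2)t/2} \;=\; e^{-ct}
\]
with $c=\tfrac{\ln 2}{2}-\tfrac14>0$, which is $\nicefrac{1}{\poly(t)}$ (indeed far smaller), completing the proof.

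I expect the main obstacle to be the charging claim, and in particular the fact that an adversarial worker need not misbehave in the round it is picked: it may answer DONE, let the frontier advance, and only then deliver corrupted output, so that the induced REJECT (and rollback) surfaces later at an honest successor (possibly several rounds on, after intervening rollbacks and re-deliveries). One must therefore blame the correct, earlier round and argue near-injectivity of the blame map using that a worker whose output is rejected is permanently replaced. In the probabilistic step the only subtlety is to use the exponential-moment bound rather than a crude $\binom{N}{k}\beta^{k}$ estimate, since $\beta(n+t)$ can be as large as $t/4$.
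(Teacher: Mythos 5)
Your proof is correct and follows essentially the same route as the paper's: both reduce non-termination within $n+t$ rounds to the event that more than $t/2$ of the $n+t$ per-round worker samples are adversarial, and then apply a Chernoff-type tail bound using $\beta(n+t)\le t/4$. Your charging argument merely makes explicit (and carefully verifies) the claim the paper asserts in one sentence --- that each adversarial sample delays the computation by at most one stay or one single-step rollback --- and your moment-generating-function/Markov computation is an equivalent substitute for the paper's multiplicative Chernoff bound.
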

\begin{proof}
Surely, whenever some worker $p_i$ does not reply to $s$, $p_i$ must be adversarial, and whenever some $p_i$ sends a REJECT to the supervisor, either $p_i$ or $p_{i-1}$ must be adversarial. In either case, a new random worker will be chosen for the involved adversarial worker. Thus, the selection of an adversarial worker will only prevent the execution from proceeding at most once, and whenever that happens, the execution will roll back by at most one step.

According to the procedure of the supervisor, the supervisor picks a new worker uniformly and independently at random in each round. Let the binary random variable $X_i$ be 1 if and only if the worker picked in round $i$ is adversarial. Let $X=\sum_{i=1}^T X_i$, where $T=n+t$. If $X < t/2$ then the execution can roll back for a total of less than $t/2$ steps, which means that the execution will terminate within $T$ steps. Thus, it remains to bound the probability that $X \ge t/2$.

Suppose that $t/2 \ge 2\beta T$, which is true if $t \ge (\nicefrac{4\beta}{1-4\beta}) n$ and $\beta \le \nicefrac{1}{5}$. Since $\E[X_t] = \Pr[X_t=1] = \beta$,
\[
  \E[X] = \sum_{i=1}^T \E[X_i] = \beta T
\]
and since the $X_i$'s are independent, we can use Chernoff bounds with $\delta = (\nicefrac{t/2}{\beta T})-1$ to get
\[
  \Pr[X \ge t/2] = \Pr[X \ge (1+\delta) \E[X]] \le e^{-\nicefrac{\delta \E[X]}{3}}
  \le e^{-(\nicefrac{t}{6}-\nicefrac{t}{12})} = e^{-\nicefrac{t}{12}}.
\]
\end{proof}

A consequence of this result is the following lemma, which will be useful later in the paper.

\begin{lemma} \label{lem:line-majority}
Consider a path of length $\ell = c \ln n$ for a sufficiently large constant $c >0$. If $\beta < \nicefrac{1}{8}$ then for any adversarial strategy, the number of adversarial workers chosen during the computation is at most $(\nicefrac{2\beta}{1-4\beta}) \ell$, w.h.p. in $n$. In particular, since $\nicefrac{2\beta}{1-4\beta}<1/2$ for $\beta < \nicefrac{1}{8}$, this means that when the computation terminates, the majority of the workers assigned to the nodes on the path is honest.
\end{lemma}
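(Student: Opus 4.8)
The plan is to recognise this statement as \ref{th:line-time} instantiated on a path of \emph{logarithmic} length, plus one extra observation: the random variable whose tail controls the runtime bound there is also an upper bound for the number of adversarial workers that are ever selected. Set $t := \nicefrac{4\beta\ell}{1-4\beta}$ and $T := \ell + t = \nicefrac{\ell}{1-4\beta}$. For each round $k$ let $Y_k$ be the indicator that the worker the supervisor samples in round $k$ is adversarial. As observed in the proof of \ref{th:line-time}, the supervisor makes one fresh, independent sampling call per round, so the $Y_k$ are independent with $\E[Y_k] \le \beta$, and hence $X_K := \sum_{k=1}^{K} Y_k$ is stochastically dominated by a $\mathrm{Bin}(K,\beta)$ variable.

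Next I would work on the event $E := \{X_T < t/2\}$. On $E$, the dynamics argument from the proof of \ref{th:line-time} — each adversarial selection blocks progress in at most one round and triggers a rollback of at most one step, so on a prefix of $T$ rounds fewer than $t/2$ adversarial picks leave at least $\ell$ net forward progress — shows that the computation has already terminated by round $T$. Consequently, on $E$ the total number of adversarial workers chosen over the whole execution is $X_\tau \le X_T < t/2 = \nicefrac{2\beta\ell}{1-4\beta}$, which is exactly the bound claimed by the lemma. The key numerical coincidence making this work is that the threshold ``$X < t/2$'' appearing in \ref{th:line-time}, with $t = \nicefrac{4\beta\ell}{1-4\beta}$, is precisely the bound $\nicefrac{2\beta\ell}{1-4\beta}$ we want on the number of adversarial workers, so a single event does both jobs. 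Everything therefore reduces to bounding $\Pr[\bar E]$.

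For that, note $\beta < \nicefrac{1}{8} < \nicefrac{1}{5}$, and that by the identity above $t/2 = 2\beta T = 2\,\E[\mathrm{Bin}(T,\beta)]$; a Chernoff bound with $\delta = 1$ then gives $\Pr[\bar E] \le \Pr[\mathrm{Bin}(T,\beta) \ge 2\beta T] \le e^{-\beta T/3} = e^{-t/12}$. Substituting $\ell = c\ln n$ yields $e^{-t/12} = n^{-c\beta/(3(1-4\beta))}$, which is at most $\nicefrac{1}{n^{c'}}$ for any prescribed constant $c'$ provided the constant $c$ is taken large enough relative to $1/\beta$ — this is the only place ``$c$ sufficiently large'' is used. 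The ``in particular'' part is then immediate: $\nicefrac{2\beta}{1-4\beta} < \nicefrac{1}{2}$ holds exactly when $\beta < \nicefrac{1}{8}$, so with high probability strictly fewer than $\ell/2$ adversarial workers are selected in total, hence strictly fewer than $\ell/2$ of the $\ell$ workers assigned to the path nodes at termination are adversarial, i.e.\ a majority of them is honest.

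The step I expect to require the most care is the bookkeeping around the tail bound rather than the tail bound itself. \ref{th:line-time} is \emph{stated} with failure probability only $\nicefrac{1}{\poly(t)}$, which for $t = \Theta(\ln n)$ is not high probability in $n$; one has to reopen its proof to recover the sharper estimate $e^{-t/12}$ and then exploit that here $t$ is a large constant times $\ln n$, so that $e^{-t/12}$ becomes $n^{-\Omega(1)}$ with a hidden constant that can be pushed above any desired $c'$ by enlarging $c$. One also has to make sure the adversarial-worker count is pinned to the very same event $E$ that certifies termination by round $T$, so that no separate tail bound — and in particular no union bound over the random stopping time — is needed.
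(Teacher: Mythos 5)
Your proposal is correct and follows essentially the same route as the paper: both reopen the proof of Lemma~\ref{th:line-time} with $t=(\nicefrac{4\beta}{1-4\beta})\ell$, observe that the very event certifying termination within $\ell+t$ rounds is ``fewer than $t/2=(\nicefrac{2\beta}{1-4\beta})\ell$ adversarial selections,'' and convert the explicit tail $e^{-\nicefrac{t}{12}}=n^{-c\beta/(3(1-4\beta))}$ into a w.h.p.\ statement by taking $c$ large. Your remark that the stated $\nicefrac{1}{\poly(t)}$ bound of Lemma~\ref{th:line-time} is insufficient as a black box and that the sharper estimate from its proof must be used is exactly the point the paper's own proof relies on.
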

\begin{proof}
If $\beta < \nicefrac{1}{8}$ then $\nicefrac{4\beta}{1-4\beta}<1$. If $\ell = c \ln n$ then it follows from the proof of the previous lemma that, for $t=(\nicefrac{4\beta}{1-4\beta}) \ell$, the probability that the computation does not terminate within $\ell+t$ rounds is at most
\[
  e^{-\nicefrac{t}{12}} = e^{-(\nicefrac{\beta}{3(1-4\beta)}) \ell}
  = n^{-c \cdot \nicefrac{\beta}{3(1-4\beta)}}
\]
This is polynomially small in $n$ if the constant $c$ is large enough. Since the time bound is based on the fact that the probability that at least $t/2$ adversarial workers are selected within $\ell+t$ rounds is polynomially small for the chosen $\ell$ and $t$, the majority statement holds.
\end{proof}

Next, we bound the expected I/O work of the source and the target.

\begin{lemma} \label{th:line-input}
If $\beta \le \nicefrac{1}{12}$ then under any adversarial strategy, the source just needs to send the input $1+O(\beta)$ times on expectation.
\end{lemma}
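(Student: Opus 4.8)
The key observation is that the source sends the input exactly once per worker it is introduced to for task $v_1$, i.e., once for the initial selection plus once for every rollback that reaches $v_1$. So I need to bound the expected number of times a (fresh) worker is assigned to $v_1$. Let $R_1$ denote this count. Clearly $R_1 \ge 1$ always, and $R_1$ increases by $1$ each time the execution performs a rollback that arrives at $v_1$, plus each time the worker currently at $v_1$ fails to reply (forcing re-selection at $v_1$). I will charge each such event to the selection of an adversarial worker somewhere in the computation, and then use the fact that in expectation only an $O(\beta)$ fraction of the $(1+O(\beta))n$ selections are adversarial.

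\medskip

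\textbf{Step 1: relate re-selections at $v_1$ to adversarial selections.}
As in the proof of Lemma~\ref{th:line-time}, every failure to reply and every REJECT is caused by the selection of an adversarial worker, each such event shifts the ``frontier'' of the computation back by at most one node, and each adversarial selection can be blamed for at most one such backward step. A new worker is assigned to $v_1$ only when the frontier is at $v_1$ and a backward step is triggered there (a REJECT from the worker at $v_1$, or that worker falling silent). Hence the number of times the source resends the input beyond the first is at most the number of adversarial workers selected while the frontier sits at $v_1$; in particular it is at most the total number $X$ of adversarial selections over the whole run. So $R_1 \le 1 + X$.

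\medskip

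\textbf{Step 2: bound $\E[X]$.}
Let $T$ be the (random) total number of rounds. Each round the supervisor makes one fresh uniform selection, which is adversarial independently with probability at most $\beta$; thus $X$ is stochastically dominated by $\mathrm{Bin}(T,\beta)$ conditioned on $T$, giving $\E[X \mid T] \le \beta T$. From Lemma~\ref{th:line-time} (with $\beta \le 1/12 \le 1/5$), for $t \ge (\nicefrac{4\beta}{1-4\beta})n$ the run exceeds $n+t$ rounds with probability at most $e^{-t/12}$; integrating this tail bound over $t$ yields $\E[T] = n + O(\beta n) = (1+O(\beta))n$ (the geometric-like tail contributes only an $O(\beta n)$ additive term). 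Therefore $\E[X] = \E[\E[X\mid T]] \le \beta\,\E[T] = O(\beta n)$.

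\medskip

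\textbf{Step 3: the main obstacle — converting $O(\beta n)$ into $1 + O(\beta)$ \emph{per transmission}.}
Step 2 gives $\E[R_1] \le 1 + O(\beta n)$, which is far too weak: we want $\E[R_1] = 1 + O(\beta)$. The crude bound ``$\le X$'' is wasteful because it blames \emph{every} adversarial selection on $v_1$, whereas the frontier spends most of its time far from $v_1$. The fix is a more careful, localized argument. Condition on the event (which holds w.h.p. by Lemma~\ref{th:line-time}) that the frontier reaches $v_2$ by some round $\tau$ with $\tau = O(\beta n)$ small in expectation; once the frontier is at position $\ge 2$, a return to $v_1$ requires a consecutive run of backward steps, and the number of adversarial selections needed to push the frontier from position $j$ back down to $1$ grows with $j$. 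More precisely, I will track the frontier as a random walk that moves $+1$ on an honest selection and $-1$ on an adversarial one; the number of visits to state $1$ after the first is a geometric-type quantity controlled by the bias, and since the per-step adversarial probability is $\beta \le 1/12$ the walk has strong positive drift, so the expected number of returns to $1$ is $O(\beta)$. Combining, $\E[R_1] = 1 + O(\beta)$, which is exactly the claimed bound on the expected number of times the source sends the input. The symmetric argument applied to the frontier near $v_n$ — noting that the target receives the solution once per worker assigned to $v_n$ that issues a DONE and is asked to forward, i.e., once per successful ``completion attempt'' at $v_n$ — gives the corresponding $1+O(\beta)$ bound for the target, and is what the next lemma will address.
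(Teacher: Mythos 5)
Your Step 3 contains the right intuition for the second half of the argument --- the paper also bounds the return probability to $v_1$ by a geometric-type quantity (it shows each return happens with probability at most $7\beta$, giving an expectation of $\sum_{i\ge 0}(7\beta)^i = 1+O(\beta)$) --- but there is a genuine gap before you are allowed to model the frontier as a walk that moves $-1$ exactly when the current selection is adversarial. That model implicitly assumes the adversary always rolls back immediately upon selection. In reality an adversarial worker may reply DONE, remain assigned to its task, and only later (when asked to re-deliver its output after some other rollback) send a wrong output and trigger a further backward step. A rollback from $v_j$ to $v_{j-1}$ can thus be caused either by the freshly selected worker at $v_j$ or by a \emph{previously} selected adversarial worker still sitting at $v_{j-1}$; in particular the adversary can ``save up'' adversarial workers at positions $1,\dots,k$ that all play DONE and then trigger a cascade of $k$ consecutive rollbacks reaching $v_1$. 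Your walk model simply does not describe this adaptive process, and nothing in your proposal rules it out as a better strategy for forcing input retransmissions.

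The bulk of the paper's proof is devoted precisely to closing this gap: it formalizes computations as sequences $(i_t,s_t)$ and proves, by an exchange/domination argument with several cases, that for any adversarial strategy there is a strategy in which every selected adversarial worker immediately sends REJECT and which causes at least as many input transmissions by the source. Only after that reduction does the simple biased-walk calculation apply. Without this step (or some substitute for it), your Step 3 asserts the conclusion rather than proving it; and even granting the model, ``the walk has strong positive drift, so the expected number of returns to $1$ is $O(\beta)$'' should be backed by an explicit computation such as bounding the return probability by $\beta + \sum_{\mathrm{even}\ T>1} \binom{T}{T/2}\beta^{T/2} \le 7\beta$ for $\beta \le \nicefrac{1}{12}$.
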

\begin{proof}
First, we need to determine the worst possible adversarial strategy for the number of recomputations at $v_1$ (and therefore the number of times the source needs to send its input). Let us model a computation as a sequence $(i_1,s_1), (i_2,s_2),\ldots \in \{1,\ldots,\ell\} \times \{A,H\}$ where $i_t$ represents the task $v_{i_t}$ that needs to be executed at round $t$, $s_t=A$ means that an adversarial worker was chosen at round $t$, and $s_t=H$ means that an honest worker was chosen at round $t$. A computation is \emph{valid} if $i_1=1$ and for every round $t\ge 1$ before the computation terminates (if it does so), either (1) $i_{t+1}=i_t+1$, or (2) $i_{t+1}=i_t$ and $s_t=A$ (i.e., nothing was sent to the supervisor at round $t$), or (3) $i_{t+1}=i_t-1$, and $s_t=A$ or the worker last chosen for $v_{i_t-1}$ in $(i_1,s_1), (i_2,s_2),\ldots, (i_{t-1},s_{t-1})$ is adversarial (i.e., a rollback occurred at round $t$). Given a valid computation $C$, let $A(C)$ be any adversarial strategy that conforms with $C$, i.e., in case (1) it causes the computation to proceed (if any adversary is involved), in case (2) it sends nothing back to the supervisor, and in case (3) it causes a rollback. Our goal will be to show that a worst-case adversarial strategy concerning the number of times the source has to send the input is that an adversarial worker always sends REJECT to the supervisor.

Consider any valid computation $C=(i_1,s_1),(i_2,s_2),\ldots$ with at least one round $t$ where $s_t=A$ and the adversarial worker does \emph{not} send REJECT to the supervisor. Let $r$ be the first such round. Since before round $r$ the adversarial workers always caused rollbacks, all workers currently assigned to positions less then $i_r$ must be honest, i.e., the path configuration at round $r$ is given as $P_r=(p_1,\ldots,p_{i_r-1},s_{i_r})$ where $p_k = H$ is the state of the worker currently assigned to node $v_k$, for all $k \in \{1,\ldots,i_r-1\}$, and $s_{i_r}=A$ is the state of the worker assigned to $v_{i_r}$ in round $r$. Let $A(C)$ be any adversarial strategy that conforms with $C$. Our goal will be to show that there is an adversarial strategy $A'$ based on the \emph{same} state sequence $s_1,s_2,\ldots$ as $C$ with the following properties:
\begin{itemize}
\item It is identical to $A(C)$ for the first $r-1$ rounds but chooses a rollback instead at round $r$, and
\item the computation $C(A')$ resulting from $A'$ \emph{dominates} $C$, in a sense that the number of times the source needs to send the input is at least as large as in $C$.
\end{itemize}
Using that argument inductively (i.e., we continue with a transformation of $A'$ like for $A$, and so on) will then prove that the worst-case adversarial strategy is indeed that a selected adversarial worker always sends REJECT to the supervisor.

If we deviate from $C$ by performing a rollback at round $r$, then we reach the configuration $P'_{r+1}=(p_1,\ldots,p_{i_r-2},s_{i_{r+1}})$ if $i_r>1$, and otherwise we force the source to send the input again and reach the configuration $P'_{r+1}=(s_{i_{r+1}})$. For the original computation $C$ there are two possibilities for the follow-up configuration $P_{r+1}$:

Case 1: $P_{r+1}=(p_1,\ldots,p_{i_r-1},s_{i_{r+1}})$, which is the case if the adversary does not send a message to the supervisor. If $i_r=1$ then $P_{r+1}=P'_{r+1}=(s_{i_{r+1}})$ while $A'$ was able to force the source to send the input again. If $A'$ behaves exactly like $A$ from that point on, the remaining computations of $C$ and $C'$ will be identical, which implies that $C'$ dominates $C$. If $i_r>1$ then the currently executed position in $P'_{r+1}$ is one step closer to $v_1$ than in $P_{r+1}$. Let $A'$ behave in exactly the way as $A$ from that point on. Then $C(A')$ remains one step closer to $v_1$ compared to $C$ till a round $r'$ is reached where $A'$ forces the source to send the input again, or none of them cause the source to send the input again. In the latter case, $C(A')$ clearly dominates $C$, and in the former case, $P_{r'+1}=P'_{r'+1}=(s_{i_{r'+1}})$, i.e., we are back to the same situation as for $i_r=1$, proving dominance also in this case.

Case 2: $P_{r+1}=(p_1,\ldots,p_{i_r},s_{i_{r+1}})$, which is the case if the adversary sent a DONE to the supervisor. Note that in this case $p_1,\ldots,p_{i_r-1}=H$ while $p_{i_r}=A$. We distinguish between two further cases:

Case 2a: If $i_r=1$ then $P'_{r+1}=(s_{i_{r+1}})$ while $A'$ was able to force the source to send the input again. If, in $C$, the adversarial worker at position $i_r$ behaves like an honest worker from that point on, we are back to case 1. Otherwise, the only critical case is where $P'_{r'}=(H)$ and $P_{r'}=(A,H)$ for some round $r'$ and the adversarial worker sends a wrong output to the honest worker in $C$, causing a rollback. Then we end up in a situation where $P'_{r'+1}=(H,s_{i_{r'+1}})$ and $P_{r+1}=(s_{i_{r'+1}})$, i.e., the roles of $C(A')$ and $C$ are exchanged compared to case 1. However, $A'$ simply continues to behave like $A$ from that point on until $A$ causes the source to send the input again. If this never happens, $C(A')$ clearly dominates $C$. Otherwise, we are back in a setting where $P_{r''}=P'_{r''}=(s_{i_{r''}})$ while $C(A')$ and $C$ caused the same number of input transmissions of the source. Hence, if $A'$ continues to be behave like $A$, $C(A')$ dominates $C$.

Case 2b: It remains to consider the case that $i_r>1$. Suppose that in $C$ the adversarial worker at position $i_r$ behaves like an honest worker from that point on. Then $A'$ simply behaves like $A$ from that point on, and in this way $C(A')$ always stays two positions ahead of $C$ till a round $r'$ is reached where $A'$ forces the source to resend the input. In that case, $P'_{r'+1}=(s_{r'+1})$ while $P_{r+1}=(H,s_{r'+1})$, i.e., we are back to case 1. If never a round $r'$ is reached where $A'$ forces the source to resend the input, $C(A')$ dominates $C$ as well. Similar to case 2a, the only critical situation that is left is that $P'_{r'}=(p_1,\ldots,p_{i_r-2},H)$ and $P_{r'}=(p_1,\ldots,p_{i_r-1},A,H)$ and the adversarial worker at position $i_r-1$ sends a wrong output to the honest worker in $C$, causing a rollback. Then we end up in a situation where $P'_{r'+1}=P_{r'+1}$, i.e., if $A'$ behaves from that point on like $A$, we again have the situation that $C(A')$ dominates $C$. 

Thus, the adversarial strategy of always using a rollback is dominating all other strategies w.r.t. the number of input transmissions of the source. Suppose that it takes $T\ge 1$ rounds from the time that $v_1$ was executed until the next time that $v_1$ is executed. Let the binary random variable $X_t$ be 1 if and only if an adversarial worker was selected at round $t$, and let $Y_T=\sum_{t=1}^T X_t$. For $T=1$, $Y_T=X_1=1$ and for every even $T>1$, $Y_T=\nicefrac{T}{2}$ so that indeed $v_1$ is executed again after $T$ rounds. The probability for $T=1$ is $\beta$. For any even $T>1$, there are at most ${T \choose \nicefrac{T}{2}} \le 2^T$ ways of choosing a distribution of $\nicefrac{T}{2}$ adversarial workers over $T$ steps. Thus, $\Pr[Y_T=\nicefrac{T}{2}] \le 2^T \beta^{\nicefrac{T}{2}}$, which implies that for $\beta \le \nicefrac{1}{12}$ the probability to re-execute $v_1$ is at most
\[
  \beta + \sum_{{\rm even} \; T>1} (4 \beta)^{\nicefrac{T}{2}} \le \beta + \sum_{T \ge 1} (4\beta)^T = \beta (1 + \nicefrac{4}{1-4\beta}) \le 7\beta.
\]
Therefore, the probability that $v_1$ needs to be executed at least $i+1$ times is at most $(7\beta)^i$, which implies that the expected number of times the source needs to send the input is at most
\[
  \sum_{i \ge 0} (7 \beta)^i = \frac{1}{1-7\beta} = 1+O(\beta).  
\]
\end{proof}

We can significantly strengthen this result, which will be useful when using paths as a substructure of some task graph:

\begin{lemma} \label{lem:advsupervisor}
Consider a path of length $\ell = c \log n$ for a sufficiently large constant $c \ge 1$ in which the target always rejects the output sent by $v_{\ell}$. If $\beta \le 1/12$, then under any adversarial strategy, the expected number of times the source sends the input within $n^{c-1}$ many rounds is $1+O(\beta)$.
\end{lemma}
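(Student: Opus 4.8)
The plan is to reduce to the worst-case adversary, recast the execution as a biased random walk on the path positions, and split the source re-transmissions into a geometric chain of ``local'' ones --- handled exactly as in the proof of Lemma~\ref{th:line-input} --- plus the rare event that the walk, after drifting up to $v_\ell$, ever crawls all the way back to $v_1$ within the polynomial horizon, which is handled by a delay-sequence union bound.

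First I would re-run the domination argument from the proof of Lemma~\ref{th:line-input}: the strategy maximizing the number of source transmissions is again the one in which every sampled adversarial worker answers REJECT immediately, the only change being that the target's stipulated rejection at $v_\ell$ is carried through the case analysis as an extra, strategy-independent rollback (it involves no adversarial choice, so the transformation of an arbitrary strategy into the all-REJECT strategy is unaffected). Under the all-REJECT adversary an honest worker always answers DONE --- when it is activated, all currently assigned predecessors answered DONE and are hence honest, so it receives and accepts a correct input --- and an adversarial worker always answers REJECT. Hence the index of the executed task performs a walk on $\{1,\dots,\ell\}$ stepping right with probability $\ge 1-\beta$ and left with probability $\le\beta$ from an interior position, always left from $v_\ell$, and from $v_1$ right with probability $\ge 1-\beta$ and otherwise triggering a source resend and restarting at $v_1$. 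The number of source transmissions equals $1$ plus the number $N$ of restarts at $v_1$ during the first $n^{c-1}$ rounds, so we must show $\E[N]=O(\beta)$.

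Call an excursion out of $v_1$ \emph{short} if it returns to $v_1$ without hitting $v_\ell$ and \emph{long} otherwise, and let $R_{\text{short}},R_{\text{long}}$ count those completed within the horizon, so $N=R_{\text{short}}+R_{\text{long}}$. By the Markov property, after each restart the next excursion is short with the same probability as the line computation of Lemma~\ref{th:line-input} re-executes $v_1$ (both processes agree until $v_\ell$ is hit), which is at most $7\beta$ for $\beta\le1/12$, irrespective of $\ell$; the short excursions between consecutive long ones hence form a geometric chain of ratio $\le 7\beta$, and conditioning on the history preceding each such block gives $\E[R_{\text{short}}]\le\frac{7\beta}{1-7\beta}(1+\E[R_{\text{long}}])$. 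To bound $R_{\text{long}}$, note that if an excursion is long and $t^\ast$ is the last round it sits at $v_\ell$ before returning to $v_1$ at round $t'$, then on $(t^\ast,t']$ the walk stays in $\{1,\dots,\ell-1\}$, so every left step is forced by an adversarial worker and the net displacement is $-(\ell-1)$; thus this window of length $T=t'-t^\ast$ contains at least $\max(T/2,\ell-2)$ adversarial workers. For a fixed endpoint $t'\le n^{c-1}$, summing over $T$ the probability of such a window is $O(\ell\,(4\beta)^{\ell-2})$ (the $O(\ell)$ windows with $T<2(\ell-2)$ each contribute $\le(4\beta)^{\ell-2}$, and the remaining tail is geometric), and since distinct long excursions have distinct right endpoints,
\[
  \E[R_{\text{long}}]\;\le\;n^{c-1}\cdot O\!\bigl(\ell\,(4\beta)^{\ell-2}\bigr)\;=\;(4\beta)^{-2}\cdot O\!\bigl(c\,(\log_2 n)\,n^{\,c(1+\log_2(4\beta))-1}\bigr).
\]
As $\beta\le1/12$ forces $1+\log_2(4\beta)\le 1-\log_2 3<0$, the exponent is negative, so $\E[R_{\text{long}}]=o(1)$, hence $O(\beta)$ for the fixed constant $\beta$ and $n$ large. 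Combining, $\E[N]=O(\beta)$, i.e., the source sends its input $1+O(\beta)$ times in expectation.

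The routine-but-fiddly part is the domination step, which re-checks the already intricate case analysis of Lemma~\ref{th:line-input} with the forced $v_\ell$-rejection added; the genuinely new and hardest part is the delay-sequence estimate for $R_{\text{long}}$ --- one has to argue that once the walk has been pushed up to $v_\ell$ it essentially never returns to $v_1$ within $n^{c-1}$ rounds. This is precisely where the length $c\log n$ enters (the required backward delay sequence has length $\Theta(\log n)$ and probability $n^{-\Omega(c)}$) and where $\beta\le1/12$ is used (so $4\beta<1/3$, the sum over window lengths converges geometrically, and the resulting $n^{-\Omega(c)}$ still beats the $n^{c-1}$ union bound over endpoints).
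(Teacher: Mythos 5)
Your proof is correct and follows essentially the same route as the paper's: domination by the all-REJECT adversary, reuse of the geometric $7\beta$ bound from Lemma~\ref{th:line-input} for returns to $v_1$ that avoid $v_\ell$, and a binomial estimate showing that a descent from $v_\ell$ back to $v_1$ in $T$ rounds forces $T/2+\Omega(\ell)$ adversarial picks and hence has probability $n^{-\Omega(c)}$, which survives the union bound over the $n^{c-1}$-round horizon. Your explicit short/long excursion decomposition with the recursive expectation is just a more formal packaging of the paper's two-phase argument (first reach $v_\ell$ with $1+O(\beta)$ resends, then show a full rollback is negligibly likely).
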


\begin{proof}
Like in the proof of Lemma~\ref{th:line-input}, let us model a valid computation $C$ as a sequence $(i_1,s_1), (i_2,s_2),\ldots \in \{1,\ldots,\ell\} \times \{A,H\}$ where $i_t$ represents the task $v_{i_t}$ that needs to be executed at round $t$, $s_t=A$ means that an adversarial worker was chosen for round $t$, and $s_t=H$ means that an honest worker was chosen at round $t$. Since the target always rejects the output, that computation will have an infinite length. In the same way as in the proof of Lemma~\ref{th:line-input}, we can transform $C$ into a valid computation $C'$ that dominates $C$ w.r.t. the number of times the source has to send the input and where an adversarial worker always sends REJECT to the supervisor. We now want to prove an upper bound for the expected number of times the source sends the input if we stop $C'$ after $n^c$ many rounds.

First, of all, from the proof of Lemma~\ref{th:line-input} we know that the source just needs to send the input $1+O(\beta)$ times until we reach a point where an honest worker assigned to the last task of the path sends its output to the target for the first time. At that point, all workers associated with the tasks are honest in computation $C'$. In that case, it is very unlikely to roll back the computation to $v_1$, as we will see.

Suppose that it takes $T \ge \ell$ rounds to roll back the computation from $v_{\ell}$ to $v_1$ without again reaching the point that an honest worker assigned to the last task sends its output to the target. Then for at least $T/2+\ell/2$ of the $T$ rounds an adversarial worker must be chosen. For $\beta \le 1/12$, the probability for that is at most
\begin{align*}
    { T \choose T/2+\ell/2} \beta^{T/2+\ell/2} &\le \left( \frac{e T}{T/2+\ell/2} \right)^{T/2+\ell/2} \beta^{T/2+\ell/2}\\
    &\le (2e\beta)^{T/2+\ell/2}\\
    &\le (1/2)^{T/2+\ell/2}
\end{align*}
Thus, the probability that there exists such a $T$ is at most
\[ 
  \sum_{T \ge \ell} (1/2)^{T/2+\ell/2} \le 2 \cdot (1/2)^{\ell} = 2/n^c
\]

Therefore, within $n^{c-1}$ many rounds, the expected number of attempts to get from $v_\ell$ back to $v_1$ without sending an output to the target is bounded by $O(1/n)$, which implies that the expected number of times the source sends the input is still bounded by $1+O(\beta)$.
\end{proof}

Another consequence of Lemma~\ref{th:line-input} is the following result:

\begin{lemma} \label{th:line-middle}
If $\beta \le \nicefrac{1}{12}$ then it holds for any task $v_i$ and any adversarial strategy that the expected number of times an honest worker has to execute $v_i$ is $1+O(\beta)$.
\end{lemma}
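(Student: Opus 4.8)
The idea is to mimic the argument used for Lemma~\ref{th:line-input} but now centered on a fixed task $v_i$ rather than on $v_1$. What we want to count is the number of times an honest worker runs $v_i$. An honest worker runs $v_i$ in a round only if at the start of that round the supervisor assigned an honest worker to $v_i$ \emph{and} that worker actually received a (valid) input from $v_{i-1}$ — otherwise the honest worker sends REJECT and does not execute the task. So the quantity of interest is essentially: the number of times the computation ``successfully lands" at $v_i$ with an honest worker and then moves on to $v_{i+1}$. Every such event corresponds to a DONE from an honest worker at $v_i$.

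First I would separate the contributions. If the computation never rolls back from $v_{i+1}$ (or beyond) to $v_i$, then $v_i$ is executed by an honest worker at most a $1+O(\beta)$ number of times, by exactly the geometric argument used in Lemma~\ref{th:line-time}: each time $v_i$ is reached, the supervisor keeps sampling until it gets an honest worker, which takes $\frac{1}{1-\beta} = 1+O(\beta)$ attempts in expectation, but only the last (honest) attempt counts as an execution. So the only way to force \emph{repeated} honest executions of $v_i$ is via rollbacks into $v_i$ from the suffix $v_{i+1},\dots,v_n$. The plan is therefore to bound the expected number of times the computation rolls back from $v_{i+1}$ to $v_i$.

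To do this, I would invoke the same domination/transformation argument as in Lemma~\ref{th:line-input}: among all adversarial strategies, the one maximizing the relevant count is the one where a selected adversarial worker always sends REJECT, triggering a rollback. (One has to be slightly careful: here ``dominance" should be with respect to the number of rollbacks crossing the edge $(v_i,v_{i+1})$ backward, which plays the same structural role that ``input transmissions of the source" played in Lemma~\ref{th:line-input}, with position $i$ here playing the role of position $1$ there and the suffix $v_{i+1},\dots,v_n$ playing the role of the whole path.) Under that worst-case strategy, between two consecutive honest executions of $v_i$ the computation must travel from $v_i$ forward and then roll all the way back to $v_i$; if this takes $T\ge 2$ rounds, then at least $T/2$ of those rounds picked an adversarial worker (each rollback step is ``paid for" by one adversarial selection, and the forward steps that were undone are at least as many). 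Counting the $\binom{T}{T/2}\le 2^T$ placements and using $\beta\le 1/12$ gives that the probability of re-reaching $v_i$ after a forward excursion is at most $\beta + \sum_{\text{even }T>1}(4\beta)^{T/2} < 2/3$, exactly as in Lemma~\ref{th:line-input}. Hence the number of honest executions of $v_i$ is stochastically dominated by a geometric-type random variable, and its expectation is $\sum_{j\ge 0}(7\beta)^j = \frac{1}{1-7\beta} = 1+O(\beta)$.

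The main obstacle is the domination step: one must check that the case analysis from Lemma~\ref{th:line-input} still goes through when the ``anchor" is an interior node $v_i$ rather than $v_1$. The difference is that rollbacks can now pass \emph{through} $v_i$ toward $v_1$, so configurations to the left of $v_i$ need not be all-honest; however, since we are only counting rollbacks across the edge $(v_i,v_{i+1})$ and honest executions of $v_i$, the relevant sub-process lives entirely on the suffix $\{v_i,\dots,v_n\}$ together with the ``reset" event of an honest worker (re)occupying $v_i$, which is precisely the structure handled by the earlier proof with $v_1$ replaced by $v_i$. Once that reduction is made explicit, the rest is the same calculation. I would also remark that, just as a sanity check, for the final task $v_n$ the claim follows more directly since rollbacks into $v_n$ come only from the target's rejections, and the target is honest, so $v_n$ is executed by an honest worker exactly $1+O(\beta)$ times (indeed essentially once plus the sampling overhead).
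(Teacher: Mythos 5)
Your proposal is correct and follows essentially the same route as the paper: the paper's proof observes that once an honest worker executes $v_i$ it holds the correct output and plays exactly the role of the source in Lemma~\ref{th:line-input} (with the suffix $v_i,\dots,v_n$ replacing the whole path), so the bound $1+O(\beta)$ carries over directly, whereas you re-run the domination and geometric-sum argument anchored at $v_i$ rather than invoking the earlier lemma as a black box. The extra care you take about rollbacks passing through $v_i$ toward $v_1$ is a valid (and slightly more explicit) treatment of the same reduction.
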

\begin{proof}
Consider any time where the computation reaches the point that an honest worker executes task $v_i$, which means that it accepted the input from the worker responsible for $v_{i-1}$ (resp. the source). Then the computation is guaranteed to continue with $v_{i+1}$, and the worker for $v_i$ will correctly deliver its output to $v_{i+1}$. That is, we are exactly in the situation of Lemma~\ref{th:line-input} where the source sends the given input to $v_1$. Thus, the expected number of times an honest worker has to execute $v_i$ is at most the expected number of times the source has to send the given input ("at most" since some cases cannot occur any more because the remaining sequence of nodes from $v_i$ is shorter).
\end{proof}

Suppose that the (communication and computational) work per task is 1 and the verification work for the input from some task is at most $\epsilon <1$. In that case, Lemmas~\ref{th:line-time} and \ref{th:line-middle} immediately imply the following result, which is asymptotically optimal.

\begin{corollary}
If $\beta \le 1/12$ then for any adversarial strategy, the expected work the honest workers have to invest to execute the tasks of $G$ is $(1+O(\beta))(1 + \epsilon)n$.
\end{corollary}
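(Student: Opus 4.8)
The plan is to split the total work done by honest workers into a \emph{productive} part and a \emph{wasted} part, to bound the productive part directly by Lemma~\ref{th:line-middle}, and to show that the wasted part is only a $O(\beta)$-fraction of the productive one by controlling the expected number of adversarial workers that are ever sampled. The productive part consists of the verifications and executions that an honest worker performs when it \emph{accepts} the data it receives and then computes the output of its task; the wasted part consists of everything else, namely verifications that are followed by a REJECT, re-deliveries of an already-computed output caused by a later rollback, and re-executions of a task after a timeout.

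For the productive part I would argue as follows: each time an honest worker executes a task $v_i$, it first verifies the incoming data at cost at most $\epsilon$ and then performs $v_i$ and forwards the result at cost $1$, for a total of at most $1+\epsilon$ per execution. By Lemma~\ref{th:line-middle} the expected number of such executions at $v_i$ is $1+O(\beta)$, so summing over the $n$ tasks of the path bounds the expected productive work by $(1+O(\beta))(1+\epsilon)n$.

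For the wasted part, the key observation is that every wasted event is chargeable to the selection of an adversarial worker: an honest worker at $v_i$ can only REJECT if the worker currently responsible for $v_{i-1}$ is adversarial, and a re-delivery or a re-execution can only occur in the round following a rollback or a timeout, each of which — as noted in the proof of Lemma~\ref{th:line-time} — is caused by an adversarial selection that delays the frontier by at most one step. Since each wasted event costs at most $1+\epsilon$, the expected wasted work is $O((1+\epsilon)\,\E[A])$, where $A$ is the total number of adversarial workers sampled during the computation. It then remains to bound $\E[A]$: a fresh worker is sampled in every round, independently adversarial with probability at most $\beta$, and whether round $t$ takes place depends only on the samples of rounds $1,\dots,t-1$; hence $\E[A]\le\beta\,\E[R]$, where $R$ is the total number of rounds. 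Using again that each adversarial selection delays the frontier by at most one step, $R\le n+O(A)$, so $\E[R]\le n+O(\beta\,\E[R])$, which for $\beta\le 1/12$ gives $\E[R]=(1+O(\beta))n$ and therefore $\E[A]=O(\beta n)$. Adding the two parts yields an expected honest work of $(1+O(\beta))(1+\epsilon)n+O(\beta n)=(1+O(\beta))(1+\epsilon)n$, as claimed.

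The productive bound is essentially a restatement of Lemma~\ref{th:line-middle}, so I expect the main obstacle to be making the wasted work genuinely lower order, i.e.\ $O(\beta n)$ rather than, say, $O(\beta n\log n)$. This is exactly the point where one must upgrade the w.h.p.\ running-time bound of Lemma~\ref{th:line-time} to an \emph{expectation} bound, controlling the geometric tail of $R$; the self-referential inequality $\E[R]\le n+O(\beta\,\E[R])$ — or, equivalently, a direct reuse of the tail estimates already established in the proofs of Lemmas~\ref{th:line-time} and~\ref{th:line-input} — is what closes this gap.
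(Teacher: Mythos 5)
Your proposal is correct and follows essentially the same route as the paper, which derives the corollary directly from Lemma~\ref{th:line-middle} (expected $1+O(\beta)$ honest executions per task, each costing at most $1+\epsilon$) together with the round/tail bound of Lemma~\ref{th:line-time} to handle the remaining lower-order work. Your productive/wasted decomposition and the Wald-type bound $\E[A]\le\beta\,\E[R]$, $\E[R]\le n+O(\beta\,\E[R])$ simply make explicit the expectation upgrade that the paper leaves implicit when it says the two lemmas ``immediately imply'' the result.
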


It remains to bound the number of times the target has to receive the output.

\begin{lemma} \label{th:line-output}
If $\beta \le \nicefrac{1}{3}$ then under any adversarial strategy, the target just needs to receive the output from $v_{\ell}$ $1+O(\beta)$ times on expectation.
\end{lemma}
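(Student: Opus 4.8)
The plan is to mirror the structure of the proof of Lemma~\ref{th:line-input}, but now track the number of times the target receives an output from $v_\ell$ rather than the number of times the source sends the input to $v_1$. Let us again model a computation as a sequence $(i_1,s_1),(i_2,s_2),\ldots$ over $\{1,\ldots,\ell\}\times\{A,H\}$, and recall from Lemma~\ref{th:line-input} that the adversarial strategy dominating all others with respect to I/O at an endpoint is the one where every selected adversarial worker sends REJECT. So first I would argue (by an adaptation of the case analysis in Lemma~\ref{th:line-input}, now reflected around the endpoint $v_\ell$) that the worst case for the number of times the target receives the output from $v_\ell$ is again the "always REJECT" strategy: an adversarial worker assigned to $v_\ell$ causes a rollback to $v_{\ell-1}$ rather than actually delivering a (possibly wrong) output to the target. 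Intuitively a wrong delivery is strictly worse for the \emph{adversary} here, since the target can verify the output of $v_n$ and would simply reject it and roll back anyway, so allowing the adversary only rollbacks cannot help it.

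Next I would count. A fresh delivery to the target happens exactly when an honest worker is assigned to $v_\ell$ and it has a valid input from (an honest worker at) $v_{\ell-1}$. So I want to bound the expected number of times the computation, having just placed an honest worker at $v_\ell$ that delivers to the target, later returns to this same situation. By the same combinatorial bookkeeping as in Lemma~\ref{th:line-input} — the configuration is a short stack whose only adversarial slot is the current one, and to force a new delivery to the target the adversary must roll the computation back some distance and then let it climb again, which over $T$ rounds requires at least $T/2$ adversarial picks spread among at most $\binom{T}{T/2}\le 2^T$ patterns — the probability that a second delivery to the target is ever forced is at most $\beta+\sum_{\mathrm{even}\;T>1}(4\beta)^{T/2}$, which for $\beta\le 1/3$ is a constant strictly less than $1$ (here the looser bound $\beta\le 1/3$ rather than $1/12$ suffices because the target endpoint, unlike the source endpoint, has no analogue of the extra rollback step through $v_1$; the geometric series closes with more room). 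More carefully, the probability that the target must receive the output at least $i+1$ times is at most $\gamma^i$ for a constant $\gamma<1$ depending linearly on $\beta$, so the expected number of receptions is $\sum_{i\ge 0}\gamma^i = 1/(1-\gamma) = 1+O(\beta)$.

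I expect the main obstacle to be the dominance argument, i.e.\ verifying that "always REJECT" is still worst-case when the endpoint in question is $v_\ell$ rather than $v_1$. The subtlety is that at $v_\ell$ the adversary has an extra option it does not have at $v_1$ as a receiver: it can actually \emph{send} a (wrong) output to the target and be rejected, versus simply not replying and being re-picked, versus sending REJECT about $v_{\ell-1}$. One must check that in each of these sub-cases the "exchange" argument goes through — that deviating to a rollback at the first round where the adversary does something other than REJECT yields a computation that dominates with respect to target receptions — exactly as Cases 1, 2a, 2b are handled in Lemma~\ref{th:line-input}, with the roles of "closer to $v_1$" replaced by "closer to $v_\ell$". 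The counting step is then essentially identical to the one already carried out, just with the slightly weaker constraint on $\beta$.
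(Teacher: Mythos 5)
Your proposal rests on the claim that the ``always REJECT'' strategy from Lemma~\ref{th:line-input} also dominates with respect to the number of times the target receives an output, and this claim is false --- in fact it is the exact opposite of the truth. Under the always-REJECT strategy the computation never delivers anything to the target until every $v_i$ is held by an honest worker, at which point the target receives the (correct) output exactly once; so this strategy \emph{minimizes} the quantity in question. The lemma counts \emph{all} receptions by the target, including wrong outputs that the target verifies and rejects (otherwise the bound $1+O(\beta)$ would trivially be $1$), and a rejected delivery is precisely what the adversary wants to force: each adversarial worker chosen for $v_\ell$ can report DONE and then ship a wrong output to the target, costing the target one reception before it is replaced. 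Your intuition that ``a wrong delivery is strictly worse for the adversary since the target would reject it anyway'' inverts the objective. Relatedly, your characterization that a delivery happens ``exactly when an honest worker is assigned to $v_\ell$ with a valid input'' is wrong for the same reason.

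Because the dominating strategy is different, the whole counting step collapses and cannot be patched by the $\binom{T}{T/2}\le 2^T$ bookkeeping. The paper instead explicitly sets aside the REJECT/no-response options (noting they yield zero receptions until the current adversary's term ends) and analyzes, level by level from $v_\ell$ downward, the adversary's remaining choice at each node between delivering a correct output (strategy H) and a wrong output (strategy A) to its successor. This yields the recursion $h_i = h_{i+1} + \frac{\beta}{1-\beta}a_{i+1}$ and $a_i = \frac{\beta}{1-\beta}a_{i+1}$ for the worst-case expected reception counts, with base case $h_{\ell-1}=\frac{1}{1-\beta}$, and then bounds $h_0 \le \prod_{i=2}^{\ell}\bigl(1+[\nicefrac{\beta}{1-\beta}]^{i}\bigr)\cdot\frac{1}{1-\beta} = 1+O(\beta)$ for $\beta\le\nicefrac{1}{3}$. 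You would need to adopt an argument of this kind (an optimization over the adversary's deliver-correct/deliver-wrong choices, not over its REJECT choices) for the proof to go through.
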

\begin{proof}
Certainly, if the adversary always either performs a rollback or does not respond whenever an adversarial worker is selected, the number of times the target receives the output is guaranteed to be at most 1, namely, at that point when all $v_i$'s are associated with honest workers.

Therefore, let us ignore the case for now that the adversary chooses to do a rollback or not to respond for now, which reduces its choices to either sending a correct or wrong output to the worker associated with the next task.

First, consider the simplest case that an adversarial worker was chosen for $v_\ell$. Then the optimal strategy is to send a wrong output to the target so that the supervisor chooses a new worker for $v_{\ell}$, which opens up the opportunity for another output transmission to the target. Next, consider the case that an adversarial worker $p_{\ell-1}$ was chosen for $v_{\ell-1}$. Since we do not consider the strategy for now that the adversary sends REJECT to the supervisor or does not respond, it would respond with DONE, which causes the supervisor to pick a worker $p_\ell$ for $v_\ell$. Certainly, whenever $p_\ell$ is adversarial (which $p_{\ell-1}$ can find out since $p_\ell$ is introduced to it), the optimal strategy for $p_{\ell-1}$ is to let $p_\ell$ send a wrong output to the target. If $p_\ell$ is honest, we distinguish between two strategies for $p_{\ell-1}$.

Strategy H: $p_{\ell-1}$ delivers the correct output to $p_\ell$. Then the execution will terminate after $p_\ell$ has sent its output to the target, which will end $p_{\ell-1}$'s term.

Strategy A: $p_{\ell-1}$ delivers the wrong output to $p_\ell$. Then $p_\ell$ will notice that and send a REJECT to the supervisor, causing the supervisor to replace $p_{\ell-1}$ by a new random worker, which will end $p_{\ell-1}$'s term as well.

First, let us compute the expected number of outputs to the target if $p_{\ell-1}$ goes for strategy H. Since the probability that $i$ outputs are sent to the target is $\beta^{i-1} (1-\beta)$, this is
\[
  h_{\ell-1} = \sum_{i \ge 0} (i+1) \beta^i (1-\beta) = \frac{1}{1-\beta}
\]
Note that strategy H is also what an honest worker $p_{\ell-1}$ would do. If the adversary instead chooses strategy A, the expected number of outputs to the target until $p_{\ell-1}$'s term ends (which happens if $p_{\ell}$ is honest) is
\[
  a_{\ell-1} = \sum_{i \ge 1} i \cdot \beta^i (1-\beta) = h_{\ell-1} -1 = \frac{\beta}{1-\beta}
\]
Clearly, the better strategy at that level is H, and a mix of A and H would result in a convex combination of the two expectations that would lie between these two bounds. However, strategy H leads to a termination while strategy A continues the computation, so we also have to look at nodes of lower index. Note that a no-response or REJECT strategy of the adversary would clearly be inferior to both H and A since it results in an expected number of outputs of 0 until $p_{\ell-1}$'s term ends.

Next, consider the case that an adversarial worker $p_{\ell-2}$ was selected for $v_{\ell-2}$. We again distinguish between the two strategies H and A that $p_{\ell-2}$ can go for.

Suppose that $p_{\ell-2}$ goes for H. If an adversarial worker at node $v_{\ell-1}$ would also go for H, then the expected number of outputs to the target is equal to $h_{\ell-1}$. If, instead, an adversarial worker at node $v_{\ell-1}$ would go for A, then the expected number of outputs to the target until $p_{\ell-2}$'s term ends is
\begin{eqnarray*}
  h_{\ell-2} & = & (1-\beta) h_{\ell-1} + \beta(1-\beta) (a_{\ell-1} + h_{\ell-1}) +   \beta^2 (1-\beta) (2 a_{\ell-1} + h_{\ell-1}) + \ldots \\
  & = & h_{\ell-1} (1-\beta) \sum_{i \ge 0} \beta^i + a_{\ell-1} \beta (1-\beta) \sum_{i \ge 0} (i+1) \beta^i \\
  & = & h_{\ell-1} + \frac{\beta}{1-\beta} a_{\ell-1}
\end{eqnarray*}
This is higher than the expectation if the adversarial worker at node $v_{\ell-1}$ goes for H and also better than any mixed strategy of an adversarial worker at node $v_{\ell-1}$, since that would result in an expectation between these two bounds. Note that that bound would also be the outcome if $p_{\ell-2}$ is honest.

Suppose now that $p_{\ell-2}$ goes for A. If an adversarial worker at node $v_{\ell-1}$ would go for H, then the expected number of outputs to the target until $p_{\ell-2}$'s term ends is equal to $\beta \cdot h_{\ell-1} = a_{\ell-1}$. If, instead, an adversarial worker at node $v_{\ell-1}$ would go for A, then the expected number of outputs to the target until $p_{\ell-2}$'s term ends is
\[
  a_{\ell-2} = \sum_{i \ge 1} \beta^i (1-\beta) \cdot i \cdot a_{\ell-1} = \frac{\beta}{1-\beta} a_{\ell-1}.
\]
Again, this is lower than for the case that the adversarial worker at node $v_{\ell-2}$ goes for H, but it would lead to a termination while that would not happen if the worker at $v_{\ell-2}$ goes for A. Also, note again that a no-response or REJECT strategy of the adversary would clearly be inferior to H as well as A since it results in an expected number of outputs of 0 until $p_{\ell-2}$'s term ends.

Continuing with these arguments upwards, it turns out that the highest expected number of outputs to the target given that an adversarial worker at node $v_i$ goes for H is
\[
  h_i = h_{i+1} + \frac{\beta}{1-\beta} a_{i+1}
\]
and the highest expected number of outputs to the target given that an adversarial worker at node $v_i$ goes for A is
\[
 a_i = \frac{\beta}{1-\beta} a_{i+1}.
\]
Since the target is always honest, the highest expected number of outputs to the target is $h_0 = h_1 + (\nicefrac{\beta}{1-\beta}) a_1$. In order to bound that, one can inductively show that $a_i \le [\nicefrac{\beta}{1-\beta}]^{\ell-i} h_i$, starting with $i=\ell-1$, which implies that $h_i \le (1+[\nicefrac{\beta}{1-\beta}]^{\ell-i+1}) h_{i+1}$. Therefore,
\begin{align*}
  h_0 &\le \prod_{i=2}^{\ell} (1+[\nicefrac{\beta}{1-\beta}]^i) h_{\ell-1}\\
  &\le \exp{\left( \sum_{i=2}^{\ell} [\nicefrac{\beta}{1-\beta}]^i \right)} \frac{1}{1-\beta}\\
  &= \left( 1+ O(\beta^2) \right) \frac{1}{1-\beta}\\
  &= 1+O(\beta)
\end{align*}
if $\beta \le \nicefrac{1}{3}$.
\end{proof}

Combining the above lemmas results in Theorem~\ref{thm:path}.


\section{The Case of Arbitrary DAGs}\label{sec:dag}

Next, we consider an arbitrary DAG $G=(V,E)$ with span $D$. It is well-known that the nodes $v \in V$ can be labeled with numbers $\ell(v) \in \{0,\ldots,D\}$ so that the nodes are topologically sorted, i.e., for all $(v,w) \in E$, $\ell(v) < \ell(w)$. Suppose on the contrary that there is a node $v$ that requires a label larger than $D$ for the nodes to be topologically sorted. Start with the node $v$ of highest label. Certainly, $v$ must have a predecessor $u$ with $\ell(u)=\ell(v)-1$ since otherwise the label of $v$ can be reduced. Continuing with this argument would result in a directed path of length more than $D$, contradicting the assumption that $G$ has a span of $D$. If a labeling with labels in $\{0,\ldots,D\}$ can be found so that for all $(v,w) \in E$, $\ell(w)=\ell(v)+1$, then $G$ is called a \emph{leveled network}. If this is not the case for some $(v,w) \in E$, then we may simply replace $(v,w)$ with a path of $\ell(w)-\ell(v)$ edges, where the only purpose of the inner nodes is to verify the output of the predecessor and, if found correct, to forward it to the successor. This creates more work, but if the verification work is insignificant compared to the work of executing the tasks, this is justified for the following reason:

The path replacement strategy is necessary if we want to reach a runtime of $O(D+\log n)$ w.h.p. To see that, consider a task graph consisting of a path of tasks $v_0,\ldots,v_D$ and an edge from $v_0$ to $v_D$. Whenever an adversarial worker is selected for $v_0$, its optimal strategy is to play honest till the computation has reached $v_D$ and then to send a wrong input to $v_D$, which causes the entire computation to roll back to $v_0$. Certainly, in this case the best possible runtime bound that can be shown to hold w.h.p. for a constant $\beta>0$ is $O(D \log n)$.

For the rest of this section, we assume w.l.o.g. that $G$ is a leveled network. We again exclusively focus on the scheduling part and therefore assume for simplicity that an honest worker assigned to some task $v \in V$ is able to verify without the supervisor whether the outputs of the preceding tasks (i.e., the tasks $u \in V$ with $(u,v) \in E$) are correct or not. Furthermore, we again assume a round to be long enough so that an honest worker can handle all tasks assigned to it, i.e., for every such $v$, the worker can receive the outputs from all predecessors of $v$ (if these are honest) and verify these, execute $v$ and then send a feedback on that to the supervisor.

In order to schedule the computation in the leveled network, the supervisor maintains a set $F \subseteq V$ of \emph{finished} tasks, i.e., tasks whose executions it already considers to be done, and for each task $v \in F$ it also remembers the worker $p_v$ last assigned to it. 
$F$ has to satisfy the following invariant: For all tasks $u \in V$ that have a directed path to a task $v \in F$, $u \in F$ as well. 

Based on $F$, we define the \emph{wavefront} $W_F \subseteq V$ of executable tasks as the set of all $v \in V \setminus F$ whose predecessors are all in $F$. Initially, $F=\emptyset$, and therefore, $W_F$ is equal to the set of all initial nodes (i.e., nodes without incoming edges) in $G$. Our scheduling strategy for path graphs can easily be generalized to a leveled graph. In particular, given that worker $p_v$ has been assigned to some executable task $v$, the supervisor expects one of the following replies from $p_v$:
\begin{itemize}
\item DONE: This means that $p_v$ is ready to forward its output to its successors in $G$. If $v$ is a final node, $p_v$ forwards its output to the target, instead.
\item REJECT($R$): This means that $p_v$ rejects the outputs from a subset $R$ of the predecessors of $v$.
\end{itemize}
The supervisor $s$ reacts to the reply of $p_v$ in the following way:
\begin{itemize}
\item If $s$ receives a DONE from $p_v$, then it adds $v$ to $F$.
\item If $s$ does not receive a reply from $p_v$, then $s$ leaves $F$ as it is.
\item If $s$ receives a REJECT($R$) from $p_v$, then $s$ removes all $w \in R$ as well as all $w' \in F$ reachable by a task $w \in R$ from $F$ in order to maintain the invariant for $F$.
\end{itemize}
Once we reach a round where $F=V$, i.e., all final nodes sent their outputs to the target, and the target accepted them, the execution terminates. If the target rejects an output of some final node $v$, it will not add $v$ to $F$ (i.e., it treats this case like a missing reply).

Let $d$ be an upper bound on the indegree and outdegree of a node in $G$ and $n=|V|$. The rest of the section is dedicated to the proof of the following theorem:

\begin{theorem}
\label{thm:arb}
For any DAG $G$ of degree $d$ and span $D$, $G$ can be extended to a graph $G'$ so that it holds: If $\beta\leq(\nicefrac{1}{2(2d+1)})^{2+\epsilon}$ for any constant $\epsilon>0$ then under any adversarial strategy, the supervised computation correctly terminates in $O((D+\log n)/\epsilon)$ rounds, w.h.p. Furthermore, the source and the target just need to send the input and receive the output $O(1)$ times, on expectation.
\end{theorem}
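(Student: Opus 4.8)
The plan is to prove Theorem~\ref{thm:arb} in three stages: first specify the extension $G'$ and reduce the source/target claims to the path lemmas of Section~\ref{sec:path}; then establish the $O((D+\log n)/\epsilon)$ runtime bound by a delay-sequence argument tailored to the mixed adversarial-stochastic process; and finally combine the two.

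For the extension, I would take $G'$ to be $G$ with every edge $(v,w)$ subdivided into a path of $\ell(w)-\ell(v)$ relay tasks (each relay merely re-verifies and forwards its unique predecessor's output), so that $G'$ is a leveled network of span $\le D$, maximum indegree/outdegree $\le d$, and $n'=O(nD)$ tasks, and additionally prepend a relay path of length $\Theta(\log n)$ between the source and the initial tasks and append one of length $\Theta(\log n)$ between the final tasks and the target. This changes the span by only $O(\log n)$ and leaves $d$ unchanged, and now the source (resp.\ target) talks only to the head (resp.\ tail) of its gadget path. Since the whole execution lasts $\poly(n)$ rounds w.h.p.\ (by the runtime bound proved next), applying Lemma~\ref{lem:advsupervisor} to the source gadget path --- pessimistically treating everything downstream as a target that always rejects --- and adapting Lemma~\ref{th:line-output} to the target gadget path shows that the source sends the input and the target receives the output $1+O(\beta)=O(1)$ times in expectation, which needs only $\beta\le 1/12$, implied by the hypothesis on $\beta$.

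The runtime bound is the core. I would argue by contradiction: fix a large constant $K$, assume some execution lasts more than $T:=K(D+\log n)/\epsilon$ rounds, and derive a low-probability event. The first ingredient is a set of ``blame'' facts: a missing reply means the assigned worker is adversarial; a reply $\mathrm{REJECT}(R)$ from $p_v$ means that either $p_v$ is adversarial or some task in $R$ carried a wrong output, which --- chasing the wrong output down the levels along honest predecessors --- is always rooted at some adversarial worker; and a task leaves $F$ only as the consequence of such an adversary-triggered rejection of one of its ancestors. The second ingredient is a combinatorial witness, the \emph{delay sequence}: a walk in $G'$ that starts at the last task to be permanently finished (at the termination round) and, step by step, moves either to the predecessor that finished last (a ``wait'' step), stays in place (a ``retry'' step), or jumps along one edge into the sub-cone most recently un-finished by a rollback (a ``rollback'' step), recording at the appropriate steps the rounds in which adversarial workers were selected. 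The walk is built by repeatedly asking ``what blocked progress here, and when,'' and one has to arrange that (a) each step has at most $2d+1$ combinatorial choices (the $\le d$ in-edges, the $\le d$ out-edges/cone-branches, or ``stay''), and (b) a single rollback of depth $j$ is charged $\Theta(j)$ distinct adversarial selections appearing along the walk, so that, once the $O(D)$ genuinely productive steps are accounted for separately, at least a constant fraction of the remaining $\Omega(T-D)$ steps carry a distinct bad selection. With this in hand the standard union bound finishes it: the number of delay sequences of combinatorial length $\ell$ is at most $n'(2d+1)^{\ell}$, marking the $\Omega(\ell)$ bad steps costs a factor $\le 2^{\ell}$, the probability that all marked selections are adversarial is at most $\beta^{\Omega(\ell)}$, and plugging in $\beta\le (1/(2(2d+1)))^{2+\epsilon}$ makes the per-step base $2(2d+1)\beta^{\Theta(1)}$ at most $(2(2d+1))^{-\Theta(\epsilon)}<1$; summing the geometric series over $\ell\ge\Omega(T-D)$ gives $n'\cdot\rho^{\Omega(T-D)}$ with $\log(1/\rho)=\Theta(\epsilon)$, below $n^{-c}$ for $K$ large enough. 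This is exactly where the $1/\epsilon$ in the runtime, and the exponent $2+\epsilon$ rather than $2$ in the bound on $\beta$, enter: the extra $\epsilon$ is the slack that beats both the $(2d+1)$-fold branching and the binomial factor while still leaving a base strictly below one.

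The step I expect to be the real obstacle is making the delay-sequence construction and its charging precise under cascading rollbacks. A single false $\mathrm{REJECT}$ at level $k-1$ can delete an entire cone of already-finished descendants reaching up to level $D$, so a naive charge of one bad event per rollback would force $\beta=O(1/D)$; one must instead show that re-traversing a depth-$j$ cone necessarily accumulates $\Theta(j)$ bad selections along the critical walk (intuitively because, while the cone is being rebuilt, the adversary must keep interfering to keep it on the critical path, and a rollback nested inside a rollback is precisely what the exponent $2+\epsilon$ is calibrated to absorb), and that the time-non-monotonicity introduced by rollbacks does not create cycles in the walk. Everything else --- the Chernoff/geometric estimates and the reduction of the I/O claims to the path lemmas --- is the routine machinery already exercised in Section~\ref{sec:path}.
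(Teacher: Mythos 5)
Your overall architecture matches the paper's (levelize the DAG by subdividing edges, attach gadget paths at the source and target, a delay-sequence argument for the runtime, and a reduction of the I/O bounds to the path lemmas), but there is a concrete gap on the target side. You append a final list of length $\Theta(\log n)$ and invoke an adaptation of Lemma~\ref{th:line-output}, which analyzes a path \emph{in isolation}. In the DAG, however, a REJECT issued elsewhere can prune an ancestor of a final node and thereby roll back the entire final list \emph{after} it has already delivered an output to the target, and each such reset permits another $1+O(\beta)$ expected deliveries. With a list of length $\Theta(\log n)$ a complete re-traversal takes only $\Theta(\log n)$ rounds, so within the $O((D+\log n)/\epsilon)$-round execution the adversary can force up to order $D/\log n$ re-traversals, giving an expected number of target receptions of order $D/\log n$ rather than $O(1)$. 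The paper avoids this by making each final list of length $D+\log n$: two complete re-traversals must then be at least $D+\log n$ rounds apart, so only $O(1)$ of them fit into the runtime bound, and Lemma~\ref{th:line-output} is applied once per such occasion (this is exactly the content of Lemma~\ref{th:dag_receivingwork}). Your source-side gadget of length $\Theta(\log n)$ is fine, since Lemma~\ref{lem:advsupervisor} is stated precisely to tolerate adversarial resets from downstream over polynomially many rounds.

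For the runtime, the step you flag as the obstacle is indeed the crux, but the mechanism you propose for it is not the one that works. A deep rollback is not paid for by ``$\Theta(j)$ bad selections accumulated while the depth-$j$ cone is rebuilt'': rebuilding the cone contributes only forward (DONE) edges to the delay sequence, and those carry no probability at all. The paper's accounting is different. Each rollback edge is charged a single factor $2\beta$ (either the rejecting worker or the worker that last executed the rejected task must be adversarial), and the leveled structure forces $f-r\le D$ between the numbers of forward and rollback edges, so a delay sequence witnessing runtime $D+s$ must contain at least $s/2$ rollback edges or self-loops; a late deep rollback cannot be cheap because the rejecting task must itself still be unfinished at that late round, and the walk, continuing backwards through the rejecter's history, collects the adversarial events that kept it unfinished. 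Relatedly, your blame rule for a REJECT (``chase the wrong output down the levels to an adversarial root'') is both unnecessary and harmful to the charging: if the rejecter is honest, the worker that last executed the rejected task must already be adversarial, and pointing at a more distant culprit destroys the per-edge bookkeeping. Finally, what your sketch omits entirely is the argument that the task-time pairs charged along the delay sequence are pairwise disjoint, so that the corresponding sampling events are independent and the probabilities multiply to $(2\beta)^{s/2}$; the paper devotes Cases 1 and 2 in the proof of Lemma~\ref{th:dag-runtime} to exactly this point, and without it the union bound does not close.
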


We start with a bound on the runtime.

\begin{lemma} \label{th:dag-runtime}
If $\beta\leq(\nicefrac{1}{2(2d+1)})^{2+\epsilon}$ for any constant $\epsilon>0$ then for any adversarial strategy, the supervised computation terminates within $O((D+\log_d n)/\epsilon)$ rounds w.h.p.
\end{lemma}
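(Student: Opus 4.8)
The plan is to adapt the classical \emph{delay sequence} argument from the analysis of packet routing (e.g.\ \cite{LeightonMRR94,Ranade91}) to our mixed adversarial-stochastic setting. The key conceptual point is the same as in the path case (Lemma~\ref{th:line-time}): whenever an adversarial worker is selected for a task $v$, it can prevent progress at $v$ at most once before a fresh worker is drawn, and its worst-case effect is to trigger a rollback that removes $v$ and all $F$-descendants of $v$ from $F$. Since $G$ is a leveled network with $D$ levels and degree bounded by $d$, a rollback triggered at level $i$ can undo at most $d^{j-i}$ tasks at each subsequent level $j$; but crucially, each of those undone tasks must itself have been ``touched'' (executed or verified) earlier, and each such touch corresponds to a sampling event. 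I would first show that if the computation has not terminated by round $T = c(D+\log_d n)/\epsilon$ for a suitable constant $c$, then there is a long ``delay sequence'': a directed path $v_0,v_1,\dots,v_k$ in (the transitive closure of) $G$ with $k = \Theta(D)$, together with a sequence of rounds $t_0 < t_1 < \dots$ of total length $T$, such that each ``lost'' round is charged to the selection of an adversarial worker that is an ancestor, within distance $O(1)$ per level, of some $v_\ell$.

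Next I would do the counting. The number of delay sequences of the relevant form is at most $n \cdot d^{O(D)} \cdot \binom{T}{\Theta(D)}$: a start node ($n$ choices), the branching of the ancestor sets at each of the $D$ levels (each step going ``up'' has at most $d$ predecessors, and we only ever need to track $O(1)$ of them per level because a single rejected input suffices to roll a task back), and the placement of the blamed rounds within the $T$ available rounds. For a fixed delay sequence, the probability that \emph{all} of the $\Theta(D)$-or-so required workers turn out adversarial is at most $\beta^{\Theta(D)}$; more precisely, if a delay sequence of length $T$ forces at least $T - O(D)$ adversarial selections, and we have set things up so that $T = \Theta(D/\epsilon)$, this probability is $\beta^{\Omega(T)}$. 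Plugging in the bound $\beta \le (1/(2(2d+1)))^{2+\epsilon}$, the product $d^{O(D)} \cdot \beta^{\Omega(T)}$ becomes $d^{O(D)} \cdot (2(2d+1))^{-(2+\epsilon)\Omega(D/\epsilon)} = d^{O(D)} \cdot (2d+1)^{-(2+\epsilon)\Omega(D/\epsilon)}$, and the $(2+\epsilon)/\epsilon = 1 + 2/\epsilon$ factor in the exponent is exactly what is needed to beat the $d^{O(D)}$ term and also absorb the $\binom{T}{\Theta(D)} \le 2^{O(T)} = 2^{O(D/\epsilon)}$ and the leading factor $n$; the surplus leaves a residual $n^{-\Omega(1)}$, giving the w.h.p.\ bound. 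The independence needed for the $\beta^{\Omega(T)}$ estimate comes from the fact that the supervisor draws a fresh worker independently each round and the adversary, being oblivious to future samples, cannot correlate them; this is the same subtlety already handled in Lemma~\ref{th:line-time}.

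The main obstacle is making the delay-sequence \emph{encoding} both correct and economical at the same time. On one hand, the encoding must be rich enough that a genuinely stuck execution of length $T$ always produces a valid sequence in our family — this requires carefully arguing that every round in which the wavefront $W_F$ fails to advance can be ``blamed'' on a specific adversarial selection of an ancestor of some eventually-critical task, even accounting for cascading REJECT($R$) rollbacks that delete whole cones out of $F$. On the other hand, the encoding must be sparse enough that the $d^{O(D)}$ bound on the number of sequences holds, which is why it is essential that a task is rolled back by a \emph{single} bad predecessor output — so we never need to branch over subsets $R$ of predecessors, only over one predecessor per level. Getting the bookkeeping for these rollback cascades right, and confirming that the charging scheme assigns distinct rounds to distinct sampling events (so that the $\binom{T}{\cdot}$ factor is legitimate), is where the real work lies; the rest is the Chernoff-style union bound sketched above, with the $1/\epsilon$ blow-up in the runtime being precisely the price paid to make the exponents balance.
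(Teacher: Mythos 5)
Your high-level plan --- a backward-in-time delay sequence, a union bound over all sequences, and a product of $\beta$'s for the forced adversarial selections --- is the same strategy the paper uses. But your encoding of the delay sequence has a genuine gap. You model it as a (monotone) directed path $v_0,\dots,v_k$ in the transitive closure of $G$ with $k=\Theta(D)$, counted by $n\cdot d^{O(D)}\cdot\binom{T}{\Theta(D)}$, i.e.\ $O(1)$ branching choices per level plus a choice of which rounds to blame. This does not match the actual blame structure. When a task $v$ is re-executed at round $i$, the reason can be that a \emph{successor} $w$ of $v$ sent REJECT($R$) with $v\in R$ at round $i-1$; the witness for round $i-1$ then lives at $w$, one level \emph{closer to the target}, and the adversary can force the execution to oscillate across a level boundary for $\Omega(T)$ rounds. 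The correct object is therefore a walk with one node per round ($D+s$ nodes), each of whose $D+s-1$ edges is a predecessor edge (progress), a successor edge (rollback), or a self-loop (no reply), giving $(2d+1)^{D+s-1}$ choices --- a level can be revisited $\Omega(T)$ times, each revisit costing a fresh factor of $d$, so $d^{O(D)}$ is an undercount. The arithmetic still closes, but only with per-edge accounting: a level-counting argument shows at least $s/2$ of the edges must be rollbacks or self-loops, each contributing a factor $2\beta$ (the $2$ because either of two workers may be the adversarial one), and the hypothesis $\beta\le(\nicefrac{1}{2(2d+1)})^{2+\epsilon}$ is calibrated exactly so that $(2d+1)^{2}\cdot 2\beta<1$ per oscillation. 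Your $\binom{T}{\Theta(D)}$ factor and the ``ancestor within distance $O(1)$ per level'' charging scheme do not recover this.

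A second, smaller gap: you assert that independence of the $\beta^{\Omega(T)}$ events follows as in Lemma~\ref{th:line-time}, but for a rollback edge the blamed worker may be the one that \emph{last} executed $v_i$ at some earlier round $t<i$, so the blamed task--time pair $(v_i,t)$ is not automatically distinct from the pairs blamed elsewhere in the sequence. The paper needs an explicit case analysis (if $(v_i,t)$ coincided with another blamed pair, the supervisor would have replaced that worker or pruned that task from $F$, contradicting the construction) to ensure all blamed sampling events are disjoint and the probabilities multiply. Without that argument the $\beta^{\Omega(T)}$ bound is not justified.
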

\begin{proof}
We will do a backwards analysis similar to the delay sequence arguments used for routing in leveled networks (see, e.g., \cite{LeightonMRR94,Ranade91}). However, since these arguments did not take adversarial behavior into account, some important adaptations are needed to make it work in our case. Without loss of generality, we assume that the adversarial workers only send messages that are specified by the protocol, since all other messages will be ignored by the supervisor, the source, the target, and the honest workers. Then the adversary is left with the following choices whenever the supervisor asks an adversarial worker $p$ to execute task $v$:
\begin{itemize}
\item $p$ does not send a message to the supervisor.
\item $p$ sends DONE to the supervisor.
\item $p$ sends REJECT($R$) to the supervisor for some subset of the predecessors of $v$.
\end{itemize}
Whenever the supervisor asks an adversarial worker $p$ to send its output to some worker $q$, $p$ might either send the correct output to $q$, some wrong output to $q$, or nothing to $q$ which we simply consider as a wrong output as well.

We want to determine a probability bound for the runtime to be at least $D+s$ for some sufficiently large $s$, no matter what kind of strategy the adversary is using. First, we show that a runtime of at least $D+s$ implies the existence of a so-called $s$-delay sequence, which corresponds to a walk through the task graph.

Consider any execution of the supervised computation that takes at least $D+s$ time. We will construct a corresponding $s$-delay sequence witnessing that. The delay sequence starts with node $u_{D+s}$ representing a task that is executed at time $D+s$. From this task, the delay sequence follows the wavefront of the computation backwards in time. The construction takes into account that for each node $u_i$ representing a task $v$ that is executed at time $i$, one of the three cases must have happened in round $i-1$ so that the supervisor added $v$ to the wavefront and therefore decided to execute $v$: (1) the execution of one of the predecessors of $v$ ended with a DONE message, (2) a successor of $v$ sent a REJECT message including $v$ to the supervisor, or (3) the worker executing $v$ did not send anything to the supervisor. Note that $v$ cannot be executed due to any non-successor sending a REJECT message in the previous round because if it is not at the root of a subtree pruned from $F$ due to a REJECT message, at least one of its predecessors must have been pruned from $F$ as well, which means that $v$ cannot belong to the wavefront. The delay sequence then moves from $u_i$ to $u_{i-1}$ representing
\begin{itemize}
\item any predecessor of $v$ sending a DONE in the previous round in case (1),
\item any successor of $v$ sending a REJECT including $v$ in the previous round in case (2), and
\item $v$ in case (3).
\end{itemize}
The delay sequence ends with a node $u_1$ representing an initial node that was executed by the source in the first round. Thus, the delay sequence recorded that way consists of $D+s-1$ edges (that either represent edges in the task graph or self-loops) and $D+s$ nodes. For each node $u_i$ of the delay sequence we record the set $(v_i,s_i,s'_i)$ where $v_i$ is the task associated with $u_i$, $s_i \in \{H,A\}$ is the state of the worker $p$ assigned to $v_i$ at time $i$, and $s'_i \in \{H,A,-\}$ is the state of the worker last chosen for $v_i$ before $p$ (which is "-" if $p$ was the first worker chosen for $v_i$).

Next, we investigate the state properties an $s$-delay sequence must have to witness a runtime of at least $D+s$. First, consider the case that the edge from $u_i$ to $u_{i-1}$ represents a self-loop, i.e., $v_{i-1}=v_i$. Then the worker chosen for task $v_{i-1}$ at time $i-1$ must have been adversarial, i.e., $s_{i-1}=A$. Next, consider the case that the edge from $u_i$ to $u_{i-1}$ is a \emph{rollback edge}, i.e., $v_i$ is a predecessor of $v_{i-1}$ in the task graph. Certainly, for a rollback to occur, either the worker that executed $u_{i-1}$ at time $i-1$ or the worker that last executed $u_i$ before round $i$ must have been adversarial, which implies that either $s_{i-1}=A$ or $s'_i=A$. 

We call an $s$-delay sequence \emph{valid} if (1) the tasks $v_1,\ldots,v_{D+s}$ associated with it satisfy the property that either $v_{i+1}=v_i$, $v_{i+1}$ is a predecessor of $v_i$, or $v_{i+1}$ is a successor of $v_i$ in the task graph for all $i$, and (2) its states $(s_i,s'_i)$ satisfy the properties above for all occurrences of self-loops and rollback edges for all $i$. It follows that if the runtime of the supervised computation is at least $D+s$ then there must exist a valid $s$-delay sequence, or in other words, if there does not exist a valid $s$-delay sequence then the runtime of the supervised computation is less than $D+s$. Therefore, it remains to bound the probability of a valid $s$-delay sequence to exist.

We first bound the number of different choices of $v_1,\ldots,v_{D+s}$ for an $s$-delay sequence.

An $s$-delay sequence consists of $D+s-1$ edges, each of which can be either one of at most $d$ edges to a predecessor, one of at most $d$ edges to a successor, or a self-loop in the task graph, resulting in at most $2d+1$ possible choices. Furthermore, the delay sequence may start at anyone of the tasks in $G$. Therefore, there are at most 
\[
    n \cdot (2d+1)^{D+s-1}
\]
ways of fixing the tasks $v_1,\ldots,v_{D+s}$.

To bound the probability of an $s$-delay sequence with fixed $v_1,\ldots,v_{D+s}$ to be valid, we associate a probability with each rollback edge and each self-loop of the $s$-delay sequence. Recall that every node $u_i$ in the delay sequence witnesses an execution of $v_i$ at time $i$. First, consider the case that the edge from $u_i$ to $u_{i-1}$ represents a self-loop. Then the worker chosen for task $v_{i-1}$ at time $i-1$ must have been adversarial, which only happens with probability $\beta$. Next, consider the case that the edge from $u_i$ to $u_{i-1}$ represents a rollback from $u_{i-1}$. Recall that for a rollback to occur, either the worker that executed $v_{i-1}$ at time $i-1$ or the worker that last executed $v_i$ before round $i$ must have been adversarial. The probability for the worker that executed $v_{i-1}$ at time $i-1$ to be adversarial is $\beta$. Thus, it remains to consider the case that the worker that last executed $v_i$ before round $i$ is adversarial. Let the corresponding task-time pair be called $(v_i,t)$. While the task-time pairs associated with the nodes of the delay sequence are clearly disjoint, and therefore we can get independent probabilities for these whenever an adversary is involved in such a task, this is not obvious for the task-time pair $(v_i,t)$. There are two cases to consider.

Case 1: $(v_i,t)$ is identical to the task-time pair associated with $v_j$ for some $j<i$. If that task-time pair was counted as adversarial due to not responding to the supervisor or being involved in a rollback then the supervisor would have chosen a new worker for that task, and this new worker (or some worker chosen later) would have been the worker in question for $(v_i,t)$, contradicting our assumption.

Case 2: $(v_i,t)$ is identical to some other $(v_j,t')$ that witnessed a rollback from $v_{j-1}$ for some $j\not=i$. If $j<i$ then $v_j$ would have been removed from the set of finished tasks at time $j$, which means that the supervisor would have asked a worker later assigned to $v_i$ to send its output to the task-time pair $(v_{i-1},i-1)$. The case $j>i$ is analogous. Thus, also in this case, $(v_i,t)$ must be disjoint with other task-time pairs witnessing an adversarial behavior in the delay sequence.

Since the $s$-delay sequence starts at a task $v_{D+s}$ at level at most $D$ and ends at task $v_1$ at level at least 0, it must consist of at least $\nicefrac{s}{2}$ edges that are either rollback edges or self-loops. Since there are two possible choices for the task-time pair causing a rollback, this results in the following probability bound
\[
    \Pr[\text{a specific $s$-delay sequence is valid}]\leq (2\beta)^{\nicefrac{s}{2}}
\]

Finally, we bound the probability that the supervised computation takes at least $D+s$ rounds for $s\geq 2(D+\nicefrac{c \log n}{\log (2d+1)})/\epsilon$, where $c>1$ and $\epsilon>0$ are constants.
\begin{align*}
    \Pr[\text{runtime $\ge D+s$}]&\leq\Pr[\text{There exists an $s$-delay sequence}]\\
    &\leq n \cdot (2d+1)^{D+s-1} \cdot (2\beta)^{\nicefrac{s}{2}}\\
    &\leq n \cdot (2d+1)^{D+s-1} \cdot (2d+1)^{-(1+\epsilon/2)s} \tag{$\beta\leq(\frac{1}{2(2d+1)})^{2+\epsilon}$}\\
    &\leq n \cdot (2d+1)^{D-\epsilon s/2} \\
    &\leq n \cdot (2d+1)^{-\frac{c\log n}{\log (2d+1)}} \tag{$s\geq 2(D+\nicefrac{c\log n}{\log (2d+1)})/\epsilon$}\\
    &= n^{-c+1}
\end{align*}
Thus, the runtime is at most $\bigO((D+\log_d n)/\epsilon)$, w.h.p.
\end{proof}

Thus, the runtime is asymptotically optimal if $D \ge \log n$. There are DAGs where $\beta<1/d$ is required to finish within any polynomial in $D$ many rounds so that our bound on $\beta$ is not far from optimal. To see that, consider a leveled network of depth $D$ in which every level contains $d$ tasks and the tasks of two consecutive levels form a complete bipartite network. If the adversary uses the strategy to always reject all inputs, a single adversarial peer can roll back the computation of an entire level, and the probability of selecting no adversarial peer in a level is $(1-\beta)^d \le e^{-\beta d}$. Thus, standard arguments from biased random walks imply that if $\beta \ge 1/d$, the number of rounds needed to finish the computation is exponential in $D$ on expectation.

The I/O work of the source and the target appears to be hard to bound, but we can use a trick to reduce it to an asymptotically optimal work by extending $G$ in an appropriate way. Let $c\ge 1$ be a sufficiently large constant. Consider the DAG $G'=(V',E')$ that results from $G$ by

\begin{itemize}
\item adding a directed \emph{initial list} of $k=c\log n$ nodes $u_1(v),\ldots,u_k(v)$ to each initial node $v \in V$ in a sense that $u_1(v)$ is the new initial node and $u_k(v)$ has an edge to $v$ and
\item adding a directed \emph{final list} of $\ell=D+\log n$ nodes $u_1(v),\ldots,u_{\ell}(v)$ to each final node $v \in V$ in a sense that $v$ has an edge to $u_1(v)$ and $u_{\ell}(v)$ is the new final node.
\end{itemize}

All that the initial list is doing is forwarding the input of the source from one node to the next one, and 
all that the final list is doing is forwarding the output of the original final node from one node to the next one. 

Since the construction only increases the depth of $G$ by at most $D+(c+1)\log n$, the asymptotic runtime bound of Lemma~\ref{th:dag-runtime} remains the same. It remains to bound the I/O work of the source and the target. We first make use of Lemma~\ref{lem:advsupervisor} to obtain the following result:

\begin{lemma} \label{th:dag_sendingwork}
If $\beta \le 1/12$ then under any adversarial strategy, the source just needs to send the input $O(1)$ times to any initial node in $G'$ on expectation.
\end{lemma}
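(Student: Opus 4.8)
The plan is to reduce the claim to Lemma~\ref{lem:advsupervisor} by isolating, for each initial node $v \in V$, the initial list $u_1(v),\ldots,u_k(v)$ of length $k = c\log n$ that was prepended to it, and treating this list as the path of length $\ell = c\log n$ in that lemma. The source sends its input to an initial node of $G'$ precisely when it sends to some $u_1(v)$, so it suffices to bound, for a fixed $v$, the expected number of times the source sends the input along the list $u_1(v),\ldots,u_k(v)$, and then take a union/linearity bound over the (at most $n$) initial nodes of $G$.

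The key step is to argue that, from the point of view of the source and this one initial list, the rest of the computation in $G'$ can only play the role of an adversarial ``target'' at the end of the list. Concretely: whenever the supervisor performs a rollback that reaches $u_k(v)$ from $v$ (because $v$, or some descendant of $v$, issued a REJECT that propagated back through $F$), this looks exactly like the target rejecting the output of $u_k(v)$ in Lemma~\ref{lem:advsupervisor}; and any behaviour of the workers downstream of the list that causes such a rollback is subsumed by the ``target always rejects'' assumption there, which is the worst case. The workers inside the list are honest or adversarial according to the black-box sampling with parameter $\beta \le 1/12$, exactly as required. Lemma~\ref{lem:advsupervisor} then tells us that within $n^{c-1}$ rounds the expected number of times the source sends the input to $u_1(v)$ is $1 + O(\beta)$.

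It remains to discharge the ``within $n^{c-1}$ rounds'' hypothesis: by Lemma~\ref{th:dag-runtime} the whole computation on $G'$ terminates within $O((D+\log n)/\epsilon)$ rounds w.h.p., in particular within $n^{c-1}$ rounds for $n$ large enough (recall $D \le n$), and on the low-probability failure event we can crudely bound the number of source transmissions by the total number of rounds, which is polynomially bounded by the protocol, so its contribution to the expectation is $o(1)$. Summing the $1+O(\beta)$ bound over the at most $n$ initial lists and dividing by the relevant normalization (the lemma asks for the per-initial-node count, which is already $1+O(\beta)$) gives the claim. The main obstacle I expect is the second step: making rigorous that the downstream structure of $G'$ — which can REJECT, stall, or send wrong outputs in complicated correlated ways — is genuinely dominated by the single ``always-rejecting target'' adversary of Lemma~\ref{lem:advsupervisor}, i.e. that no cross-list interaction or global scheduling effect lets the adversary extract more source transmissions for this particular list than the local model allows. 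This likely needs a short domination argument in the same style as the one already used inside the proof of Lemma~\ref{th:line-input}.
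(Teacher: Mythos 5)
Your proposal is correct and follows essentially the same route as the paper: reduce each initial list to the path of Lemma~\ref{lem:advsupervisor} by arguing that the downstream computation is dominated by an always-rejecting target, and invoke the runtime bound of Lemma~\ref{th:dag-runtime} to justify the round-count hypothesis. The domination step you flag as the main obstacle is exactly the one the paper dispatches with a single ``clearly,'' so your more explicit treatment (and your handling of the low-probability failure event of the runtime bound) is, if anything, more careful than the published argument.
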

\begin{proof}
Due to Lemma~\ref{th:dag-runtime}, it takes at most $O(n)$ rounds, w.h.p., to finish the computation in $G'$. Consider some initial list $L$. Clearly, the worst case for the expected number of times the source needs to send the input to the initial node of $L$ is reached if the output of a worker assigned to the last task of $L$ is always rejected. This is exactly the situation of Lemma~\ref{lem:advsupervisor}, which implies that even in such a situation the source will only send the input to the initial node of $L$ $O(1)$ times on expectation within $O(n)$ rounds.
\end{proof}

Next, we bound the number of times the target receives the output from a final list.

\begin{lemma}\label{th:dag_receivingwork}
If $\beta\leq 1/3$ then for any adversarial strategy, the target just needs to receive the output $O(1)$ times from any final node in $G'$ on expectation.
\end{lemma}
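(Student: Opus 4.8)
The plan is to reduce this to the path analysis of Lemma~\ref{th:line-output}, since a final list $L = u_1(v),\dots,u_\ell(v)$ attached to a final node $v$ behaves, from the supervisor's scheduling perspective, exactly like a path whose workers only forward their inputs and whose last node $u_\ell(v)$ delivers its output to the target. The key observation is that once the computation reaches the situation where some honest worker is assigned to $v$ and correctly delivers its output to the start of $L$, the subsequent behavior on $L$ — rollbacks, re-executions, and transmissions to the target — is governed purely by the adversarial/stochastic process on a path of length $\ell$ feeding the target. So the number of times the target receives an output from $u_\ell(v)$ per such "fresh start" at $v$ is at most $h_0 = 1 + O(\beta)$ by exactly the argument in the proof of Lemma~\ref{th:line-output}, provided $\beta \le 1/3$.

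First I would argue that it suffices to bound the expected number of transmissions to the target \emph{per visit} to a state where the worker at $v$ has just been (re-)assigned, multiplied by the expected number of such visits. A rollback from within $L$ that reaches $u_1(v)$ and triggers a REJECT causes the supervisor to pick a new worker for $v$; this is the only way a "fresh start" at $v$ is re-initiated. But as in Lemma~\ref{th:line-output}, any rollback/no-response behavior by the adversary strictly decreases the expected number of target transmissions compared to the H/A strategies analyzed there, and in particular the dominating strategies never push the computation back past $u_1(v)$ to $v$ itself — the worst case keeps the computation local to $L$ and the target. Hence the number of fresh starts at $v$ contributing to target transmissions is effectively $1$, and the per-start count is $h_0 = 1 + O(\beta)$, giving the claimed $O(1)$ bound.

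The one subtlety I would need to handle carefully is that, unlike in the pure path case of Section~\ref{sec:path}, here $v$ is a genuine task node with potentially many predecessors in $G$, so the workers feeding $v$ may themselves be adversarial and could, in principle, cause $v$ to be re-executed and thus spawn additional "fresh starts" of the final list. However, the final list $L$ has length $\ell = D + \log n$, which is long enough that (as in Lemma~\ref{lem:advsupervisor}) the probability that the computation rolls all the way back from $u_\ell(v)$ to $v$ without an intervening successful transmission to the target is at most $2/n^c$. So within the $O(n)$ rounds the whole computation takes (w.h.p., by Lemma~\ref{th:dag-runtime}), the expected number of extra fresh starts of the final list caused by disturbances upstream of $v$ is $O(1/n)$, which contributes negligibly. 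Combining: the expected number of target receptions from $u_\ell(v)$ is at most $h_0 \cdot (1 + O(1/n)) = 1 + O(\beta) = O(1)$.

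The main obstacle I anticipate is the bookkeeping needed to make the "decoupling at $v$" rigorous: one must verify that the dominance argument of Lemma~\ref{th:line-output} — which was stated for a path ending at the target — applies verbatim when the path is the final list $L$, i.e., that the adversary gains nothing by coordinating misbehavior across the boundary between $v$'s predecessors and $L$. This amounts to observing that a REJECT propagating out of $u_1(v)$ removes exactly $v$ (and whatever lies below it) from $F$, so from the target's viewpoint the only relevant history is the restart pattern at the top of $L$, and the $h_i, a_i$ recursion depends only on $\beta$ and the length of the remaining suffix of $L$. Once that is granted, the rest is the same convex-combination / geometric-series estimate already carried out, so I would invoke Lemma~\ref{th:line-output} essentially as a black box after the above reduction, and fold in the $O(1/n)$ term from Lemma~\ref{lem:advsupervisor} to absorb the upstream effects.
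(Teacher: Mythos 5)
Your first step---treating the final list $L$ in isolation and invoking the $h_i,a_i$ recursion of Lemma~\ref{th:line-output} to get $O(1)$ expected receptions per traversal of $L$---matches the paper (which additionally notes that both extremes $a_0$ and $h_0$ are $O(1)$, so the bound holds even when a wrong input enters $L$). The gap is in how you account for restarts of $L$ triggered from above. An adversarial ancestor of $v$ can cause a REJECT that removes $v$, and hence every node of $L$, from the set $F$ of finished tasks; this wipes out $L$ and forces a fresh traversal, and such an event can occur in essentially every round of the execution, so the number of fresh starts is not $1+O(1/n)$---it can be $\Theta(D+\log n)$. The event you bound via Lemma~\ref{lem:advsupervisor}, namely a rollback propagating backwards from $u_{\ell}(v)$ down to $v$ \emph{through} $L$, is a different event (it is the one relevant for the source and the initial lists) and does not control the number of upstream-induced restarts, so the claimed $O(1/n)$ expected number of extra fresh starts does not follow.

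The paper closes this gap with a timing argument rather than a probabilistic one: the final list has length $D+\log n$, so between any two occasions on which the computation in $L$ actually reaches the target, at least $D+\log n$ rounds of forward progress through $L$ must elapse. Since the whole computation terminates in $O(D+\log n)$ rounds w.h.p.\ by Lemma~\ref{th:dag-runtime}, there can be only $O(1)$ such occasions w.h.p., and each contributes $O(1)$ expected receptions by the path analysis; restarts that never reach the end of $L$ contribute nothing. Note that your substitution of ``$O(n)$ rounds'' for the runtime would not suffice here even with a correct count of restarts, since $O(n)/(D+\log n)$ occasions is not $O(1)$ in general; the argument really needs the list length and the runtime bound to match up to a constant factor.
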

\begin{proof}
Consider any final list $L$ and let $v_1$ be its first node. When considering $L$ in an isolated fashion and going through the proof of Lemma~\ref{th:line-output}, it turns out for $\beta \le 1/3$ that no matter whether the source delivers a correct or wrong input, the expected number of times the target receives the output is $O(1)$, because both extremes $a_0$ and $h_0$ are equal to $O(1)$. In other words, no matter which input is given to $v_1$ in $G'$, the number of times the target receives an output from the final node of $L$ is $O(1)$. However, it might happen that rollbacks in some part of the DAG might cause a complete rollback of the computation in $L$. If two of such rollbacks are less than $D+\log n$ rounds apart, then the computation within $L$ could not have reached the target. Since the computation in $G'$ will finish in $O(D+\log n)$ rounds, w.h.p., there can be at most $O(1)$ occasions where the computation in $L$ reaches the target, w.h.p., and for each of these occasions the target will only receive the output $O(1)$ times on expectation, as argued above, which implies the lemma.
\end{proof}
 
As the target only accepts a correct output, the target must have received correct outputs from all final nodes once the computation terminates.

Combining this fact with the lemmas above proves \Cref{thm:arb}.

\section{Supervised Matrix Multiplication} \label{sec:matrix}

Consider the well-known matrix multiplication problem: Given two $m \times m$-matrices $A$ and $B$, compute $C=A \cdot B$. Standard approaches for multiplying the matrices in parallel with $n$ processes are to decompose $A$ and $B$ into stripes or squares and to distribute these in an appropriate way among the given processes in order to compute $C$ (e.g., \cite{Cannon69}).

We employ the stripe approach since it most elegantly demonstrates our supervised computing concept. In the following, $n$ denotes the number of matrix multiplication tasks instead of the total number of tasks in a task graph since it is easier to describe the task graph this way. However, since its size is equal to $\Theta(n \log n)$, our probability bounds still hold w.r.t. the size of the task graph. Let $m \ge \sqrt{n} \log n$ and suppose, for simplicity, that $k=\sqrt{n}$ is a power of 2 that divides $m$. For every $i \in \{1,\ldots,k\}$ let $A_i$ be the $m \times (m/k)$-matrix consisting of the rows $1+(i-1)m/k,\ldots,i \cdot m/k$ of $A$ and $B_i$ be the $(m/k) \times m$-matrix consisting of the columns $1+(i-1)m/k,\ldots,i \cdot m/k$ of $B$.

Consider any two matrices $A_i$ and $B_j$. There is a well-known randomized algorithm for the test if $A_i \cdot B_j = C_{i,j}$ for some given $C_{i,j}$ called \emph{Freivald's algorithm} \cite{Freivalds77}:

\begin{algorithm}
    \caption{Freivald's Algorithm($A_i$, $B_j$, $C_{i,j}$)}\label{alg:freivald}
    
    $r \gets $ random vector $\in\{0,1\}^{\nicefrac{m}{k}}$\;
    
    $x \gets B_j \cdot r$\;
    $y \gets A_i \cdot x$\;
    $z \gets C_{i,j} \cdot r$\;
    \If{$y=z$}
        {Return true\;}
    \Else
        {Return false\;}
    
\end{algorithm}

The following lemma directly follows from the proof in \cite{Freivalds77}:

\begin{lemma}\label{lem:freivalds}
Let $A_i\in\mathcal{M}_{m/k,m}(\mathbb{R})$, $B_j \in\mathcal{M}_{m,m/k}(\mathbb{R})$ and $C_{i,j}\in\mathcal{M}_{m/k,m/k}(\mathbb{R})$ be three matrices with $A_i \cdot B_j \not= C_{i,j}$ and let $r\in \mathbb{R}^{m/k}$ be a vector chosen uniformly and independently at random. Then $\Pr[A_i \cdot B_j \cdot r = C_{i,j} \cdot r] \le 1/2$.
\end{lemma}

Recall that $k=\sqrt{n}$ is an integer, which means that we want to compute $n$ many matrices $C_{i,j}$. We do that with the DAG $G=(V,E)$ as shown in Figure~\ref{fig:matrixmult}. The parts of this DAG are defined as follows:
\begin{itemize}
\item $I_{X,j}$, $X \in \{A,B\}$ and $j \in \{1,\ldots,k\}$: These are lists of length $c\log n$ for a sufficiently large constant $c$.
\item $T_{X,i}$, $X \in \{A,B\}$ and $i \in \{1,\ldots,k\}$: These are complete binary trees of depth $\log k$, i.e., each tree has $k$ leaves.
\item $O_{i,j}$, $i,j \in \{1,\ldots,k\}$: These are lists of length $c \log n$.
\end{itemize}
Each $v_{i,j}$, $i,j \in \{1,\ldots,k\}$, has two incoming edges: from the $j$-th leaf of $T_{A,i}$ and the $i$-th leaf of $T_{B,j}$. Thus, every node of $G$ has indegree and outdegree at most 2.

\begin{figure}[ht!]
    \begin{center}
        \includegraphics[width=0.6\textwidth]{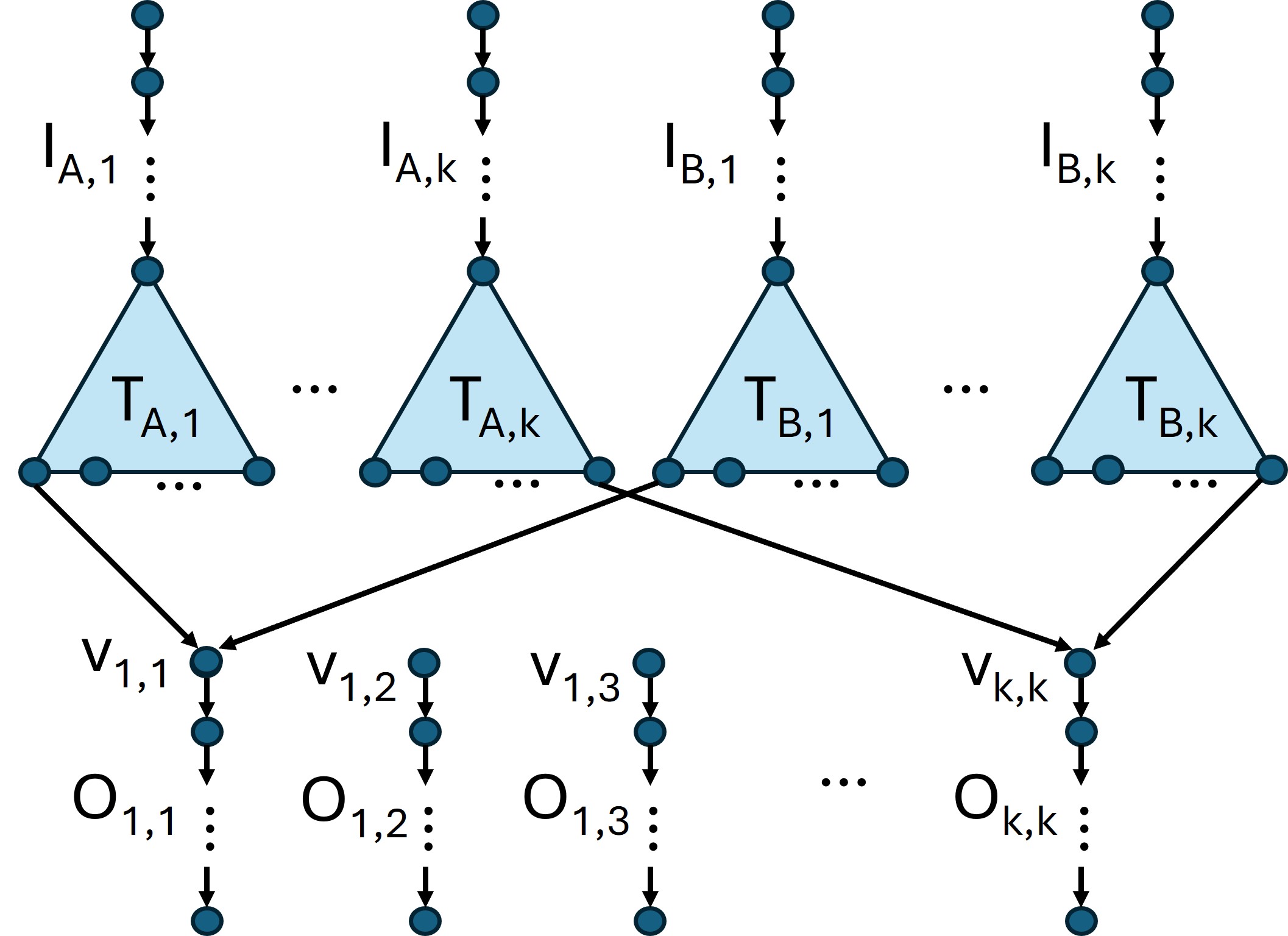}
    \end{center}
    \caption{The DAG for  matrix multiplication.}
    \label{fig:matrixmult}
\end{figure}

The supervisor $s$ uses the following strategy to compute $A \cdot B$ with the help of $G$. For every $i \in \{1,\ldots,k\}$ and $X \in \{A,B\}$, $s$ will ask the starting node of $I_{X,i}$ to fetch $X_i$ from the source. The role of the nodes in each $I_{X,i}$ is to simply forward $X_i$ to the next node in the list, and the role of each inner node of $T_{X,i}$ is to forward $X_i$ to its two children. Each node $v_{i,j}$ is supposed to receive $A_i$ and $B_j$, compute $C_{i,j}=A_i \cdot B_j$, and send $(A_i,B_j,C_{i,j})$ to the starting node of $O_{i,j}$. Each node in $O_{i,j}$ is supposed to simply forward $(A_i,B_j,C_{i,j})$ to the next node.

If the source precomputes the hashes $h(A_i)$ and $h(B_i)$ for some collision-resistant hash function $h$ and these hashes are fetched by the supervisor at the beginning, then the workers assigned to nodes in $I_{X_i}$, $T_{X,i}$, $v_{i,j}$, and $O_{i,j}$ can easily verify the correctness of $A_i$ and $B_j$ with the help of the corresponding hash provided by the supervisor. On top of that, an honest worker in $O_{i,j}$ can run Freivalds' test $\tau$ times on $(A_i,B_j,C_{i,j})$ in order to reduce the probability of accepting $C_{i,j}$ even though $A_i \cdot B_j \not=C_{i,j}$ to at most $1/2^\tau$. If an honest worker accepts $C_{i,j}$, it sends DONE and $h(C_{i,j})$ to the supervisor. Whenever a worker associated with the final node of some $O_{i,j}$ sends $C_{i,j}$ to the target, the target requests $h_{i,j}$, the majority value of the one-way hashes sent to the supervisor by the workers currently assigned to nodes in $O_{i,j}$, and tests if $h(C_{i,j})=h_{i,j}$. If not, it rejects the output and otherwise it accepts it. 

Note that Lemma~\ref{lem:line-majority} implies that once an output is sent to the target for the first time, the majority of the workers assigned to nodes in $O_{i,j}$ is honest, w.h.p., which means that w.h.p. the supervisor will send a hash $h_{i,j}$ to the target that indeed satisfies $h(C_{i,j})=h_{i,j}$. Thus, when the computation terminates, the target can correctly assemble $C$, w.h.p.

We start by stating the sections main theorem.
\begin{theorem}
\label{thm:matrix}
If $\beta + 1/2^\tau \le 1/200$ and $m \ge \sqrt{n} \log n$ then for any adversarial strategy, the supervised matrix multiplication terminates correctly in $O(\log n)$ rounds, w.h.p. Furthermore, the total computational work of the workers is at most $O(m^3)$ on expectation, the maximum computational work for an individual task is $O(m^3/n)$, the total work by the source and the target is at most $O(m^2)$ on expectation and the total work of the supervisor is at most $O(n\log^3n)$.
\end{theorem}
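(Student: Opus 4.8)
The plan is to verify that the DAG $G$ fits the framework of \Cref{sec:dag} and then do the arithmetic bookkeeping. Every node of $G$ has in- and out-degree at most $d=2$; the span is $D=O(\log n)$ (an initial list of $c\log n$ nodes, a tree of depth $\tfrac12\log n$, one multiplication node, a final list of $c\log n$ nodes); and $G$ already carries the initial lists $I_{X,j}$ and final lists $O_{i,j}$ of length $c\log n$ that \Cref{thm:arb} would otherwise attach by hand. The only new feature relative to \Cref{sec:dag} is that an honest worker's verification is not a perfect local check: the stripes $A_i,B_j$ are checked against the hashes supplied by the supervisor (always correct, barring a hash collision, which we treat as negligible), but a product $C_{i,j}$ is checked only by running Freivalds' test $\tau$ times, which fails to reject a wrong product with probability at most $1/2^\tau$.

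\emph{Correctness.} The target can only assemble a wrong $C$ if for some $(i,j)$ a wrong product reaches the last node of $O_{i,j}$ and the target's majority-hash check still passes. I would bound the first event: by \Cref{lem:line-majority}, at every point of the computation at least $c\log n/2$ of the workers occupying $O_{i,j}$ are honest, w.h.p.; a wrong product reaches the last node only if all of these honest workers fail to reject it, which — their Freivalds randomness being independent — happens with probability at most $2^{-c\log n/2}=n^{-c/2}$, and a union bound over the $n$ lists and the $\mathrm{poly}(n)$ rounds keeps this polynomially small for $c$ large. On the complement, after all rollbacks inside $O_{i,j}$ settle every honest worker there holds the correct product $A_iB_j$ and reported $h(A_iB_j)$, and since they form a majority (again \Cref{lem:line-majority}) the supervisor returns $h(A_iB_j)$ as the majority hash, so the target's check passes exactly on the correct value; hence $C=A\cdot B$ is assembled correctly, w.h.p.

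\emph{Runtime.} I would re-run the delay-sequence argument of \Cref{th:dag-runtime} with $d=2$, enlarging the adversary so that it also controls any honest worker in an $O_{i,j}$ that produces a Freivalds false positive. Every rollback or self-loop of a valid delay sequence still traces back to such an ``effectively bad'' node — the bona-fide adversarial case is as in \Cref{th:dag-runtime}, and the only new mechanism is an honest worker in an $O_{i,j}$ that false-positived, forwarded a wrong product, and was caught downstream. A sampled worker is adversarial independently with probability $\beta$, and on any input it false-positives using fresh independent randomness with probability at most $1/2^\tau$, so the effective per-(task,time) bad probability is at most $\beta+1/2^\tau\le 1/200<1/100=(1/(2(2d+1)))^2$, hence at most $(1/(2(2d+1)))^{2+\epsilon}$ for a small enough constant $\epsilon>0$; combining the counting bound $n\cdot(2d+1)^{D+s-1}$ with the bound $(2(\beta+1/2^\tau))^{s/2}$ on a fixed delay sequence being valid then gives termination in $O((D+\log n)/\epsilon)=O(\log n)$ rounds, w.h.p. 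The main obstacle is precisely this re-derivation: one must check that across the distinct task--time pairs of a delay sequence the ``adversarial-or-false-positive'' events remain mutually independent with probability $\le\beta+1/2^\tau$ each (reveal the worker-sampling coins first, the Freivalds coins second), so that the product bound carries over verbatim.

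\emph{Work.} Taking $A_i$ as an $(m/k)\times m$ row-stripe and $B_j$ as an $m\times(m/k)$ column-stripe with $k=\sqrt n$, each $v_{i,j}$ computes the $(m/k)\times(m/k)$ block $C_{i,j}$ in time $O(m^3/k^2)=O(m^3/n)$, the maximum computational work of any task (a forwarding node moves a stripe of size $O(m^2/\sqrt n)$, which is at most $O(m^3/n)$ since $\sqrt n\le m$, and each of the $\tau$ Freivalds runs costs $O(m^2/\sqrt n)$). Over the $n$ products this sums to $O(m^3)$; the $\Theta(n)$ tree- and $O(\sqrt n\log n)$ list-forwarding nodes add $O(\sqrt n\,m^2+m^2\log n)\le O(m^3)$, and the $\Theta(n\log n)$ output-list nodes add $O(n\log n\cdot m^2/\sqrt n)=O(\sqrt n\,m^2\log n)\le O(m^3)$, using $m\ge\sqrt n\log n$ throughout. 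Since each task is executed $O(1)$ times in expectation — a DAG analogue of \Cref{th:line-middle}, which goes through as in the path case because the hash- and Freivalds-based verification localizes rollbacks — the expected total work remains $O(m^3)$. For I/O, \Cref{th:dag_sendingwork} shows the source sends each of the $2k$ stripes of size $m^2/k$ only $O(1)$ times in expectation, for $O(m^2)$ total plus $O(m^2)$ to precompute the hashes, and \Cref{th:dag_receivingwork} shows the target receives each of the $n$ blocks of size $m^2/n$ only $O(1)$ times in expectation, for $O(m^2)$ total plus $O(n)\le O(m^2)$ majority-hash requests. Finally, the supervisor only ever touches the $\Theta(n\log n)$ nodes and edges of $G$ over $O(\log n)$ rounds, each processed reply carrying an $O(\log n)$-bit hash, for $O(n\log^3 n)$ work.
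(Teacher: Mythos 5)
Your proposal is correct and follows essentially the same route as the paper: it invokes the delay-sequence runtime bound of \Cref{th:dag-runtime} with the effective bad-worker probability $\beta+1/2^\tau$ and $d=2$, uses \Cref{lem:line-majority} for the correctness of the majority hash on the output lists, applies \Cref{th:dag_sendingwork} and \Cref{th:dag_receivingwork} for the source and target, and performs the same per-task work accounting (with the paper handling the cheap forwarding tasks via the worst-case $O(\log n)$ executions rather than your blanket $O(1)$-expected-executions claim, which is only needed for the $v_{i,j}$ nodes). The extra detail you supply on Freivalds false positives surviving the honest majority of an output list, and on the independence bookkeeping in the delay-sequence argument, is a refinement of steps the paper treats more tersely, not a different approach.
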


We continue by bounding the algorithm's runtime in communication rounds. If we consider a worker to be adversarial if an adversarial worker is chosen or an honest worker incorrectly accepts some matrix product, Lemma~\ref{th:dag-runtime} immediately implies (choose $\epsilon>0$ so that $(\nicefrac{1}{2(2d+1)})^{2+\epsilon}=1/200$):

\begin{lemma} \label{th:mm-runtime}
If $\beta+1/2^\tau \le 1/200$ then for any adversarial strategy, the supervised matrix multiplication terminates within $O(\log n)$ rounds, w.h.p.
\end{lemma}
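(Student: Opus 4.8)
The plan is to obtain Lemma~\ref{th:mm-runtime} from Lemma~\ref{th:dag-runtime} by a relabeling of the sampled workers. First I would record the parameters of the DAG $G$ in Figure~\ref{fig:matrixmult}: every node has in- and outdegree at most $2$, so $d=2$; a longest directed path runs through one initial list $I_{X,j}$ (length $c\log n$), a tree $T_{X,i}$ (depth $\tfrac12\log n$), a single multiplication node $v_{i,j}$, and one output list $O_{i,j}$ (length $c\log n$), so the span is $D=O(\log n)$; and since $|V|=\Theta(n\log n)$, we have $\log_d|V|=\Theta(\log n)$, so that ``w.h.p.\ in $|V|$'' coincides with ``w.h.p.\ in $n$''. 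Moreover $G$ is a leveled network, so it falls in the scope of Section~\ref{sec:dag}.

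Next I would call a sampled worker \emph{effectively adversarial} if it is actually adversarial, or if it is honest but assigned to a node of some $O_{i,j}$, receives a triple $(A_i,B_j,C_{i,j})$ with $A_i\cdot B_j\ne C_{i,j}$, and all $\tau$ independent Freivalds runs accept it anyway. Assuming $h$ is collision-resistant, an honest worker that is not effectively adversarial never accepts a corrupted input (a wrong matrix value fails its hash check, a wrong product is caught by some Freivalds run), and so never issues a spurious REJECT and never forwards corrupted data; hence in the relabeled process every self-loop and every rollback is caused by an effectively adversarial worker --- a self-loop by a silent one, a rollback by one of the two candidates identified in the proof of Lemma~\ref{th:dag-runtime} --- with the sole exception of a self-loop created by a target rejection (which I treat in the last paragraph). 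A single sampled worker is effectively adversarial with probability at most $\beta+1/2^\tau$ (actually adversarial: at most $\beta$; conditioned on being honest, all $\tau$ Freivalds runs accepting a wrong product: at most $1/2^\tau$, by the Freivalds lemma and the independence of the $\tau$ coins). Crucially, the extra randomness that makes an honest worker effectively adversarial is drawn freshly when the worker is used, independently of the sampling outcomes and of the other workers' coins, and is not available to the real adversary in advance; so the effective adversary is still oblivious to the supervisor's future choices, and the ``effectively adversarial'' events attached to the distinct task-time pairs of a delay sequence are still mutually independent.

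With this in hand, re-running the counting in the proof of Lemma~\ref{th:dag-runtime} with $\beta$ replaced by $\beta+1/2^\tau$ bounds the probability that a fixed valid $s$-delay sequence exists by $(2(\beta+1/2^\tau))^{s/2}$, and the union bound over the $|V|\cdot(2d+1)^{D+s-1}$ sequences goes through unchanged. Choosing the constant $\epsilon>0$ with $(\nicefrac{1}{2(2d+1)})^{2+\epsilon}=1/200$ --- i.e.\ $\epsilon=\log_{10}2$, using $2(2d+1)=10$ --- the hypothesis $\beta+1/2^\tau\le1/200$ is exactly the hypothesis of Lemma~\ref{th:dag-runtime} for the effective adversary, which then yields a runtime of $O((D+\log_d|V|)/\epsilon)=O(\log n)$ rounds, w.h.p.

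The step I expect to be the main obstacle is the target-rejection case, because the target here is \emph{not} a perfect verifier as it was in the abstract setting of Lemma~\ref{th:dag-runtime}: it accepts the output of a final node of $O_{i,j}$ only if it matches the majority hash $h_{i,j}$ of the workers currently assigned to $O_{i,j}$. A spurious rejection of a \emph{correct} output from an honest final worker --- which could occur if a majority of those workers were compromised --- would create a self-loop that cannot be charged to any effectively adversarial worker, breaking the delay-sequence argument. To close this I would argue, as in the proof of Lemma~\ref{lem:line-majority}, that over the $O(\log n)$ rounds of the computation only $O(\log n)$ worker selections fall on each length-$(c\log n)$ list $O_{i,j}$, of which at most a $\tfrac13$-fraction are effectively adversarial, w.h.p.\ by a Chernoff bound, so that at every round a strict majority of the workers currently on $O_{i,j}$ are honest and not Freivalds-fooled and therefore report the hash of the correct $C_{i,j}$; then the target rejects only genuinely wrong outputs and behaves exactly like the perfect verifier assumed in Lemma~\ref{th:dag-runtime}. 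This argument is mildly self-referential --- the $O(\log n)$ runtime bound is used to bound the number of selections on each $O_{i,j}$, which keeps the target well-behaved, which is in turn needed for the runtime bound --- so I would make it rigorous by one union bound over all $O_{i,j}$ and all rounds (or a short induction on the round number); everything else is a direct substitution into the bound of Lemma~\ref{th:dag-runtime}.
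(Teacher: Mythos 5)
Your proposal is correct and takes essentially the same route as the paper, which simply declares a worker ``adversarial'' if it is actually adversarial or is an honest worker that incorrectly accepts a wrong matrix product, and then invokes Lemma~\ref{th:dag-runtime} with failure probability $\beta+1/2^\tau$, $d=2$, and $\epsilon$ chosen so that $(\nicefrac{1}{2(2d+1)})^{2+\epsilon}=1/200$. The target-rejection issue you flag as the main obstacle is handled in the paper by exactly the argument you sketch, via Lemma~\ref{lem:line-majority}, in the discussion preceding Theorem~\ref{thm:matrix} rather than inside the proof of the runtime lemma itself.
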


Next, we bound the work of the source, the target, the supervisor, and the workers. From Lemma~\ref{th:dag_sendingwork} the following result follows.

\begin{corollary}
If $\beta \le 1/12$ then for any adversarial strategy and for any of the lists $I_{X,i}$, the source just needs to send the input $O(1)$ times on expectation.
\end{corollary}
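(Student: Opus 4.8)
Each list $I_{X,i}$ is, structurally, an initial list of exactly the kind added to $G'$ in Section~\ref{sec:dag}: it has length $c\log n$ and its only job is to forward $X_i$ from the source to the root of the tree $T_{X,i}$. Hence the statement is Lemma~\ref{th:dag_sendingwork} applied to the DAG of Figure~\ref{fig:matrixmult}, and the plan is to replay that argument here. First I would fix one list $L=I_{X,i}$ and note, via the invariant on the finished set $F$, that the rest of the DAG can influence how often the source must resend $X_i$ only by causing the last node of $L$ to be removed from $F$ (through a REJECT reaching, or a rollback propagating to, the successor of that node, namely the root of $T_{X,i}$). Restricted to $L$, the supervisor's behaviour is therefore exactly that of the scheduling strategy on a path of length $c\log n$ whose final output is occasionally rejected by a ``target'', and this is dominated by the worst case analysed in Lemma~\ref{lem:advsupervisor}, where every such output is rejected.

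With this reduction in place, the remaining steps are routine. Lemma~\ref{th:mm-runtime} guarantees that the whole computation terminates within $O(\log n)\le n^{c-1}$ rounds, w.h.p.\ (for $c$ chosen large enough, as is already assumed for the list lengths). On this time horizon, Lemma~\ref{lem:advsupervisor} bounds the expected number of times the source sends $X_i$ to the starting node of $L$ by $1+O(\beta)=O(1)$, using $\beta\le 1/12$. Finally, in the complementary event that the computation has not finished after $n^{c-1}$ rounds --- which has probability at most $n^{-(c-1)}$ --- the number of rounds, and hence the number of input transmissions, is bounded by a fixed polynomial in $n$, so this event contributes only $n^{-(c-1)}\cdot\poly(n)=o(1)$ to the expectation. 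Summing up yields the claimed $O(1)$ bound.

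\textbf{Main obstacle.} The one point that needs care is the domination step, because the matrix-multiplication DAG branches (the trees $T_{X,i}$, and the indegree-$2$ nodes $v_{i,j}$, each fed by an $A$-tree and a $B$-tree), so a rollback that eventually reaches $L$ may be triggered by an intricate pattern of downstream REJECTs. I would handle this exactly as in the proofs of Lemmas~\ref{th:dag_sendingwork} and \ref{th:dag_receivingwork}: project the execution onto the task--time pairs lying on $L$, observe that the supervisor's reaction to those pairs depends only on the messages of the workers on $L$ together with at most one external ``reject'' signal on the last node of $L$, and note that Lemma~\ref{lem:advsupervisor} already assumes the most pessimistic such signal. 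Everything else is bookkeeping, so I do not expect this corollary to require any genuinely new argument beyond what is already available.
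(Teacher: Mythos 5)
Your proposal is correct and follows essentially the same route as the paper: the paper derives this corollary directly from Lemma~\ref{th:dag_sendingwork}, whose proof likewise observes that the worst case for a list $I_{X,i}$ is that the output of its last task is always rejected, reducing to Lemma~\ref{lem:advsupervisor} on the runtime horizon guaranteed w.h.p. Your additional care about the low-probability tail event and the domination step only makes explicit what the paper leaves implicit.
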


Thus, the expected work of the source to deliver the inputs is $O(m^2)$.

Applying Lemma~\ref{th:dag_receivingwork} to the lists $O_{i,j}$ we also get:

\begin{corollary}
If $\beta \le 1/3$ then for any adversarial strategy, the expected work for the target to receive the outputs from the workers is $O(m^2)$.
\end{corollary}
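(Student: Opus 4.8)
The plan is to apply Lemma~\ref{th:dag_receivingwork} separately to each of the $n$ final lists $O_{i,j}$, $i,j \in \{1,\ldots,k\}$, and to sum the resulting per-list bounds by linearity of expectation. First I would observe that the DAG $G$ has span $O(\log n)$ --- the initial lists $I_{X,j}$ and the output lists $O_{i,j}$ contribute $O(\log n)$ nodes each, and the trees $T_{X,i}$ have depth $\log k = O(\log n)$ --- so each $O_{i,j}$, which has length $c\log n$, plays exactly the role of a ``final list'' in the sense of Section~\ref{sec:dag}: it is a directed path appended to a final node $v_{i,j}$, its workers only forward the matrix they receive, and whenever the last worker is honest it forwards $C_{i,j}$ to the target, which accepts iff $h(C_{i,j})$ matches the majority hash $h_{i,j}$ held by the supervisor.

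Next I would invoke Lemma~\ref{th:dag_receivingwork} for each $O_{i,j}$: since $\beta \le 1/3$, under any adversarial strategy the target receives $C_{i,j}$ from the last node of $O_{i,j}$ at most $O(1)$ times in expectation. The only hypothesis of that lemma that is not immediate here is that the total runtime stays within a constant factor of the list length; this follows from Lemma~\ref{th:mm-runtime}, which bounds the total runtime by $O(\log n)$ w.h.p., i.e., $O(1)$ times the length $c\log n$ of $O_{i,j}$. Consequently, as in the proof of Lemma~\ref{th:dag_receivingwork}, between any two times the computation inside $O_{i,j}$ reaches the target at least $c\log n$ rounds must elapse, so there are only $O(1)$ such ``fresh attempts'' w.h.p., each contributing $O(1)$ expected receptions. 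As for the cost of one reception, $C_{i,j}$ is one of the $n$ blocks tiling the $m \times m$ product $C$ and hence has $O(m^2/n)$ entries, so one transmission is $O(m^2/n)$ communication work, and the accompanying check costs the target $O(m^2/n)$ to recompute $h(C_{i,j})$ plus $O(\log n)$ to receive $h_{i,j}$. Summing over the $n$ lists, the expected work of the target for receiving outputs is $\sum_{i,j} O(1) \cdot O(m^2/n) = O(m^2)$, and the hash-related terms add $O(m^2) + O(n \log n) = O(m^2)$, using $m \ge \sqrt{n}\log n$.

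The genuinely delicate point --- that a rollback originating elsewhere in the DAG can wipe out all progress inside a final list, so the list cannot be analysed in complete isolation --- has already been discharged by Lemma~\ref{th:dag_receivingwork}. The main thing left to verify for this corollary is therefore only that that lemma's quantitative hypotheses transfer to the matrix DAG, namely that the list length and the total runtime are both $\Theta(\log n)$; everything else is the routine bookkeeping above.
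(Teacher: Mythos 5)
Your proposal is correct and follows exactly the paper's route: the paper derives this corollary by directly applying Lemma~\ref{th:dag_receivingwork} to the lists $O_{i,j}$, which is precisely your argument, just with the per-reception cost of $O(m^2/n)$ per block and the summation over the $n$ lists spelled out explicitly. The extra checks you make (that the runtime bound of Lemma~\ref{th:mm-runtime} supplies the hypothesis needed for the final-list analysis, and that the hash-related work is dominated) are sound and consistent with the paper.
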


\begin{lemma}
    If $\beta \le 1/12$ then for any adversarial strategy, the expected work for the supervisor is $O(n\log^3n)$.
\end{lemma}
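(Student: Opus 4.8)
The plan is to factor the supervisor's total work as (work per round) $\times$ (number of rounds), bounding the former deterministically and the latter in expectation.

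First I would bound the per-round work. In any round the supervisor only acts on the tasks currently in the wavefront $W_F$, and $|W_F| \le |V| = \Theta(n\log n)$. For each such task it makes one call to the black-box sampling mechanism, introduces the returned worker to the at most two workers assigned to its predecessors, and processes one reply (a DONE together with a hash, or a REJECT($R$) with $|R| \le 2$). Everything exchanged here — worker identifiers, task labels, hashes $h(\cdot)$ — fits in $O(\log n)$ bits: a collision-resistant hash with $O(\log n)$ output bits avoids collisions among the $\poly(n)$ relevant matrices w.h.p., and since only $\tilde O(n)$ workers are ever selected, their identifiers also fit in $O(\log n)$ bits. Hence each task costs the supervisor $O(\log n)$. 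Additionally, whenever the target asks for the majority hash $h_{i,j}$ of a list $O_{i,j}$, the supervisor computes a majority over $c\log n$ values of $O(\log n)$ bits, at cost $O(\log^2 n)$; with $n$ such lists this adds at most $O(n\log^2 n)$ in any one round. Maintaining $F$ and $W_F$ (including pruning after a REJECT) costs $O(1)$ per task-execution, and fetching the $O(\sqrt n)$ input hashes from the source once costs $O(\sqrt n\log n)$; both are absorbed. So the supervisor does $O(n\log^2 n)$ work per round, and this holds in every round no matter what the adversary does.

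Next I would bound the number of rounds. By Lemma~\ref{th:mm-runtime} the computation terminates within $O(\log n)$ rounds w.h.p. To turn this into an expectation bound, I would reuse the tail estimate established in the proof of Lemma~\ref{th:dag-runtime}, namely $\Pr[\text{runtime} \ge D+s] \le n\cdot(2d+1)^{D-\epsilon s/2}$; this is at most $1$ until $s$ reaches a threshold of order $(D+\log n)/\epsilon$ and then decays geometrically, so summing over $s$ (with $D = O(\log n)$ here) gives $\E[\text{number of rounds}] = O(\log n)$. Since the per-round work is deterministically at most $O(n\log^2 n)$, the total work is at most $O(n\log^2 n)$ times the number of rounds, hence at most $O(n\log^3 n)$ in expectation.

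The step I expect to be the main obstacle is the passage from ``$O(\log n)$ rounds w.h.p.'' to ``$O(\log n)$ rounds in expectation'': one has to check that the runtime distribution really has a geometric tail (not merely a polynomially small tail at the critical value of $s$) so that the expectation is finite and of the right order, and one relies on the per-round work being bounded deterministically, independently of all randomness, to justify multiplying the two bounds. A secondary subtlety is that the runtime bound invoked here formally requires the hypothesis $\beta + 1/2^\tau \le 1/200$ of Theorem~\ref{thm:matrix} rather than merely $\beta \le 1/12$, so this lemma is to be read together with that stronger assumption, just like the neighbouring corollaries.
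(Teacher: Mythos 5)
Your proposal is correct in its overall structure and lands on the same decomposition as the paper: the runtime bound (Lemma~\ref{th:mm-runtime} / Lemma~\ref{th:dag-runtime}) caps the number of task executions at $O(\log n)$ per task, i.e.\ $O(n\log^2 n)$ overall, and one multiplies by the supervisor's cost per execution. You organize this as (per-round work)$\,\times\,$(rounds) while the paper organizes it as (tasks)$\,\times\,$(assignments per task)$\,\times\,$(cost per assignment), but these are the same count. Two points of comparison are worth noting. First, your explicit conversion from the w.h.p.\ runtime bound to an expected runtime via the geometric tail of the delay-sequence estimate is a genuine improvement in rigor: the paper's proof silently plugs the w.h.p.\ bound on the number of assignments into a claim about \emph{expected} work, and your observation that the hypothesis really needed is $\beta+1/2^\tau\le 1/200$ rather than $\beta\le 1/12$ is also accurate. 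Second, and this is the one shaky step in your argument: you justify $O(\log n)$-bit hashes by a random-collision (birthday) argument over $\poly(n)$ matrices, but the matrices $C_{i,j}$ forwarded along the lists $O_{i,j}$ are chosen by adversarial workers, who can search for a colliding $C'\neq C_{i,j}$ in $\poly(n)$ time if the digest space has only $\poly(n)$ elements. This is why the paper insists the hash size be $\Omega(\log^2 n)$ to be collision-resistant against the adversary, and it is precisely this longer digest that supplies the third $\log$ factor in the paper's accounting (your third $\log$ instead comes from the number of rounds). With the correct hash length your per-round figure grows to $O(n\log^3 n)$, so to recover the stated $O(n\log^3 n)$ total you would need to charge the hash transmissions per task rather than per round, as the paper effectively does. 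The fix is routine, but as written this step does not go through in the adversarial model.
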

\begin{proof}
    There are $O(n\log n)$ tasks in total in the task graph. Due to the bound on the number of communication rounds from \Cref{th:mm-runtime}, the supervisor assigns no more than $O(\log n)$ workers to each task. As it makes only a constant number of introductions for each of them, this yields a total work of $O(n\log^2n)$ for assigning and introducing workers. On top of that, the supervisor has to provide a constant number of hash values for each task assignment. Since the hash size should be $O(\log^2 n)$ or larger to be collision-resistant, the work for sending the hash values is $O(n\log^3 n)$.    Thus, in total, the work of the supervisor is bounded by $O(n\log^3n)$.
\end{proof}

Next, we show the following work bound for the honest workers, which is asymptotically optimal.

\begin{lemma}
If $\beta + 1/2^\tau \le 1/200$ and $m \ge \sqrt{n} \log n$ then the total expected computational work for the workers is $O(m^3)$ and the maximum computational work for an individual task is $O(m^3/n)$.
\end{lemma}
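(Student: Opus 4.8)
The plan is to combine two ingredients: an upper bound on the work spent in a \emph{single} execution of each type of task (which already gives the $O(m^3/n)$ per-task bound), and an upper bound of $O(1)$ on the \emph{expected number of times each task is executed by an honest worker}; summing these over all tasks then yields the total, with the hypothesis $m\ge\sqrt n\log n$ being exactly what makes the logarithmic overhead coming from the lengths of the lists and the depths of the trees negligible. (As in the path case, only the honest workers' work is bounded; adversarial workers may compute arbitrarily.)

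For the first ingredient, recall $k=\sqrt n$, so $A_i$ has size $(m/k)\times m$, $B_j$ has size $m\times(m/k)$, and $C_{i,j}=A_iB_j$ has size $(m/k)\times(m/k)$. A worker executing a multiplication task $v_{i,j}$ verifies $h(A_i)$ and $h(B_j)$ ($O(m^2/k)$ each, using a linear-time hash), computes $C_{i,j}$ naively ($O((m/k)^2\cdot m)=O(m^3/n)$), and computes $h(C_{i,j})$ ($O(m^2/n)$); since $m\ge\sqrt n$ the first term dominates, so one execution costs $O(m^3/n)$. A worker on a forwarding node in some $I_{X,i}$ or $T_{X,i}$ verifies one hash of an object of size $m^2/k$, i.e.\ $O(m^2/\sqrt n)$; a worker on a forwarding node in some $O_{i,j}$ additionally runs Freivalds' test $\tau=O(1)$ times, each run costing $O(m\cdot m/k)=O(m^2/\sqrt n)$ for the matrix--vector products, so again $O(m^2/\sqrt n)$. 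Hence every single task execution costs $O(m^3/n)$, attained only by the $n$ multiplication tasks, while each of the $O(n\log n)$ forwarding tasks costs only $O(m^2/\sqrt n)$.

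For the second ingredient, I would show that for every task $v\in V'$ and every adversarial strategy the expected number of executions of $v$ by honest workers is $O(1)$. On an output list $O_{i,j}$ this is close to the path case of Lemma~\ref{th:line-middle}: once an honest worker executes a node of $O_{i,j}$ with a correct input, that node is re-executed only if it or one of its ancestors (an earlier node of $O_{i,j}$, the task $v_{i,j}$, or an ancestor of $v_{i,j}$) is removed from $F$; a rollback reaching $r$ steps back requires $\Omega(r)$ adversarial worker selections, so its probability decays geometrically and the contributions form a convergent series, while---exactly as in Lemma~\ref{lem:advsupervisor} combined with the $O(\log n)$-round runtime bound of Lemma~\ref{th:mm-runtime}---rollbacks that reach all the way into the trees contribute only $O(1/n)$. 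The input lists $I_{X,i}$ are symmetric. For a tree node $u$ at depth $\ell$ of some $T_{X,i}$, and likewise for a multiplication task $v_{i,j}$, $u$ is re-executed only if $u$ or one of its $\ell+O(\log n)=O(\log n)$ ancestors (the nodes on the root-to-$u$ path and the nodes of $I_{X,i}$) is rejected; a direct rejection of $u$ costs one adversarial selection (at $u$ or at one of its $\le 2$ children) and is a ``local'' event, whereas invalidating the progress already made in a sibling subtree of $u$ requires rejecting an ancestor $r$ levels higher, which needs $\Omega(r)$ adversarial selections. Thus the per-ancestor contributions again form a convergent geometric series of total value $O(1)$---the same phenomenon that makes $\prod_{i\ge 2}(1+[\beta/(1-\beta)]^i)=1+O(\beta^2)=O(1)$ in the proof of Lemma~\ref{th:line-output}, and one that must survive the fact that $\beta$ is only a small constant rather than a decreasing function of $n$.

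Combining the two ingredients, the expected total computational work of the honest workers is
\[
 n\cdot O(1)\cdot O(m^3/n)+O(n\log n)\cdot O(1)\cdot O(m^2/\sqrt n)=O(m^3)+O(\sqrt n\,\log n\cdot m^2)=O(m^3),
\]
using $m\ge\sqrt n\log n$ in the last step, and the maximum work for a single task is $O(m^3/n)$. I expect the second ingredient to be the main obstacle: unlike on a path, one adversarial worker entering the wavefront at an internal tree node can issue a REJECT of its parent and thereby undo all of the (up to $\Theta(\sqrt n)$ multiplication tasks' worth of) progress in a sibling subtree, so the delicate point is to prove that such deep and wide rollbacks are geometrically rare, so that the per-task re-execution expectation stays $O(1)$ despite each task having $\Theta(\log n)$ ancestors; this is precisely where the $O(\log n)$-round runtime bound and the list-padding of $G'$ (as exploited in Lemmas~\ref{th:dag_sendingwork} and \ref{th:dag_receivingwork}) need to be used carefully.
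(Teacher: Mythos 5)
Your first ingredient (per-execution costs: $O(m^3/n)$ for each $v_{i,j}$, $O(m^2/\sqrt n)$ for each forwarding task in the lists and trees) matches the paper and is fine, and so is the final arithmetic \emph{given} your second ingredient. The gap is in the second ingredient itself: you need, and do not prove, that \emph{every} task --- in particular every internal node of the trees $T_{X,i}$ and every node of the $\Theta(n\log n)$ forwarding lists --- is executed only $O(1)$ times in expectation. Your sketch for the tree nodes rests on the claim that invalidating a node $u$ by rejecting an ancestor $r$ levels above it ``needs $\Omega(r)$ adversarial selections.'' The rejection event itself needs only one: a single adversarial worker sitting in a lagging sibling subtree can, upon entering the wavefront, send one REJECT of a common ancestor and thereby evict from $F$ the entire subtree below that ancestor, including $u$. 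The $\Omega(r)$ selections would have to be charged to the fact that the sibling subtree is lagging $r$ levels behind in the first place, which is a delay-sequence-type statement you would have to formulate and prove (and which interacts with the fact that each such rejection spawns fresh worker selections, i.e.\ fresh opportunities for the adversary). You flag this as the main obstacle, but flagging it does not discharge it, and your accounting leaves no slack to avoid it: with $O(n\log n)$ forwarding tasks at $O(m^2/\sqrt n)$ each, the total is $O(\sqrt n\,\log n\cdot m^2)=O(m^3)$ exactly at the threshold $m=\sqrt n\log n$, so you cannot afford even an extra logarithmic factor in the execution count of those tasks.

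The paper takes a less ambitious route. It invokes the per-task expectation bound (Lemma~\ref{th:line-middle}) only for the $n$ expensive multiplication tasks $v_{i,j}$, where a $1+O(\beta)$ expected re-execution bound is genuinely needed because each execution already costs $O(m^3/n)$. For the remaining cheap tasks it uses the crude observation that, by the $O(\log n)$-round runtime bound of Lemma~\ref{th:mm-runtime}, each task is executed at most $O(\log n)$ times w.h.p., and lets the hypothesis $m\ge\sqrt n\log n$ absorb the resulting logarithmic overhead in its summation. If you want to keep your plan, the pragmatic fix is this hybrid: prove the $O(1)$ expected-execution bound only where the work budget forces you to (the $v_{i,j}$, and the list nodes if your accounting of them is tight), and fall back on the w.h.p.\ execution-count bound wherever the per-execution cost sits at least a logarithmic factor below the budget.
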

\begin{proof}
Suppose for now that all workers are honest. Then we get:
\begin{itemize}
\item The total work for each $I_{X,i}$ and $O_{i,j}$ is $O((m/k) \cdot m \log n) = O((m^2/\sqrt{n}) \log n)$.
\item The total work for each $T_{X,i}$ is $O(k (m/k) \cdot m) = O(m^2)$.
\item The work for each $v_{i,j}$ is $O(m \cdot (m/k)^2) = O(m^3/n)$.
\end{itemize}
Summing up these work bounds, we obtain a total work of
\[
  O(k \cdot [(m^2/\sqrt{n}) \log n] + k \cdot m^2 + k^2 \cdot m^3/n) = O(m^3)
\]
if $m \ge \sqrt{n}$. Now, consider the case that we have a $\beta$-fraction of adversarial workers. If $\beta \le 1/12$ then Lemma~\ref{th:line-middle} implies that the expected amount of work that honest workers have to invest for all $v_{i,j}$ is still $O(m^3)$. From the runtime bound in Lemma~\ref{th:mm-runtime} we know that each of the remaining nodes can be executed at most $O(\log n)$ times if $\beta + 1/2^\tau \le 1/200$, which gives a worst-case work bound of
\[
  O(k \cdot [(m^2/\sqrt{n}) \log^2 n] + k \cdot m^2 \log n) = O(m^2 \sqrt{n} \log n)
\]
for the remaining nodes. This is still bounded by $O(m^3)$ if $m \ge \sqrt{n} \log n$. The maximum computational work for an individual task follows from the fact that $O(m^3/n)$ work is needed for each $v_{i,j}$ and the work for the other tasks is much lower.
\end{proof}

From the above lemmas and corollaries, we can conclude \Cref{thm:matrix}

\section{Supervised Mergesort} \label{sec:mergesort}

In this section, we consider the problem that the source has $m$ data items that it wants to sort with the help of the workers where at most a $\beta$-fraction of the workers is malicious.

Various sorting networks have already been proposed for parallel sorting like Batcher's odd-even and bitonic sorting networks \cite{Batcher68} or the AKS sorting network \cite{AjtaiKS83}. However, the proposed sorting networks with a simple topology have a depth of $\Omega(\log^2 n)$ while the AKS sorting network is considered to be impractical because the constant factor in its depth bound of $\bigO(\log n)$ is extremely large \cite{Paterson90}. Furthermore, these sorting networks are just based on simple comparators and not meant for adversarial behavior. Thus, we employ a different network with more complex operations within the nodes that resembles the classical mergesort algorithm, which might be of independent interest.

The goal of this section is to prove the following theorem:
\begin{theorem} \label{th:merge}
If $\beta\leq 1/200$ then for any adversarial strategy, the supervised mergesort algorithm terminates within $\bigO(\log n)$ rounds, w.h.p. Furthermore, the source and the target just need $O(m)$ computational and communication work on expectation, the supervisor needs $O(n\log n)$ work, w.h.p, the total computational work performed by the workers is $O(m \log m)$ and the total communication work is $O(m \log n)$ on expectation, and the maximum work of a task is $O((m/n) (\log m/\log n + \log n))$, w.h.p. 
\end{theorem}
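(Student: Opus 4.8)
The plan is to mirror the matrix-multiplication development of Section~\ref{sec:matrix}: fix a constant-degree task graph $G$ of span $O(\log n)$ that realizes a mergesort together with a lightweight local check, read off the runtime and the source/target I/O bounds from the generic DAG results of Section~\ref{sec:dag}, and finally bound the remaining (supervisor and worker) work by combining the $O(\log n)$-round runtime with a per-task re-execution bound in the spirit of Lemma~\ref{th:line-middle}. Concretely, I would use a leveled network with $D=O(\log n)$ levels and $\Theta(n)$ nodes per level, hence $|V|=\Theta(n\log n)$ and indegree/outdegree $d=O(1)$. The source splits its $m$ items into $n$ blocks of $m/n$; level $0$ turns each block into a sorted run (spread across a constant number of the level-$0$ nodes, which is where the $(m/n)\cdot\log m/\log n$ term in the per-task bound originates), and the remaining $O(\log n)$ levels carry out a binary mergesort in which, at each level, the $m$ items are held as several sorted runs, each run stored consecutively across a block of that level's $n$ nodes with $\Theta(m/n)$ items per node. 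A level transition performs, for every pair of runs to be merged, a \emph{parallel merge}: the participating nodes first determine, via $O(\log m)$ rank/binary-search queries, how to cut the merged run into equal $\Theta(m/n)$-size slices for the next level, and then each emits its slice. A node thus does $O(m/n+\log m)$ computational and $O(m/n)$ communication work per level; summing over $\Theta(n)$ nodes and $O(\log n)$ levels, and using $m\ge n\log n$, yields total computational work $O(m\log m)$, total communication work $O(m\log n)$, and a maximum single-task work of $O((m/n)(\log m/\log n+\log n))$. When all workers are honest, the level-$D$ nodes jointly hold the sorted sequence, which the target assembles.

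For the framework of Section~\ref{sec:dag} to apply I need an honest worker assigned to a task to be able to check the outputs of its predecessors from only $O(1)$ digest words per incoming edge that the supervisor holds, the digests of the source's blocks being fetched from the honest source at the start. For sorting I would attach to each run/slice (i) its first and last element and the assertion that it is internally sorted, verifiable in time linear in the slice and hence within the per-task budget, and (ii) a multiset fingerprint, e.g.\ $\varphi(S)=\prod_{a\in S}(x-a)$ over $\mathbb{Z}_p$ at a random point $x$ and prime $p=\poly(n,m)$ chosen by the source; since $\varphi$ is multiplicative under disjoint union, a merge node can check that the product of its output slices' fingerprints equals the product of its input runs' fingerprints and that boundary elements of consecutive slices are consistent. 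The source precomputes the input fingerprint in $O(m)$ time, all intermediate digests are $O(\log(nm))$ bits, and a union bound over the $O(n\log n)$ tasks and $O(\log n)$ rounds shows that w.h.p.\ no adversarial output ever passes an honest check; equivalently, for any single assignment the probability that an honest worker is fooled is at most an arbitrarily small constant $\gamma$.

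To assemble the bounds, extend $G$ to $G'$ by prepending an initial list of $c\log n$ nodes to each level-$0$ node and appending a final list of $D+\log n=O(\log n)$ nodes to each level-$D$ node, exactly as in Section~\ref{sec:dag}; this preserves $D=O(\log n)$ and $d=O(1)$. Counting an honest-but-fooled worker as adversarial makes the effective per-selection bad probability at most $\beta+\gamma$; picking $\gamma$ small and the constant $\epsilon$ with $(\nicefrac{1}{2(2d+1)})^{2+\epsilon}=\nicefrac{1}{200}$, the hypothesis $\beta\le\nicefrac{1}{200}$ puts us within the regime of Lemma~\ref{th:dag-runtime} (its delay-sequence argument only uses an upper bound on the per-selection bad probability, and the mildly adversary-correlated fingerprint failures are still dominated by that bound), so the computation terminates in $O(D+\log n)=O(\log n)$ rounds w.h.p. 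Lemma~\ref{th:dag_sendingwork} then bounds the expected number of times the source sends each of the $n$ blocks by $O(1)$, giving $O(m)$ expected source work plus the one-time $O(m)$ fingerprint precomputation, and Lemma~\ref{th:dag_receivingwork} bounds the expected number of times the target receives each final slice by $O(1)$, giving $O(m)$ expected target work, with assembly and the final fingerprint check against the source's input fingerprint also costing $O(m)$. For the workers, a DAG analogue of Lemma~\ref{th:line-middle} (each task is executed $1+O(\beta)$ times in expectation) keeps the expected computational work at $O(m\log m)$ and the expected communication work at $O(m\log n)$, with the per-task verification work linear in the data a task touches and hence absorbed, and the single-execution maximum-task bound unchanged. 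Finally, the supervisor makes $O(1)$ introductions and sends $O(1)$ digest words per assignment, and by the runtime bound together with the re-execution bound the total number of assignments is $O(n\log n)$ w.h.p., for $O(n\log n)$ supervisor work. Together these give Theorem~\ref{th:merge}.

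The real work is the task-graph design: squeezing a mergesort into $O(\log n)$ levels of $\Theta(n)$ constant-degree nodes, evenly redistributing each merged run across the next level with only $O(\log m)$ overhead per node, and --- the crux --- doing it so that every merge is locally verifiable from $O(1)$ supervisor-held words via homomorphic multiset-plus-boundary digests. The robustness analysis is then a fairly direct reuse of Section~\ref{sec:dag}; the only genuine subtleties there are that fingerprint failures are not independent of the adversary's choices, so they must be folded into an ``effective $\beta$'' in a way that does not break the delay-sequence counting, and that one must supply the DAG version of the $1+O(\beta)$ per-task execution bound, which Theorem~\ref{thm:arb} does not state explicitly.
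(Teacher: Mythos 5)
Your high-level architecture (leveled $O(\log n)$-depth constant-degree task graph, runtime from the delay-sequence argument of Lemma~\ref{th:dag-runtime}, source/target I/O from Lemmas~\ref{th:dag_sendingwork} and~\ref{th:dag_receivingwork} via initial/final lists) matches the paper, but the two ingredients that carry the actual difficulty are different from the paper's and both have genuine problems. First, the verification mechanism: a multiset fingerprint $\varphi(S)=\prod_{a\in S}(x-a)$ over $\mathbb{Z}_p$ is only sound if the evaluation point $x$ is unknown to the prover, but here every worker must know $x$ to compute the fingerprints of its own output slices, so the adversary knows it too and can forge collisions (e.g.\ replace two items $a,b$ by $a'$ and $b'=x-(x-a)(x-b)(x-a')^{-1}\bmod p$, preserving the product). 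Your union bound over tasks and rounds then collapses, and since you also drop the paper's per-item signatures, an adversarial worker can fabricate or substitute data items undetected until the very end. The paper instead signs each item, attaches its index from a random permutation, and has the supervisor do pure \emph{count} bookkeeping (each worker announces how it splits its items between its two successors, and the supervisor checks the announced counts add up), accepting that a lying split is only caught downstream or at the target. Second, your equal-size-slice parallel merge makes the cut points data-dependent quantities computed at runtime by possibly adversarial workers; neither the supervisor (which never sees data) nor an honest successor (which sees only its own slice) can verify that the cuts are correct. The paper avoids exactly this by having the \emph{source} pick $n$ random quantiles up front, handing them to the supervisor, and routing them through the bit-reversal permutation so that every task's responsibility is a fixed, supervisor-known range---at the price of unequal loads, which then forces the dedicated tail analysis of Lemmas~\ref{lem:mergework} and~\ref{lem:msm:work_distribution_bound}.

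A secondary gap: you invoke a ``DAG analogue of Lemma~\ref{th:line-middle}'' ($1+O(\beta)$ expected executions per task) to get the total work bounds, but no such lemma is proved in the paper and the path argument does not transfer directly (in a DAG, rollbacks arriving from different successors can repeatedly invalidate a task). The paper's Lemma~\ref{lem:msm:peerwork} takes a different route: the w.h.p.\ $O(\log n)$ runtime caps the total number of task executions at $O(n\log n)$, the adversary's worst case is to concentrate all re-executions on the most heavily loaded tasks, and Lemma~\ref{lem:msm:work_distribution_bound} shows there are geometrically few overloaded tasks, so the total stays at $O(m\log n)$ communication and $O(m\log m)$ computation. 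Ironically, if your equal-load design could be made verifiable, the total-executions bound alone would suffice and this refined analysis would be unnecessary---but as proposed, the verification layer does not withstand the adversary, so the theorem is not established.
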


To achieve this, we first describe and analyze the supervised mergesort algorithm without malicious activities. Afterwards, we consider the malicious case and analyze how it can be handled by the supervisor, the source, the target, and the honest workers.

\subsection{Honest Setting}
    
The task graph for this problem is a leveled network where each layer has the same number of tasks. As this number is more relevant for our bounds than the size of the task graph, we use $n$ to denote that number in this section. Let $m=\Omega(n\log n)$. For simplicity, we assume that $n$ divides $m$, and $n$ is a power of $2$.

We start with a high-level description of the algorithm. The data items are partitioned into $n$ sets of size $\nicefrac{m}{n}$, and each of the $n$ initial nodes of the task graph is supposed to sort one of these sets. 

The remainder of the task graph is similar to the merging procedure performed by mergesort, i.e., we will repeatedly merge the sets of data items in each layer, resulting in a doubling of their size. As we want to keep the merging work of the tasks roughly equal, we also double the number of tasks responsible for each set in each layer. After $\log n$ rounds of merging, we have a single sorted set of data items distributed among $n$ tasks and can send the output to the target.

\paragraph*{Detailed Description}

The input $I$ is preprocessed as follows. First, the source randomly permutes the data items in $I$ and assigns an index to each one of them based on its position in that permutation. The index can be used as a tie breaker to ensure that all data items are pairwise distinct, but they will also turn out to be useful for the verification. Next, the source picks a subset of $n$ data items uniformly at random out of $I$, which we call \emph{quantiles}, and sorts them so that $q_0 < q_1 < \ldots < q_{n-1}$. In addition to sending the quantiles to the superviser, for every $j \in \{0,\ldots,n-1\}$, the source sends the data items with indices $j\cdot\nicefrac{m}{n}+1,\dots,(j+1)\cdot\nicefrac{m}{n}$ and quantile $q_{r(j)}$ to initial node $j$, where $r:\{0,1\}^{\log n} \to \{0,1\}^{\log n}$ is the bit reversal permutation, i.e., it maps the binary representation $(b_1\dots b_{\log n})$ of a quantile's index to the binary representation $(b_{\log n}\dots b_1)$. An example of this permutation for $n=8$ is presented in \Cref{fig:permutation}. 

    \begin{figure}[ht!]
        \begin{center}
            \begin{tikzpicture}
    \foreach \y [count=\yi] in {{000,001,010,011,100,101,110,111},{000,100,010,110,001,101,011,111}} {
        \foreach \x [count=\xi] in \y {
            \node[text=black!50] (\x_\yi) at (\xi,1-1.5*\yi) {\x};
        }
    }
    \foreach \y [count=\yi] in {{0,1,2,3,4,5,6,7},{0,4,2,6,1,5,3,7}} {
        \foreach \x [count=\xi] in \y {
            \node (\x_\yi) at (\xi,2.5-2.5*\yi) {\textbf{\x}};
        }
    }
    \foreach \x in {000,001,010,011,100,101,110,111} {
        \draw[->] (\x_1.south) -- (\x_2.north);
    }
    \node[anchor=west] at (8.5,-0.25) {Quantiles};
    \node[anchor=west] at (8.5,-2.25) {Tasks};
\end{tikzpicture}
        \end{center}
        \caption{An example of the bit reversal permutation with $8$ quantiles. Initial node $i$ receives quantile $r(i)$.}
        \label{fig:permutation}
    \end{figure}
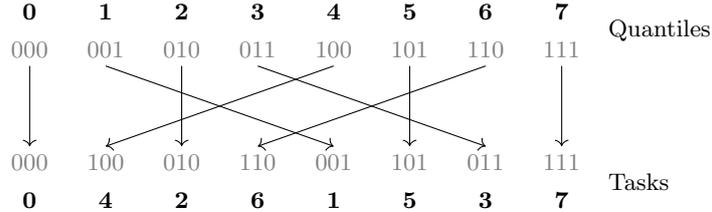

The task graph consists of two parts --- a sorting part and a merging part. The sorting part consists of the first layer of the task graph, called layer 0. The task of each node in that layer is to sort the data items assigned to that node.

The remaining nodes represent the merging part, where we perform a distributed merging of the sets of data items sorted in the sorting part. The merging part consists of $\log n$ layers of $n$ nodes each, numbered from 1 to $\log n$. For every $i \in \{0,\ldots,\log n\}$, let the nodes in layer $i$ be called $v_{i,0},\ldots,v_{i,n-1}$. In each layer $i$, the nodes are partitioned into $n/2^i$ \emph{segments} $S_{i,0},\ldots,S_{i,n/2^i-1}$, where segment $S_{i,j}$ consists of the nodes $v_{i,j \cdot 2^i},\ldots,v_{i,(j+1) \cdot 2^i-1}$. For each $i$ and $j$, our goal is to merge the sorted sets stored in $S_{i,2\cdot j}$ and $S_{i,2\cdot j+1}$ to a sorted set stored in $S_{i+1,j}$. From layer 0 to 1, this is done in the following way.

Note that $q_{r(2j)}<q_{r(2j+1)}$ for every $j$ since $r(2j+1)=r(2j)+n/2$ and we assume the quantiles to be sorted. For each $j$, the supervisor additionally informs $v_{0,2j}$ about $q_{r(2j+1)}$ and $v_{0,2j+1}$ about $q_{r(2j)}$. This allows $v_{0,2j}$ and $v_{0,2j+1}$ to split their sorted sets of data items into two sets belonging to the ranges $[q_{r(2j)},q_{r(2j+1)})$ and $[q_{r(2j+1)},+\infty) \cup (-\infty,q_{r(2j)})$ and to send these to $v_{1,2j}$ and $v_{1,2j+1}$ in segment $S_{i+1,j}$ so that $v_{1,2j}$ contains all data items of $v_{0,2j}$ and $v_{0,2j+1}$ in the range $[q_{r(2j)},q_{r(2j+1)})$ while $v_{0,2j+1}$ contains all data items of $v_{0,2j}$ and $v_{0,2j+1}$ in the range $[q_{r(2j+1)},+\infty) \cup (-\infty,q_{r(2j)})$. Certainly, $v_{1,2j}$ and $v_{1,2j+1}$ just need to merge two sorted sets to make sure that they store a sorted set of data items in the range assigned to them.

For every $i \ge 1$, the merging from layer $i$ to $i+1$ works as follows. We start with an assumption on the setup in layer $i$, which certainly holds for layer 1 from the description above and can then be shown to inductively hold for later layers:

Suppose that the quantiles in $S_{i,j}$ (containing $v_{i,j \cdot 2^i},\ldots,v_{i,(j+1) \cdot 2^i-1}$) are $q_{r(j \cdot 2^i)},\ldots,q_{r((j+1) \cdot 2^i-1)}$, which are equal to the set of quantiles $\{ q_{r(j\cdot 2^i)+k\cdot n/2^i} \mid k \in \{0,\ldots,2^i-1\} \}$ and the same quantiles as those assigned to the initial nodes $v_{0,j \cdot 2^i},\ldots,v_{0,(j+1) \cdot 2^i-1}$. To simplify notation, let $q(i,j,k) = q_{r(j\cdot 2^i)+k\cdot n/2^i}$. Since $q_0<q_1<\ldots<q_{n-1}$, it certainly holds that $q(i,j,k)<q(i,j,k+1)$ for every $k$. Suppose that for every $k\in \{0,\ldots,2^i-1\}$, the node $v_{i,j \cdot 2^i+k}$ in $S_{i,j}$ stores quantiles $q(i,j,k)$ and $q(i,j,k+1 \mod 2^i)$. Further, suppose that each node $v_{i,j \cdot 2^i+k}$ with $k<2^i-1$ stores all data items in the range $[q(i,j,k),q(i,j,k+1))$ that have been assigned to the initial nodes $v_{0,j \cdot 2^i},\ldots,v_{0,(j+1) \cdot 2^i-1}$ and node $v_{i,(j+1) \cdot 2^i-1}$ stores all such data items in the range $[q(i,j,2^i-1),+\infty) \cup (-\infty,q(i,j,0))$. This is also the goal that we want to reach for layer $i+1$. 

To achieve that goal, we make use of the fact that the quantiles of $S_{i,2j}$ and $S_{i,2j+1}$ are perfectly interleaved, i.e., $q(i,2j,k) < q(i,2j+1,k)$ and $q(i,2j+1,k) < q(i,2j,k+1)$ for all $k$ (see also Figure~\ref{fig:task_graph_honest}). This follows from the fact that $r(2j\cdot 2^i)+k\cdot n/2^i < r((2j+1)\cdot 2^i)+k\cdot n/2^i$, $r((2j+1)\cdot 2^i)+k\cdot n/2^i < r(2j\cdot 2^i)+(k+1)\cdot n/2^i$ and the quantiles are sorted. For each $k$, the $k$th node in $S_{i,2j}$ splits its set of data items into two according to quantile $q(i,2j+1,k)$ (additionally provided by the supervisor) and sends the part $<q(i,2j+1,k)$ to the $2k$th node in $S_{i+1,j}$ while sending the part $\ge q(i,2j+1,k)$ to the $2k+1$th node in $S_{i+1,j}$. Similarly, for each $k<2^i-1$, the $k$th node in $S_{i,2j+1}$ splits its set of data items into two according to quantile $q(i,2j,k+1)$ (additionally provided by the supervisor) and sends the part $<q(i,2j,k+1)$ to the $2k+1$th node in $S_{i+1,j}$ while sending the part $\ge q(i,2j,k+1)$ to the $2k+2$th node in $S_{i+1,j}$. Special treatment needs to be given to node $k=2^i-1$ (the last node) of $S_{i,2j+1}$ since it needs to send its set of data items in the range $[q(i,2j+1,k),+\infty) \cup (-\infty,q(i,2j,0))$ to the last node of $S_{i+1,j}$ and the set of data items in the range $[q(i,2j,0),q(i,2j+1,0))$ to the first node of $S_{i+1,j}$. In that way, each node $v_{i+1,j \cdot 2^{i+1}+k}$ with $k \in \{0,\ldots,2^{i+1}-2\}$ will store all data items in the range $[q(i+1,j,k),q(i+1,j,k+1))$ that have been assigned to the initial nodes $v_{0,j \cdot 2^{i+1}},\ldots,v_{0,(j+1) \cdot 2^{i+1}-1}$ and node $v_{i+1,(j+1) \cdot 2^i-1}$ will store all such data items in the range $[q(i+1,j,2^{i+1}-1),+\infty) \cup (-\infty,q(i+1,j,0))$, which will satisfy the assumptions made above for layer $i$ also for layer $i+1$.

An example of the resulting task graph is depicted in \cref{fig:task_graph_honest}. As can be seen there and the description above, the indegree and outdegree of every node of the task graph is at most 2.

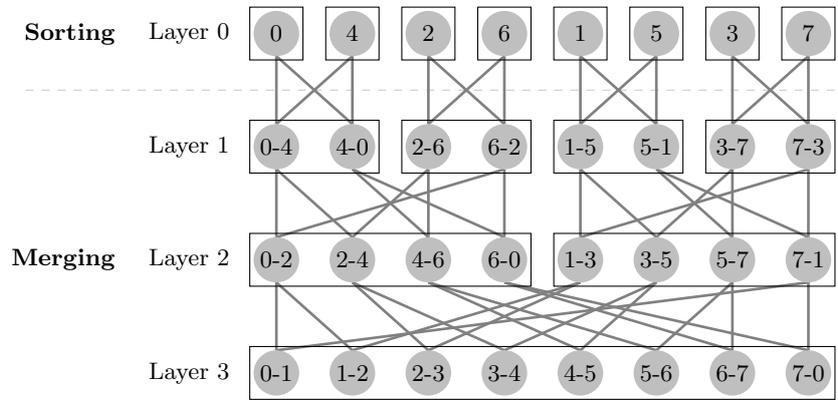
\begin{figure}[ht!]
    \begin{center}
        \begin{tikzpicture}

    \draw[dashed,black!20] (-2.3,-.75) -- (8.5,-.75);

    \foreach \connections in {{1/1/2},{2/1/2},{3/3/4},{4/3/4},{5/5/6},{6/5/6},{7/7/8},{8/7/8}} {
        \foreach \a/\b/\c in \connections {
            \draw[black!50,line width=1px] (\a,-0.3) -- (\b,-1.2);
            \draw[black!50,line width=1px] (\a,-0.3) -- (\c,-1.2);
        }
    }
    \foreach \connections in {{1/1/2},{2/3/4},{3/2/3},{4/1/4},{5/5/6},{6/7/8},{7/6/7},{8/5/8}} {
        \foreach \a/\b/\c in \connections {
            \draw[black!50,line width=1px] (\a,-1.8) -- (\b,-2.7);
            \draw[black!50,line width=1px] (\a,-1.8) -- (\c,-2.7);
        }
    }
    \foreach \connections in {{1/1/2},{2/3/4},{3/5/6},{4/7/8},{5/2/3},{6/4/5},{7/6/7},{8/1/8}} {
        \foreach \a/\b/\c in \connections {
            \draw[black!50,line width=1px] (\a,-3.3) -- (\b,-4.2);
            \draw[black!50,line width=1px] (\a,-3.3) -- (\c,-4.2);
        }
    }
    
    \foreach \y in {1,2,3,4} {
        \foreach \x [count=\xi] in {1,5,3,7,2,6,4,8} {
            \node[circle,fill=black!25,minimum size=.6cm] (\x_\y) at (\xi,1.5-1.5*\y) {};
        }
    }
    
    \foreach \y [count=\yi] in {{0,4,2,6,1,5,3,7},{0-4,4-0,2-6,6-2,1-5,5-1,3-7,7-3},{0-2,2-4,4-6,6-0,1-3,3-5,5-7,7-1},{0-1,1-2,2-3,3-4,4-5,5-6,6-7,7-0}} {
        \foreach \x [count=\xi] in \y {
            \node at (\xi,1.5-1.5*\yi) {\x};
        }
    }

    \foreach \y [count=\yi from 0] in {{1/1,2/2,3/3,4/4,5/5,6/6,7/7,8/8},{1/2,3/4,5/6,7/8},{1/4,5/8},{1/8}} {
        \foreach \x/\xi in \y {
            \draw ($(\x,-1.5*\yi)+(-0.35,-0.35)$) rectangle ($(\xi,-1.5*\yi)+(0.35,0.35)$);
        }
    }

    \foreach \y in {0,1,2,3} {
        \node[anchor=east] at (0.5,-1.5*\y) {Layer \y};
    }

    \node[anchor=east] at (-1,-3) {\textbf{Merging}};
    \node[anchor=east] at (-1,0) {\textbf{Sorting}};

\end{tikzpicture}
    \end{center}
    \caption{The task graph ($n=8$) of the supervised mergesort algorithm in the honest setting. In the first layer, the sets of data items sent by the source are sorted, 
    and in the subsequent layers, the sorted sets are merged.
    The grey circles correspond to tasks and the numbers in them denote the indices of the quantiles known to them and, based on these, the ranges they are responsible for. 
    The boxes represent the segments. 
    }
    \label{fig:task_graph_honest}
\end{figure}

The scheduling strategy used by the supervisor is the same as in the DAG case, i.e., it maintains a set $F$ of \emph{finished} tasks and remembers the last worker assigned to each task in $F$. In each round, for each executable task in the wavefront $W_F$, the supervisor picks a worker uniformly and independently at random to perform that task.

As its postprocessing strategy, the target concatenates the data items it received from each of the final tasks to obtain the output.

\paragraph*{Analysis}

The correctness of our supervised mergesort algorithm follows from the construction: After the algorithm's execution, we are left with a single segment in which the range assigned to node $v_{\log n, j}$ is $[q_j,q_{j+1 \mod n})$ for all $j$ and every node stores all data items of the input set that belong to its range.

Next, we consider the total work the workers have to perform.

\begin{lemma}\label{lem:msh:peerwork}
In the honest setting, the workers perform $O(m\log m)$ computational work and $O(m \log n)$ communication work during the execution of the supervised mergesort algorithm.
\end{lemma}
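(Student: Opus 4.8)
The plan is to charge the work layer by layer, relying on one structural invariant: in each of the $\log n + 1$ layers, the $n$ tasks of that layer together store exactly the $m$ input items, since every merging step only \emph{repartitions} the same set of items according to quantile ranges, and each item lies in exactly one range of each layer (no item is ever duplicated or dropped). Granting this, the computational work of an individual task is linear in the size of the data it handles (sorting $m/n$ items in the first layer, and otherwise splitting a sorted list at one quantile and/or merging at most two sorted lists), so the per-task costs in a fixed layer telescope to $O(m)$, and summing over the $O(\log n)$ layers yields the claim. No probabilistic or concentration argument is needed here because we only bound \emph{totals}, not per-task maxima.

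Concretely, I would first handle the sorting layer (layer $0$): each of the $n$ initial tasks receives its $m/n$ items from the source, sorts them in $O((m/n)\log(m/n)) = O((m/n)\log m)$ time, and splits the sorted list at the single extra quantile in $O(m/n)$ time; summing over the $n$ tasks gives $O(m\log m)$ computational work, $O(m)$ communication for receiving the input, and $O(m)$ communication for forwarding the split parts to layer $1$. Next I would treat a generic merging step from layer $i$ to layer $i+1$: each task in layer $i$ splits its sorted list at one additional supervisor-supplied quantile (by binary search, or trivially in time linear in the list size) and sends the two parts to two successors, while each task in layer $i+1$ merges the at most two sorted lists it receives in time linear in their combined size. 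By the invariant, the total size of all lists received in layer $i+1$ is $m$, so the merging work in layer $i+1$ is $O(m)$, the splitting work in layer $i$ is $O(m)$, and the communication is $O(m)$ (each item is sent once and received once). Over the $\log n$ merging layers this contributes $O(m\log n)$ computational work and $O(m\log n)$ communication work. Finally I would add the boundary contributions: the source pushing $m$ items into layer $0$ ($O(m)$), the final layer delivering $m$ items to the target ($O(m)$), and the supervisor handing $O(1)$ quantile values to each of the $O(n\log n)$ tasks, i.e.\ $O(n\log n) = O(m)$ since $m = \Omega(n\log n)$ --- all dominated. Combining, the total computational work is $O(m\log m + m\log n) = O(m\log m)$, using $m \ge n$ so that $\log m \ge \log n$, and the total communication work is $O(m\log n)$.

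The only point requiring care --- and the closest thing to an obstacle --- is justifying the invariant together with the linearity of the per-task cost: that merging two sorted lists and splitting a sorted list at a quantile are linear-time, and that the quantile-range partition in every layer is a true partition of the $m$ items. Both facts are immediate from the detailed construction, so the argument is essentially a bookkeeping exercise; I expect the write-up to be short.
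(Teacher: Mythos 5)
Your proposal is correct and follows essentially the same layer-by-layer accounting as the paper's proof: $O(m\log(m/n))$ work for the sorting layer, $O(m)$ computational and communication work per merging layer since each layer handles exactly $m$ items, summed over $O(\log n)$ layers. The extra detail you give on the partition invariant and the boundary contributions is fine but does not change the argument.
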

\begin{proof}
Each task $v_{0,j}$ needs $O((m/n) \log (m/n))$ time to sort its data items, which results in a  computation work of $O(m \log (m/n))$ in layer 0. In every layer $i \ge 1$, each task $v_{i,j}$ just needs to merge two sets of data items, which takes linear time. Since a total of $m$ data items are handled in each layer, this results in a computational work of $O(m)$ in each layer $i \ge 1$. Thus, the overall computational work is $O(m \log (m/n) + m \log n) = O(m \log m)$. In each layer, $m$ data items and $O(n)$ quantiles have to be received and sent, resulting in a total communication work of $O(m \log n)$.
\end{proof}

In addition to the total work, we also need to bound the maximum work per task. Each task of layer 0 just needs a computational work of $O((m/n) \log (m/n))$. However, the random selection of the quantiles can skew the work load among the merging tasks. In the following, we show that no merging task ever has to handle more than $\bigO((m/n)\log n)$ data items, w.h.p. Let $x_1<x_2< \ldots <x_m$ be the sorted sequence of data items in $I$. We start by proving that there cannot be a large subsequence that does not contain a quantile, w.h.p.

\begin{lemma}\label{lem:msh:workperpeer}
Let $S=\{x_i,\ldots,x_{i+s-1}\}$ be a fixed subsequence of $I$ of size $s$. If $s \ge c (m/n) \ln n$ for some $c \ge 1$, the probability that $S$ does not contain a quantile is at most $n^{-c}$.
\end{lemma}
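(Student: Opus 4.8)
The plan is a direct hypergeometric estimate. Recall that the source chooses the quantile set $Q$ to be a uniformly random $n$-element subset of the $m$ data items in $I$, independently of the sorted order of the values. Hence, for the fixed $s$-element block $S$, the event ``$S$ contains no quantile'' is exactly the event $Q \cap S = \emptyset$, and its probability is $\binom{m-s}{n}/\binom{m}{n}$ (this is $0$ when $m-s<n$, so we may assume $m-s\ge n$). I would first rewrite this as a telescoping product, obtained by drawing the $n$ quantiles one at a time without replacement:
\[
  \Pr[Q \cap S = \emptyset] \;=\; \prod_{j=0}^{n-1} \frac{m-s-j}{m-j}.
\]

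The key inequality is then $\dfrac{m-s-j}{m-j} = 1 - \dfrac{s}{m-j} \le 1 - \dfrac{s}{m}$, which holds because $m-j \le m$ for every $j \ge 0$. Multiplying over $j = 0,\ldots,n-1$ and using $1+x \le e^{x}$ gives
\[
  \Pr[Q \cap S = \emptyset] \;\le\; \Bigl(1 - \tfrac{s}{m}\Bigr)^{n} \;\le\; e^{-sn/m}.
\]
Substituting the hypothesis $s \ge c\,(m/n)\ln n$ yields $e^{-sn/m} \le e^{-c\ln n} = n^{-c}$, which is the claimed bound.

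I do not expect a genuine obstacle here: the statement reduces to an elementary calculation, and the only point requiring care is to pin down the exact sampling model for the quantiles (a uniform $n$-subset, chosen without replacement, as specified in the preprocessing step) so that the telescoping-product bound is justified. It is worth noting for what follows that this lemma is precisely the tool needed to control the load of the merging tasks: combining it with a union bound over the $O(n)$ relevant subsequences of $I$ — namely the ranges $[q(i,j,k),q(i,j,k+1))$ occurring across all layers $i$ — will show that, w.h.p., no such range, and therefore no merging task, ever has to handle more than $O((m/n)\log n)$ data items.
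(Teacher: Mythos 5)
Your proof is correct and follows essentially the same route as the paper: both compute the no-quantile probability as the hypergeometric ratio $\binom{m-s}{n}/\binom{m}{n}$, rewrite it as a telescoping product, bound each factor by $1-s/m$, and finish with $1+x\le e^x$ and the hypothesis $s\ge c(m/n)\ln n$. No gaps.
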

\begin{proof}
For $s \ge c (m/n) \ln n$ we get
\begin{align*}
    \Pr[\text{No quantiles in }S] &= \frac{\text{\# quantile placements in }I\backslash S}{\text{\# quantile placements in }I} = \frac{{m-s\choose n}}{{m \choose n}}\\
    &= \frac{\nicefrac{\left(\prod_{i=0}^{n-1} m-s-i \right)}{n!}}{\nicefrac{\left(\prod_{i=0}^{n-1} m-i \right)}{n!}} = \prod_{i=1}^{n}\frac{m-s+1-i}{m+1-i}\\
    &\leq \prod_{i=1}^{n}\frac{m-s}{m} = \left(\frac{m-s}{m}\right)^n\\
    &\leq \left(1-\frac{s}{m}\right)^n \leq e^{-\frac{s}{m}n} \leq e^{-c \ln n} = n^{-c}
\end{align*}
\end{proof}

Thus, w.h.p., two consecutive quantiles are not more than $O((m/n)\log n)$ data items apart. It remains to show that this extends to the quantiles assigned to any given task when only considering the data items that are part of the set that task is responsible for.

\begin{lemma}  \label{lem:mergework}
In the honest setting, no task has to process more than $\bigO((m/n)\ln n)$ data items during the execution of the supervised mergesort algorithm, w.h.p.
\end{lemma}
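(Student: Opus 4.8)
The plan is to determine, for each task, the precise set of input items it must handle, and then bound its cardinality. For the sorting layer ($i=0$) this is immediate: every initial node sorts exactly $m/n$ items. So fix a layer $i\ge 1$ and a node $v_{i,j\cdot2^i+k}$ of segment $S_{i,j}$. By the construction of the merging part, this node ends up storing exactly the input items whose index belongs to the block $B_{i,j}$ consisting of the $2^i\cdot m/n$ indices the source handed to the initial nodes $v_{0,j\cdot2^i},\dots,v_{0,(j+1)\cdot2^i-1}$, \emph{and} whose value lies in the interval $R=[q(i,j,k),q(i,j,k+1\bmod 2^i))$ (read as the wrap-around interval $[q(i,j,2^i-1),+\infty)\cup(-\infty,q(i,j,0))$ for the last node of the segment). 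All operations the node performs --- receiving two sorted sublists, merging them, splitting the result at a pivot quantile and forwarding the two halves --- are linear in $|R\cap B_{i,j}|$ (plus $O(1)$ for the pivots it is told), so it suffices to show $|R\cap B_{i,j}|=O((m/n)\ln n)$ w.h.p. for every task and then union-bound over the $O(n\log n)$ tasks.

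To control the value interval $R$, I would first invoke the bit-reversal structure. Since the lowest $i$ bits of $j\cdot2^i$ vanish, its bit-reversal satisfies $r(j\cdot2^i)\le n/2^i-1$, and hence the quantile indices $r(j\cdot2^i)+k\cdot n/2^i$, $k=0,\dots,2^i-1$, assigned inside $S_{i,j}$ form a strictly increasing arithmetic progression in $\{0,\dots,n-1\}$ with common difference $n/2^i$. Consequently $R$ contains exactly $n/2^i$ of the $n$ global quantiles and is covered by exactly $n/2^i$ consecutive global inter-quantile gaps (two of which are the extreme tails below $q_0$ / above $q_{n-1}$ in the wrap-around case). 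Lemma~\ref{lem:msh:workperpeer}, together with a union bound --- which also covers the two tails via the same estimate applied to $\{x_1,\dots,x_s\}$ and $\{x_{m-s+1},\dots,x_m\}$ --- shows that, w.h.p., every one of these gaps has at most $O((m/n)\ln n)$ items. On this event, call it $\mathcal{A}$, we get $|R|\le R_{\max}:=O((m/2^i)\ln n)$, simultaneously for all segments and all $i$.

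The last and crucial step exploits that the random permutation fixing the index blocks is chosen independently of the random subset of quantiles fixing $R$. Conditioning on any outcome of the quantiles lying in $\mathcal{A}$, the interval $R$ is a fixed set of at most $R_{\max}$ items, while, by uniformity of the permutation, the set of items whose index falls into the window defining $B_{i,j}$ is a uniformly random subset of size $M:=2^i m/n$ of all $m$ items. Hence $|R\cap B_{i,j}|$ is stochastically dominated by a hypergeometric random variable with mean $R_{\max}\cdot M/m=O((m/n)\ln n)$; and since $m=\Omega(n\log n)$ this mean is $\Omega(\log n)$, so a Chernoff-type bound for the hypergeometric distribution gives $|R\cap B_{i,j}|=O((m/n)\ln n)$ with failure probability $n^{-c}$, where $c$ grows with the constant in Lemma~\ref{lem:msh:workperpeer}. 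Averaging over the quantile outcomes in $\mathcal{A}$, adding $\Pr[\overline{\mathcal{A}}]$, and union-bounding over all $O(n\log n)$ tasks then yields the claim.

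The main obstacle --- and also the reason the bound is essentially tight --- is the balance in this last step: going up one layer doubles the value interval a task owns, but it also doubles the index block the task is allowed to draw from, and only because these two factors of $2^i$ cancel does every task again see merely $\Theta((m/n)\ln n)$ items. Making this precise forces one to keep the two sources of randomness (permutation versus quantile choice) carefully separated, since conditioning on one in order to apply concentration to the other is exactly what makes the argument go through; working with the value interval alone would give the useless bound $O((m/2^i)\ln n)$ for small $i$.
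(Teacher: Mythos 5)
Your proposal is correct and follows essentially the same two-stage argument as the paper: first bound the number of data items in a task's value range by $O((m/2^i)\ln n)$ via Lemma~\ref{lem:msh:workperpeer} and a union bound over the $n/2^i$ global inter-quantile gaps, then use the independent randomness of the index permutation to show that only an $O((m/n)\ln n)$ portion of these lands in the segment's index block. The only cosmetic difference is that the paper derives the second-stage concentration by hand (union-bounding over all subsets $N$ of size $c'(m/n)\ln n$ that could land entirely in the segment), whereas you invoke a hypergeometric Chernoff bound, which amounts to the same estimate.
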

\begin{proof}
The lemma trivially holds for layer 0. Recall that for each segment $S_{i,j}$ of layer $i\ge 1$, the quantiles $q(i,j,k)$ used in it are equal to the sorted sequence of quantiles $q_{r(j \cdot 2^i)},\ldots,q_{r((j+1) \cdot 2^i-1)}$. Thus, for each $k$, $q(i,j,k)$ and $q(i,j,k+1)$ represent two quantiles $q_{\ell}$ and $q_{\ell+n/2^i}$ for some $\ell$. We know from the previous lemma that for each $k \in \{0,\ldots,n/2^i-1\}$ there can be at most $c (m/n) \ln n$ many data items in $I$ between $q_{\ell+k}$ and $q_{\ell+k+1}$, w.h.p., if the constant $c>1$ is sufficiently large. Hence, the total number of data items in $I$ that are between $q_{\ell}$ and $q_{\ell+n/2^i}$ is at most $c(n/2^i)(m/n) \ln n = c (m/2^i) \ln n$, w.h.p., by a union bound. Next, we show that out of these data items only $O((m/n) \log n)$ many have to be handled by any task in segment $S_{i,j}$, w.h.p., which would prove the lemma. 

Let $M$ be the subset of data items in $I$ that are in the range $[q(i,j,k),q(i,j,k+1))$ for some $k$, and suppose that $|M|=c(m/2^i) \ln n$ for some constant $c$. Consider some fixed subset $N \subseteq M$ of size $c' (m/n) \ln n$ for some constant $c'>c$. Since the data items are randomly permuted and only a subset of $(m/n)2^i$ of these are part of $S_{i,j}$, the probability that all data items in $N$ are part of $S_{i,j}$ is equal to 
\[
    \frac{{(m/n) \cdot 2^i \choose |N|}}{{m \choose |N|}} \le \left( \frac{(m/n) \cdot 2^i}{m} \right)^{|N|} = \left( \frac{2^i}{n} \right)^{|N|}
\]    
On the other hand, there are at most 
\[
    {c(m/2^i) \ln n \choose |N|} \le \left( \frac{e \cdot c(m/2^i) \ln n}{|N|} \right)^{|N|}
    = \left(\frac{e \cdot c(m/2^i) \ln n}{c' (m/n) \ln n} \right)^{|N|}
    = \left(\frac{e \cdot c \cdot n}{c' 2^i} \right)^{|N|}
\]
ways of selecting $N$ out of $M$. Thus, the probability that there exists a subset $N$ of $M$ of size at least $c' (m/n) \ln n$ that is part of $S_{i,j}$ is at most
\[ 
    \left(\frac{e \cdot c \cdot n}{c' 2^i} \right)^{|N|} \cdot  \left( \frac{2^i}{n} \right)^{|N|}
    = \left( \frac{e \cdot c}{c'} \right)^{|N|} 
\]
which is polynomially small if $c' \ge 2e \cdot c$.
\end{proof}

We conclude the analysis of the honest setting by bounding the work the supervisor, the source and the target perform.

\begin{lemma}
In the honest setting, the source and the target perform $\bigO(m)$ total work during the execution of the supervised mergesort algorithm.
\end{lemma}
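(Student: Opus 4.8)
The plan is to bound the source's and the target's contributions separately, splitting each into a (pre- or post-)processing part and a communication part, and to show every part is $O(m)$, using the standing assumption $m=\Omega(n\log n)$ throughout.

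First I would handle the source. For preprocessing: drawing a uniformly random permutation of the $m$ data items and assigning indices by position costs $O(m)$; picking the $n$ quantiles uniformly at random and sorting them costs $O(n\log n)=O(m)$; and, once the quantiles are sorted, determining for each initial node $j$ the quantile $q_{r(j)}$ it should receive costs $O(n\log n)=O(m)$, since $r$ is the fixed bit-reversal permutation and each lookup is $O(\log n)$. For communication: the source sends the sorted list of $n$ quantiles to the supervisor, costing $O(n)=O(m)$, and it sends to each initial node $j$ exactly the $m/n$ data items with indices $j\cdot(m/n)+1,\dots,(j+1)\cdot(m/n)$ together with the single quantile $q_{r(j)}$; summing over the $n$ initial nodes gives $\sum_{j=0}^{n-1}\bigl(m/n+1\bigr)=m+n=O(m)$. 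Hence the source performs $O(m)$ work in total.

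Next I would handle the target. The key observation is that the ranges $[q_j,q_{j+1\bmod n})$ assigned to the final nodes $v_{\log n,0},\dots,v_{\log n,n-1}$ partition the key space, so by the correctness argument established above each of the $m$ data items of $I$ is stored at exactly one final node at the end of the honest execution. Hence the target receives a total of exactly $m$ data items, which is $O(m)$ communication work, and concatenating the $n$ received sorted blocks into the output takes $O(m+n)=O(m)$ computation. Combining the two bounds finishes the proof.

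I do not expect a genuine obstacle here; the only points requiring a little care are that the per-node additive constants sum to $O(n)=O(m)$ rather than something larger, and that the $O(n\log n)$ preprocessing costs (quantile sorting and the bit-reversal lookups) are correctly charged against $m=\Omega(n\log n)$. Work done by the supervisor (introductions, quantile dissemination) and by the workers is deliberately not counted here, as it is covered by the earlier lemmas of this subsection.
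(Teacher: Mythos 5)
Your proposal is correct and follows essentially the same route as the paper's proof: both bound the source's preprocessing (shuffle, quantile selection and sorting) by $O(m)$ using $m=\Omega(n\log n)$, and charge $O(m)$ for sending/receiving the data items and the final concatenation at the target. Your version is simply a more detailed accounting of the same steps.
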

\begin{proof}
To send and receive the data items at the beginning and the end of the algorithm, the source and the target perform $\bigO(m)$ work. In addition to that, during the preprocessing, the source shuffles the input ($\bigO(m)$ work), selects $n$ random quantiles $(\bigO(n)$ work) and sorts them $(\bigO(n\log n)$ work). Since $m=\Omega(n\log n)$, all these steps can be performed with $\bigO(m)$ work. Concatenating the results at the end can easily be done work $O(m)$ work by the target as well.
\end{proof}

\subsection{Malicious Setting}
    
In this section, we describe how to adapt the honest version of the supervised mergesort algorithm to be able to handle a $\beta$-fraction of malicious workers. To this end, we will first describe the additional checks the workers, the source, the target, and the supervisor have to perform throughout the execution of the algorithm and describe a particularly problematic attack. Afterwards we analyze how the modified algorithm performs building on the results from the honest case and the previous sections.

\paragraph*{Handling Malicious Activity}

In the preprocessing part, the supervisor signs each data item and attaches its index resulting from the random permutation to it.

Similar to the DAG in Section~\ref{sec:dag}, the task graph for the honest case is extended by initial lists and final lists of length $c \log n$, for some sufficiently large constant $c \ge 1$. The tasks of each initial list distribute the sorting work among them so that each task only executes $\max\{1,\log (m/n) / (c \log n)\}$ levels of the standard sequential mergesort algorithm in order to sort the $m/n$ data items associated with the corresponding initial node. This reduces the computational work in the initial lists to $O((m/n)\log m/\log n)$ for each task. Each final list does nothing else than forwarding the sorting result from one task to the next until it is sent to the target at the end.

To handle malicious activity during the computation of the tasks, the supervisor, the source, the target, and the honest workers perform several checks. For each initial node $v_{0,j}$, the source sends the signed $m/n$ data items with indices $j\cdot\nicefrac{m}{n}+1,\dots,(j+1)\cdot\nicefrac{m}{n}$ and the two quantiles required for the first merging step to $v_{0,j}$. Whenever these data items and quantiles are forwarded from one task to the next in the initial list, the worker responsible for the next task is supposed to check whether exactly $m/n$ data items are received, all data items and quantiles are signed, the data items have the right indices, and the sorting part that the worker responsible for the previous task was supposed to execute has been done. If the worker performing the checks is honest, this makes it impossible for an adversarial worker to cheat. If any cheating is discovered or nothing is sent to the worker, it sends a REJECT to the supervisor. Otherwise, we distinguish between two cases. If the honest worker is not executing the last task in the initial list, it simply sends DONE to the supervisor. If it is executing the last task, it tells the supervisor how many of the $m/n$ data items it intends to send to its two successors in the task graph. The supervisor will check if these two numbers add up to $m/n$ and otherwise rejects the answer, which causes it to select a new worker for the last task. This certainly also happens if the worker does not reply to the supervisor.

Next, suppose that the supervisor wants a task $v$ in the merging part to be executed that requires data items from tasks $u_1$ and $u_2$. Let $p(v)$ be the worker selected for $v$ and $p(u_i)$ be the worker last selected for $u_i$, $i\in \{1,2\}$. We assume that $p(u_i)$ already informed the supervisor how it will split up its set of data items so that the supervisor already knows how many data items should be sent from each $p(u_i)$ to $v$. The supervisor will then inform $p(v)$ about these numbers, the two quantiles relevant for the range of $v$, and the quantile needed for the splitting of the sorted set. $p(v)$ is supposed to check whether it gets the right number of data items from each $p(u_i)$ and the data items are sorted, all data items are signed, the data items belong to its segment (which can be checked via their indices), and the data items belong to the quantile range provided by the supervisor. Analogously to the DAG case, for every $p(u_i)$ for which $p(v)$ discovers any cheating or does not receive anything, $p(v)$ includes $p(u_i)$ in its REJECT($R$) message to the supervisor. Otherwise, it merges the two data sets and informs the supervisor about the number of the data items it intends to send to its two successors in the task graph. The supervisor will check if these two number add up to the total number of data items received by $p(v)$ and otherwise rejects the answer, which causes it to select a new worker for $v$. This certainly also happens if $p(v)$ does not reply. Unfortunately, there is an opportunity for an adversarial worker to cheat here because it can propose to send a lower number along one direction and a higher number along the other direction so that the sum is still correct. An honest worker receiving this lower number won't notice that some data items are missing. However, if the worker that is supposed to receive the higher number is honest, it will discover the cheating since the adversarial worker cannot make up additional data items as they must be signed. 

Since the final list simply forwards the data items, the only issue that needs to be checked here is whether the same number of data items is forwarded from one task to the next and the data items are signed and sorted.

Whenever the target receives data items from a final node, it performs the same checks a worker receiving the data items would perform, potentially resulting in the selection of a new worker for that node. Note that the target will certainly discover any cheating since the total number of data items has to add up to $m$, which implies that adversarial workers cannot propose numbers of data items deviating from the true value without the target noticing that at some point.

\paragraph*{Analysis}

Since the delay sequence argument in Lemma~\ref{th:dag-runtime} also applies to the supervised mergesort algorithm so that we get the following result (choose $\epsilon>0$ so that $(\nicefrac{1}{2(2d+1)})^{2+\epsilon}=1/200$).
    
\begin{corollary}\label{lem:msm:termination_runtime}
If $\beta\leq 1/200$ then for any adversarial strategy, the supervised mergesort algorithm terminates within $\bigO(\log n)$ rounds, w.h.p.
\end{corollary}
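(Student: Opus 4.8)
The plan is to reduce the statement directly to the runtime analysis for arbitrary leveled networks, i.e., to Lemma~\ref{th:dag-runtime}. First I would observe that the task graph used in the malicious setting — the honest-case merging network together with the initial and final lists of length $c\log n$ — is a leveled network whose span is $D = O(\log n)$ ($c\log n$ initial-list levels, one sorting level, $\log n$ merging levels, and $c\log n$ final-list levels) and whose indegree and outdegree are both bounded by $d=2$. Moreover, the supervisor runs exactly the wavefront-based scheduling strategy of Section~\ref{sec:dag}: it maintains the finished set $F$, re-selects a worker on silence, and prunes $F$ on a REJECT. Hence all the structure the delay-sequence argument of Lemma~\ref{th:dag-runtime} relies on is in place, and substituting $d=2$ turns its hypothesis into $\beta \le (\nicefrac{1}{2(2d+1)})^{2+\epsilon} = (\nicefrac{1}{10})^{2+\epsilon}$; choosing the constant $\epsilon = \log_{10} 2 > 0$ makes this bound exactly $\nicefrac{1}{200}$, so $\beta \le \nicefrac{1}{200}$ is precisely what the lemma needs, and it yields a runtime of $D+s = O((D+\log_d n)/\epsilon) = O(\log n)$, w.h.p., since here $D = \Theta(\log n)$ and $\epsilon$ is a constant.

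The one point that genuinely has to be checked — and that I expect to be the crux — is that the delay-sequence argument's validity conditions still hold: every re-execution of a task must be traceable to a freshly sampled adversarial worker sitting at one of the two endpoints of the corresponding delay-sequence edge (a self-loop forces $s_{i-1}=A$, a rollback edge forces $s_{i-1}=A$ or $s'_i=A$). Concretely, I need that an honest worker never issues a spurious REJECT of an honest predecessor, that an honest worker's reply is never rejected by the supervisor's split-consistency check, and that the honest target never rejects the output of an honest final node; any supervisor re-selection of a worker for a task would then also be charged to an adversarial worker at that task, exactly as in case~(3) of the delay-sequence construction. For this I would argue that every verification performed here is deterministic — counting the received items against the counts announced by the predecessors, checking signatures, checking that indices lie in the task's segment, checking sortedness, and checking membership in the announced quantile range — and that an honest worker which passed its own input checks necessarily produces output and an honest outgoing count that pass its successors' checks, because merging sorted, correctly signed, correctly indexed, in-range multisets preserves all of these properties and an honest worker always reports and then delivers its true outgoing counts.

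The subtle case, already flagged in the description of the algorithm, is the uneven-split attack, where an adversarial worker announces a split whose total is correct but delivers too few items along one edge. Here the argument is that the withheld items are signed and lie outside the quantile range of the over-announced successor, so the adversary cannot fabricate replacements; thus when the deficit surfaces it surfaces at some honest worker (or at the honest target), which counts fewer valid, in-range, signed items than the supervisor told it to expect and therefore rejects — and the worker it rejects is exactly the one that announced an outgoing count it could not honor, which must be adversarial since an honest worker announces and delivers its true count. Any chain of adversarial workers that silently ``absorbs'' the deficit merely reproduces the same inflated-announcement situation one step further, so the eventual REJECT (or supervisor re-selection, which is triggered only by an inconsistent announced split, hence only by an adversarial worker) is always charged to an adversarial worker at the appropriate endpoint. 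With these observations, the backward construction of an $s$-delay sequence, the $(2\beta)^{s/2}$ bound on a fixed sequence being valid, and the union bound over the $n\cdot(2d+1)^{D+s-1}$ candidate sequences carry over verbatim from Lemma~\ref{th:dag-runtime}, giving the $O(\log n)$-round bound, w.h.p.
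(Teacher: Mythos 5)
Your proposal is correct and follows essentially the same route as the paper: the paper also reduces the claim to the delay-sequence bound of Lemma~\ref{th:dag-runtime} with $d=2$, choosing $\epsilon$ so that $(\nicefrac{1}{2(2d+1)})^{2+\epsilon}=\nicefrac{1}{200}$, and then notes that every re-execution of a task is still traceable to an adversarial worker even in the presence of the uneven-split attack, whose only consequence is a correctness (not a runtime) concern caught at the target. Your treatment of the charging argument for the uneven-split attack is merely a more detailed version of the paper's brief justification.
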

\begin{proof}
Even though cheating might not be noticed in the special case that the adversary proposes wrong numbers for the number of data items it intends to send to its successors, the delay sequence argument is still valid, which is simply based on the fact that there must be a reason why the supervisor asks a worker to execute a task at a specific time: Either because a preceding task was successfully executed at the previous round, or the worker executing that task in the previous round didn't reply or sent the wrong information to the supervisor, causing the supervisor to select a new worker for that task, or a successor of that task caused a rollback to that task. Thus, the only concern successful cheating might raise is that the output at the target is incorrect. However, this is not an issue for the runtime bound.

\end{proof}

Since the target only accepts a correct output by construction, the correctness of the supervised mergesort algorithm is immediate. Next we analyze the verification work of the workers. Recall that a worker needs to check whether it gets the right number of data items from each predecessor and the data items are sorted (resp. in the initial list case, that the mergesort algorithm has been applied to these up to a certain point), all data items are signed, the data items belong to its segment (which can be checked via their indices), and the data items belong to the quantile range provided by the supervisor. Obviously, this can be done in constant time per data item (given that a signature can be checked in constant time) so that the work required to verify data items does not exceed the work required to send and receive them. Moreover, the tasks of the sorting part are guaranteed to just deal with a constant number of layers of the mergesort algorithm for $m/n$ data items, and the tasks of the merging part have to deal with at most $O((m/n) \log n)$ data items, w.h.p., according to Lemma~\ref{lem:mergework}. Thus, we get:

\begin{corollary}
In the malicious setting, every task requires at most $\bigO((m/n)(\log m/\log n +\log n))$ work, w.h.p.
\end{corollary}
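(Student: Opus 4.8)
The plan is to bound the work of a single task execution separately for the three kinds of tasks occurring in the extended task graph: the tasks of the initial lists (the sorting part), the tasks of the merging part, and the tasks of the final lists. In each case the work of an honest worker consists of receiving the incoming data (plus the constant number of quantiles supplied by the supervisor), running the verification checks described above, carrying out the task's actual computation, and sending the outgoing data. The first thing I would observe is that the verification and communication contributions never dominate the computation by more than a constant factor: every check -- comparing the received counts against the numbers a predecessor announced to the supervisor, verifying signatures, checking that indices lie in the task's segment, checking the quantile range, and checking that the claimed prefix of mergesort levels was actually performed -- costs $O(1)$ per data item (signatures being verifiable in constant time, sortedness and segment/range membership being checkable in one scan). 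Hence it suffices to bound, for each kind of task, the number of data items it handles together with its intrinsic computation.

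For an initial-list task the worker executes, by construction, at most $\max\{1,\log(m/n)/(c\log n)\}$ levels of sequential mergesort on the $m/n$ items of the initial node, i.e.\ $O((m/n)\log m/\log n)$ computational work; adding the $O(m/n)$ verification work and the $O(m/n)$ communication of the items (plus $O(1)$ quantiles) still gives $O((m/n)\log m/\log n)$ per task. For a merging-part task I would first invoke Lemma~\ref{lem:mergework}, which says that in the honest setting no task processes more than $O((m/n)\ln n)$ data items, w.h.p., and then argue that this bound transfers to the malicious setting: valid data items carry signatures and hence cannot be fabricated, and the supervisor forces the two item counts a worker announces to its successors to sum to the count it received, so no worker can inflate a successor's load beyond what an honest execution would produce; consequently at most $O((m/n)\ln n)$ items ever reach a merging task, merging the two sorted sublists is linear, and verification and communication are $O(1)$ per item (plus $O(1)$ quantiles), yielding $O((m/n)\log n)$ per merging-part task, w.h.p. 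A final-list task merely forwards the data items of a final node -- whose count is again $O((m/n)\log n)$ w.h.p.\ by Lemma~\ref{lem:mergework} -- and checks that the same, signed, sorted sequence is passed on, so it costs $O((m/n)\log n)$, w.h.p.\ as well.

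Taking the maximum of the three bounds gives that every task requires at most $O((m/n)(\log m/\log n + \log n))$ work, w.h.p., the high-probability qualifier coming from the single application of Lemma~\ref{lem:mergework} (which already contains the union bound over all tasks). The step I expect to be the main obstacle is precisely the claim that malicious behavior cannot blow up a task's load: one must rule out, for instance, an adversarial predecessor that pads its message with duplicates of its own signed items. The key point to make rigorous is that such padding is constrained by the count checks performed by every honest recipient and, ultimately, by the fact that the total number of validly indexed items per layer is conserved and is verified at the target; once this is pinned down, the honest-case load bound of Lemma~\ref{lem:mergework} applies verbatim and everything else is routine accounting of the $O(1)$-per-item verification and communication overhead.
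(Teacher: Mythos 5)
Your proposal is correct and follows essentially the same route as the paper: $O(1)$ verification and communication work per data item, the $O((m/n)\log m/\log n)$ bound for the split-up sorting tasks in the initial lists, and Lemma~\ref{lem:mergework} for the $O((m/n)\log n)$ item bound on merging (and final-list) tasks. The only difference is that you explicitly justify why the honest-case load bound transfers to the malicious setting (signatures plus count and range checks), a point the paper leaves implicit; that is a reasonable and correct addition rather than a divergence.
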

    
However, using such a work bound would result in a total work that is a $\log n$ factor higher than the work in the honest case. Thus, we will do a more refined analysis of the number of subsequences of data items of a certain size that do not contain a quantile. Let a subsequence be any consecutive sequence of data items from the sorted sequence $x_1<x_2< \ldots <x_m$.

\begin{lemma}\label{lem:msm:work_distribution_bound}
There are at most $\frac{n}{(s/2)e^{(s/2)-2}}+c\ln n$ disjoint subsequences of size at least $s\cdot\nicefrac{m}{n}$ that contain no quantiles, w.h.p, for $s\geq 1$.
\end{lemma}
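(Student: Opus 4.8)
The plan is to reduce the statement to counting quantile-free blocks of a fixed grid, and then to dispose of that count by a first-moment union bound. First I would fix the sorted sequence $x_1 < \dots < x_m$, set $B := \lfloor (s/2)(m/n)\rfloor$, and carve the sequence into $G := \lfloor m/B\rfloor$ consecutive disjoint blocks of size exactly $B$ (discarding the final ${<}B$ items). The key combinatorial observation is that every quantile-free subsequence under consideration has size $\ge s\cdot\nicefrac{m}{n}\ge 2B$, and that any window of length $\ge 2B$ lying inside $[1,m]$ fully contains at least one of these blocks: its left endpoint is at most $m-2B+1\le (G-1)B$, so it lies in one of the first $G-1$ blocks, and then the next block is entirely inside the window. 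Moreover, blocks contained in two \emph{disjoint} subsequences are themselves disjoint, hence distinct. Consequently the number of disjoint quantile-free subsequences of size $\ge s\cdot\nicefrac{m}{n}$ is at most $Z$, the number of blocks containing no quantile (we may assume $s\cdot\nicefrac m n\le m$, since otherwise there is nothing to prove).

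Next I would estimate $G$ and the relevant probabilities using $m\ge n\log n$. Since $B\ge (s/2)(m/n)-1$ we get $G\le m/B\le (2n/s)(1+O(1/\log n))$ and $Bn/m\ge s/2-n/m\ge s/2-1/\log n$. For any fixed family of $k$ blocks, the probability that all of them are quantile-free is $\binom{m-kB}{n}/\binom{m}{n}\le(1-kB/m)^n\le e^{-kBn/m}\le (e^{1/\log n}e^{-s/2})^{k}$, exactly as in the estimate used for Lemma~\ref{lem:msh:workperpeer}. A union bound over the $\binom{G}{k}\le (eG/k)^k$ choices of $k$ blocks then yields, for $1\le k\le G$,
\[
  \Pr[Z\ge k]\;\le\;\Bigl(\tfrac{e\,G\,e^{1/\log n}\,e^{-s/2}}{k}\Bigr)^{k}\;=:\;(\kappa/k)^{k},
\]
and $\Pr[Z\ge k]=0$ trivially for $k>G$.

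Finally I would plug in $k:=\lceil \tfrac{n}{(s/2)e^{(s/2)-2}}+c\ln n\rceil$. Writing $\tfrac{n}{(s/2)e^{(s/2)-2}}=\tfrac{2n}{s}e^{2-s/2}$ and bounding $\kappa\le \tfrac{2n}{s}e^{1-s/2}(1+o(1))$ from the estimate of $G$, the ratio satisfies $\kappa/k\le \kappa/(\tfrac{2n}{s}e^{2-s/2})\le e^{-1}(1+o(1))\le 1/2$ for all sufficiently large $n$, independently of $s$. Hence $\Pr[Z\ge k]\le (1/2)^{k}\le (1/2)^{c\ln n}=n^{-c\ln 2}$, which is polynomially small with an exponent as large as desired by the choice of $c$; and if $k>G$ the inequality $Z\le G<k$ holds deterministically. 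Combining this with the first paragraph shows that w.h.p.\ the number of disjoint quantile-free subsequences of size $\ge s\cdot\nicefrac{m}{n}$ is at most $\lceil \tfrac{n}{(s/2)e^{(s/2)-2}}+c\ln n\rceil-1\le \tfrac{n}{(s/2)e^{(s/2)-2}}+c\ln n$, as claimed.

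I expect the main obstacle to be the combinatorial reduction in the first paragraph: choosing the block size so that $2B\le s\cdot\nicefrac m n$ while keeping $G=(2n/s)(1+o(1))$ forces some care with floors, and one must verify the ``window contains a full block'' claim including the boundary blocks and the discarded tail. The probabilistic part is then a routine first-moment argument, and the constant $e^{2}$ in the statement is precisely what makes $\kappa/k\le e^{-1}$, so no sharper concentration is needed. (Alternatively one could bound $\E[Z]=G\cdot\Pr[\text{block quantile-free}]$ and apply a Chernoff bound using the negative association of the block-emptiness events, but the direct union bound above avoids invoking negative association and already gives the stated bound.)
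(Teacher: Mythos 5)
Your proposal is correct and follows essentially the same route as the paper's proof: both reduce disjoint quantile-free subsequences of size $s\cdot\nicefrac{m}{n}$ to quantile-free grid-aligned blocks of size $(s/2)(m/n)$ (of which there are about $2n/s$), bound the probability that $k$ fixed blocks are all quantile-free by $e^{-ks/2}$ via the same hypergeometric estimate, and finish with a union bound over $\binom{2n/s}{k}$ choices so that the ratio inside the $k$-th power drops to $e^{-1}$ precisely because of the $e^{2}$ in the threshold. Your version is somewhat more careful about floors, the boundary blocks, and the discarded tail, but the underlying argument is identical.
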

\begin{proof}
We start by reducing the number of subsequences we have to consider. Certainly, for every subsequence $S$ of size $s\cdot\nicefrac{m}{n}$ there must exist a subsequence $S' \subseteq S$ of size $s\cdot\nicefrac{m}{2n}$ that starts with some $x_i$ where $i$ is a multiple of $s\cdot\nicefrac{m}{2n}$. Therefore, 
\[
    \Pr[\text{$S$ contains no quantiles}] \leq \Pr[\text{$S'$ contains no quantiles}]
\]
Thus, we can bound the probability that $k$ disjoint subsequences of size at least $s\cdot\nicefrac{m}{n}$ contain no quantiles by considering $k$ disjoint subsequences $S_1,\dots, S_k$ of size $s\cdot\nicefrac{m}{2n}$ that start with some $x_i$ where $i$ is a multiple of $s\cdot\nicefrac{m}{2n}$.
\begin{align*}
    \Pr[\text{$S_1,\dots,S_k$ contain no quantiles}]
    &\leq \frac{{m-k\cdot \nicefrac{sm}{2n}\choose n}}{{m \choose n}} \\
    &\leq e^{-\frac{k \cdot \nicefrac{sm}{2n}}{m}n}\tag{Proof of Lemma~\ref{lem:mergework}}\\
    &= e^{-k\cdot\nicefrac{s}{2}}
\end{align*}
Since there are $m/(s\cdot\nicefrac{m}{2n})=\nicefrac{2n}{s}$ many starting points for such subsequences, there are ${2n/s \choose k}$ ways of picking $k$ starting points so that $S_1,\ldots,S_k$ are disjoint. Hence, we can use the union bound to obtain for $k=\frac{n}{(s/2)e^{\nicefrac{s}{2}-2}} + c\ln n$:
\begin{align*}
    \Pr[\text{$S_1,\dots,S_k$ contain no quantiles}] &\leq 2 \cdot {2n/s \choose k} \cdot e^{-k\cdot\nicefrac{s}{2}}\\
    &\leq \left(\frac{2en}{s k}\right)^k \cdot e^{-k\cdot\nicefrac{s}{2}}
    = \left(\frac{n}{(s/2)e^{\nicefrac{s}{2}-1}k}\right)^k\\
    &= \left(\frac{1}{e}\right)^{(c+1)\ln n} \leq n^{-c}
\end{align*}
\end{proof}

We continue by bounding the total work performed by the workers.

\begin{lemma}\label{lem:msm:peerwork}
The total computational work performed by the honest workers is $O(m\log m)$ and the total communication work $O(m \log n)$ on expectation.
\end{lemma}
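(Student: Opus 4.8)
The plan is to express the total work of the honest workers as $\sum_{v} N_v\, w_v$, where $v$ ranges over all tasks of the extended task graph --- the $n$ initial lists, the $\log n$ merging layers, and the $n$ final lists --- with $N_v$ the number of times an \emph{honest} worker executes $v$ and $w_v$ an upper bound on the work one honest execution of $v$ costs. The difficulty is that the adversary knows the source's preprocessing randomness $Q$ (the random permutation of the data items and the chosen quantiles) and may therefore try to steer rollbacks onto the data-heavy tasks, so $N_v$ and $w_v$ are not independent. I would handle this by conditioning on $Q$ first: given $Q$, all $w_v$ are frozen, while the bounds on $N_v$ below hold against \emph{every} adversarial strategy and hence also conditionally on $Q$; then I apply linearity of expectation.

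The first ingredient is the re-execution bound $\E[N_v\mid Q]=1+O(\beta)=O(1)$ for every task $v$. This is the leveled-network counterpart of \Cref{th:line-middle}, and the plan is to prove it by the same backwards delay-sequence argument as \Cref{th:dag-runtime}, localized to the sub-DAG of descendants of $v$: once an honest worker has correctly executed $v$, a further execution of $v$ is caused only by the geometrically rare event that its next worker is adversarial, or by a REJECT-cascade travelling from some descendant of $v$ back up to $v$, and the latter needs a number of adversarial selections proportional to the length of the cascade. Since every node has in- and out-degree at most $2$, there are at most $2^{k}$ such cascades of length $k$, so their total probability is $\sum_{k\ge1}2^{k}\beta^{\Theta(k)}=O(\beta)$ once $\beta\le1/200$. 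The crude bound ``$N_v=O(\log n)$ w.h.p.'', which follows from the runtime in \Cref{lem:msm:termination_runtime}, is a full $\log n$ factor too weak, so this refinement is the crux of the argument.

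The second ingredient is control of $w_v$. By construction an honest worker accepts a batch from a predecessor only if all items are signed, carry indices of its own segment, lie in the quantile range announced by the supervisor, and their number matches what the supervisor announced; hence it never accepts out-of-range or unsigned items, and an aborted execution costs no more than a successful one relative to the data handed to it. Thus $w_v=O(T_v)$ in the merging layers (merging two sorted lists is linear, plus $O(1)$ quantiles), $w_v=O((m/n)\log m/\log n)$ in an initial list (each list task runs a constant share of the $\log(m/n)$ merge levels), and $w_v=O(T_v)$ in a final list (pure forwarding), where $T_v$ is the true amount of data assigned to $v$. One further has to bound the cost of the ``over-proposal'' attack, in which an adversarial predecessor hands more items along one out-edge than the nominal split allows: concentrating $s$ times the nominal load onto a single task requires $\Omega(\log s)$ adversarial ancestors, so for $\beta\le1/200$ the expected extra data an honest worker handles on this account is $O(m/n)$ per task, a lower-order term. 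Summing $w_v$ over all tasks, using \Cref{lem:msh:peerwork} together with the fact that the true data of any merging layer sums to $m$, and using the tail bound of \Cref{lem:msm:work_distribution_bound} (which prevents the few atypically large $T_v$ from spoiling the sum; cf.\ \Cref{lem:mergework}), one gets $\sum_v w_v=O(m\log m)$ for computation --- $O(m\log m)$ from the initial lists, $O(m\log n)$ over the $\log n$ merging layers, $O(m\log n)$ from the final lists --- and $\sum_v w_v=O(m\log n)$ for communication.

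Putting the pieces together,
\[
 \E\!\Big[\sum_v N_v w_v\Big]=\E\!\Big[\sum_v w_v\,\E[N_v\mid Q]\Big]\le O(1)\cdot\E\!\Big[\sum_v w_v\Big],
\]
so the expected total computational work is $O(m\log m)$ and the expected total communication work is $O(m\log n)$. It remains to note that the delay-sequence bound of \Cref{th:dag-runtime} leaves only an (arbitrarily large) inverse-polynomial probability that the runtime exceeds $O(\log n)$, and on that event the total work is at most $\poly(n)\cdot m$, contributing only $o(m)$ to the expectation; so the bounds stand. I expect the main obstacle to be precisely the re-execution bound $\E[N_v\mid Q]=O(1)$: pushing the delay-sequence argument into the mixed adversarial--stochastic regime while simultaneously decoupling $N_v$ from $T_v$ through the conditioning on $Q$ is where the real work lies.
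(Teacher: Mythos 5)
Your decomposition $\sum_v N_v w_v$, the conditioning on the preprocessing randomness $Q$, and your treatment of the initial lists (expected $O(1)$ re-executions via the path-style argument of Lemma~\ref{lem:advsupervisor}) all match the paper. The gap is exactly where you yourself locate the crux: the claim $\E[N_v\mid Q]=1+O(\beta)$ for the \emph{merging} tasks, and the justification you sketch for it does not work. You account only for REJECT-cascades travelling upward from descendants of $v$; those indeed need $k$ fresh adversarial selections to climb $k$ levels and therefore decay geometrically. But the supervisor's rule on receiving REJECT($R$) removes from $F$ every task reachable \emph{from} a rejected task. Hence a \emph{single} adversarial selection at a task $w$ that is a successor of some ancestor $u$ of $v$ (but not itself an ancestor of $v$) evicts $v$ from $F$ and eventually forces a re-execution of $v$, with probability $\beta$ and with no decay in the distance from $u$ to $v$. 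For a task $v$ in the last merging layer there are $\Theta(n)$ such tasks $w$, so the union bound you invoke yields $O(\beta n)$ rather than $O(\beta)$. Rescuing the per-task bound would require a timing argument showing that such a $w$ is rarely still active after $v$ has entered $F$ (i.e., that the wavefront rarely becomes skewed by many levels), and that argument is invisible to a localization onto the sub-DAG of \emph{descendants} of $v$.

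The paper avoids the per-task expectation for merging tasks altogether. It combines the w.h.p.\ runtime of $O(\log n)$ rounds (Corollary~\ref{lem:msm:termination_runtime}) with the fact that at most $n$ tasks run per round, so the total number of task executions is $O(n\log n)$ w.h.p.; the adversary can therefore inflate at most $n$ tasks to $O(\log n)$ executions each. It then applies the refined tail bound of Lemma~\ref{lem:msm:work_distribution_bound} to show that per level only about $n/\log n$ tasks are heavy enough (at least $\Theta(\log\log n)$ times the nominal $m/n$ load) for this inflation to matter, and sums a geometric series over the heaviness scale $s$ for the rest. Your $w_v$ accounting and the final summation would go through once the execution counts are controlled, but as written the first ingredient of your proof is unsupported; either prove that the ancestor-eviction channel contributes only $O(\beta)$ per task via a delay-sequence argument on the relative progress of different parts of the wavefront, or fall back on the global execution-count bound as the paper does.
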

\begin{proof}
In the same way that Lemma~\ref{th:line-middle} extends the expected constant work bound for the source to honest workers in a path graph, Lemma~\ref{lem:advsupervisor} can be extended to show that every sorting task will, on expectation, be executed by an honest worker just $O(1)$ times. Thus, the total expected work resulting from the sorting tasks is
\[
  O(n \log n \cdot (m/n)\log m/\log n) = O(m \log m) . 
\]
Moreover, since every sorting task has to deal with exactly $m/n$ data items, the total communication work for these is
\[
  O(n \log n \cdot (m/n)) = O(m \log n) .
\]
It remains to bound the total work of the other tasks. Due to Corollary~\ref{lem:msm:termination_runtime} at most $O(\log n)$ rounds are needed by the supervised mergesort algorithm, w.h.p. Thus, every task is executed at most $O(\log n)$-times. Moreover, because at most $n$ tasks are executed in each round, the total number of task executions is $O(n \log n)$, w.h.p. The worst case for the work bound is reached if the $(n \log n)/\log n$ most expensive tasks are all executed $O(\log n)$ times and the rest just once. Similar to Lemma~\ref{lem:mergework}, Lemma~\ref{lem:msm:work_distribution_bound} can be extended to show that for some $s=\Theta(\log \log n)$ there can be at most $n/\log n$ nodes at each level of the task graph with at least $s \cdot m/n$ many data items. As a worst case, assume that there are exactly $n/\log n$ such nodes in each level, and each of them has $O((m/n) \log n)$ data items. Then the total work required for these is $O(\log n \cdot (n/\log n) \cdot (m/n) \log n) = O(m \log n)$. For the rest of the nodes, there are at most $O(\frac{n}{s e^{s/2}})$ many per level with a work of at least $s \cdot m/n$. Summing up all these work bounds results a total work of
\[
  \sum_{s \ge 1} \log n \cdot O \left( \frac{n}{s e^{s/2}} \right) \cdot s \cdot m/n
  = O(m \log n) \sum_{s \ge 1} \frac{1}{e^{s/2}} = O(m \log n)
\]
Together with the work bound for the expensive tasks this results in a total expected computational work of $O(m \log n)$. Since the computational work is an upper bound for the communication work, the lemma follows.
\end{proof}

Certainly, an upper bound on the computational work is also an upper bound on the communication work of the workers. Next, we bound the verification work of the supervisor.
    
\begin{lemma}
The supervisor performs $O(n\log n)$ work in total for the verification of the outputs of tasks, w.h.p.
\end{lemma}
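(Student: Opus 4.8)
The plan is to combine a per-execution bound on the supervisor's verification work with a bound on the total number of task executions. For each executed task, the supervisor's verification amounts to $O(1)$ arithmetic operations: the assigned worker reports the (at most two) sizes of the data sets it intends to forward to its successors, and the supervisor only checks that these add up to the number of items that should have reached that task. The latter quantity is maintained by trivial bookkeeping --- it is $m/n$ for every initial (and initial-list) node, and for a merging task it equals the sum of the two forwarding sizes reported (and already checked) for its two predecessors; maintaining the set $F$ under DONE/REJECT messages is likewise $O(1)$ per DONE and, amortised, $O(1)$ per execution, since the number of removals never exceeds the number of additions. So the total verification work is $O(\text{number of task executions})$.

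It remains to show that the total number of task executions is $O(n\log n)$, w.h.p. By Corollary~\ref{lem:msm:termination_runtime} the computation halts within $O(\log n)$ rounds, w.h.p., and in each round the supervisor executes exactly the tasks of the current wavefront $W_F$; hence it suffices to prove $|W_F|\le n$ at all times. First, $W_F$ is an antichain of the task graph: if $u,v\in W_F$ with $u$ a proper ancestor of $v$, then the predecessor $w$ of $v$ on a directed $u$--$v$ path satisfies $w\in F$ (since $v\in W_F$), and as $F$ is downward-closed and $u$ equals or precedes $w$, also $u\in F$, contradicting $u\in W_F$. Second, the task graph of the malicious-setting algorithm --- the honest leveled network together with the $c\log n$-node initial lists and the $(D+\log n)$-node final lists --- is a leveled network in which every layer has exactly $n$ nodes and in which every pair of consecutive layers is joined, using task-graph edges only, by a perfect matching: the initial and final lists are $n$ vertex-disjoint paths, and between consecutive merging layers the explicit routing rule already contains such a matching (for instance, sending the $k$-th node of $S_{i,2j}$ to node $2k$ and the $k$-th node of $S_{i,2j+1}$ to node $2k+1$ of $S_{i+1,j}$). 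Stitching these matchings partitions the task graph into $n$ vertex-disjoint directed paths; since any antichain meets each such path in at most one node, every antichain --- in particular $W_F$ --- has at most $n$ nodes. Hence at most $n$ tasks are executed per round and $O(n\log n)$ in total, w.h.p., which with the per-execution bound yields total verification work $O(n\log n)$, w.h.p.

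The main obstacle is the bound $|W_F|\le n$: one has to observe that the wavefront is always an antichain and then that this specific task graph admits a size-$n$ chain cover, which rests on the perfect matchings implicit in the merging construction and in the list gadgets. The remaining steps are routine.
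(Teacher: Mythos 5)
Your proposal is correct and follows essentially the same route as the paper: $O(1)$ supervisor work per task execution (receiving and cross-checking the announced split sizes, plus bookkeeping for $F$), multiplied by $O(n\log n)$ total task executions, which comes from at most $n$ executions per round over the $O(\log n)$ rounds guaranteed w.h.p.\ by Corollary~\ref{lem:msm:termination_runtime}. The only difference is that you rigorously justify the ``at most $n$ tasks per round'' step via the antichain property of $W_F$ and a size-$n$ chain cover of the task graph, a fact the paper simply asserts; that sub-argument is a nice, correct addition but does not change the overall structure of the proof.
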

\begin{proof}
The supervisor assigns $O(n\log n)$ workers to nodes and introduces each to a constant number of other workers (or the source and target). In addition, it receives $O(n)$ quantiles from the source and sends $O(1)$ to each worker assigned to a task. To verify the output of a single task, the supervisor has to receive the amount of data the worker assigned to the task wants to send to each of its successors, it has to compare this amount with the amount the worker received, and it has to notify its successors of the amounts that were announced to be sent. In total, this requires $O(1)$ work per task execution. As there are $O(n \log n)$ task executions in total, the lemma follows.
\end{proof}

Furthermore, Lemma~\ref{th:dag_sendingwork} implies the following result.
    
\begin{corollary}
If $\beta \le 1/12$ then the source performs $O(m)$ work for sending data items in expectation.
\end{corollary}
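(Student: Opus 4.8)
The plan is to reduce directly to Lemma~\ref{th:dag_sendingwork}. The extended task graph for supervised mergesort attaches to each initial node $v_{0,j}$ an initial list of $c\log n$ nodes whose only job is to forward the $m/n$ signed data items (together with the two quantiles needed for the first merging step) and to check that the preceding task performed its assigned portion of the sequential mergesort. This is structurally identical to the initial lists added to $G$ in Section~\ref{sec:dag}, and the only event that forces the source to resend the input for a fixed initial node is a REJECT propagating all the way back through that node's initial list to its first node. Hence the worst case for the number of times the source transmits the data items of one initial node is precisely the situation analyzed in Lemma~\ref{lem:advsupervisor}, where the output of the last node of the list is always rejected.

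First I would invoke Corollary~\ref{lem:msm:termination_runtime} to note that the whole computation finishes in $O(\log n)$ rounds, w.h.p., hence certainly within the $n^{c-1}$-round window for which Lemma~\ref{lem:advsupervisor} (and therefore Lemma~\ref{th:dag_sendingwork}) gives its guarantee. Combining this with $\beta \le 1/12$ yields, exactly as in the proof of Lemma~\ref{th:dag_sendingwork}, that the source sends the data items to each initial node $O(1)$ times in expectation. Each such transmission consists of the $m/n$ signed data items plus $O(1)$ quantiles, i.e. $O(m/n)$ communication work. Summing over the $n$ initial nodes, the expected work the source spends on data items is $O(n \cdot m/n) = O(m)$, and the $O(n)$ quantile transmissions contribute only $O(n) = O(m)$ since $m = \Omega(n\log n)$.

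I do not expect a genuine obstacle here; the one thing to verify is that the hypotheses of Lemma~\ref{lem:advsupervisor} are actually met in the mergesort setting, namely that an honest worker in an initial list rejects exactly when it should (which holds because the data items are signed and carry indices, so a missing item, a wrong index, or an incomplete prefix of the sort is always detected) and that the runtime remains polynomial in $n$ so that the stopped-process bound of Lemma~\ref{lem:advsupervisor} applies. Both facts are already established above, so the corollary follows immediately.
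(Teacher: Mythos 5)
Your proposal is correct and follows essentially the same route as the paper, which derives this corollary directly from Lemma~\ref{th:dag_sendingwork} (itself resting on Lemma~\ref{lem:advsupervisor} applied to the initial lists); you merely make explicit the per-transmission cost of $O(m/n)$, the summation over the $n$ initial nodes, and the check that the runtime stays within the polynomial window required by Lemma~\ref{lem:advsupervisor}. No gaps.
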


Finally, Lemma~\ref{th:dag_receivingwork} implies the following result.

\begin{corollary}
If $\beta \le 1/3$ then the target performs $O(m)$ work for receiving data items in expectation.
\end{corollary}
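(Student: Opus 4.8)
The plan is to read this off from Lemma~\ref{th:dag_receivingwork} almost verbatim; the only care needed is the aggregation over the $n$ final nodes and the fact that they carry different amounts of data.

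First I would observe that the extended mergesort task graph fits the template of Section~\ref{sec:dag}: it is a leveled network of span $D=\Theta(\log n)$ and in/out-degree at most $2$, to which initial and final lists of length $c\log n$ have been appended. Choosing the constant $c$ large enough makes each final list at least $D+\log n$ long, which is exactly the regime for which Lemma~\ref{th:dag_receivingwork} was stated. Hence, for $\beta\le 1/3$ and any adversarial strategy, the target receives the output coming out of the final list attached to each original final node $v_{\log n,j}$ only $O(1)$ times in expectation.

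Next I would turn this per-node bound into a total bound by conditioning on the random choice of the $n$ quantiles. Once the quantiles are fixed, the multiset of data items that is ultimately routed to $v_{\log n,j}$ is determined; call its size $m_j$, and note that $\sum_{j=0}^{n-1} m_j = m$ \emph{deterministically}, since every data item ends up at exactly one final node. Each reception of the output of that final list costs the target $O(m_j)$ work (receiving the $m_j$ data items plus the constant-per-item verification of signatures, indices, sortedness, and quantile ranges, and the global count check). By linearity of expectation, the conditional expected receiving work is $\sum_j m_j\cdot O(1)=O(m)$, and averaging over the quantile choice preserves the bound. The conditioning is the point that deserves a sentence: without fixing the quantiles, the number of times a heavy final node is revisited could a priori be correlated with its load $m_j$, whereas inside a fixed quantile realization the $m_j$ are constants summing to $m$, so multiplying the $O(1)$ of Lemma~\ref{th:dag_receivingwork} by $m_j$ and summing is all that is needed.

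The only obstacle I anticipate is a bookkeeping one: checking that the ``wrong split'' attack discussed earlier cannot generate receptions that escape the $O(1)$ count of Lemma~\ref{th:dag_receivingwork}. But this is already covered, since that lemma's guarantee holds regardless of whether the final list is fed a correct or an incorrect stream; the target's global count check merely converts a bad output into a rejection, which behaves exactly like a missing reply and is precisely the event Lemma~\ref{th:dag_receivingwork} already charges for. So no idea beyond Lemma~\ref{th:dag_receivingwork} and linearity of expectation (over a fixed quantile realization) is required.
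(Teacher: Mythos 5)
Your proposal is correct and matches the paper's approach: the paper proves this corollary simply by invoking Lemma~\ref{th:dag_receivingwork} (each final list delivers its output to the target only $O(1)$ times in expectation for $\beta \le 1/3$), with the summation $\sum_j m_j = m$ left implicit. Your additional care in conditioning on the quantile realization and in checking that the wrong-split attack is already absorbed by the rejection-as-missing-reply convention is a sound elaboration of the same argument rather than a different route.
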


Summing up all lemmas and corollaries, we obtain \Cref{th:merge}.

\section{Conclusion}

We presented a new framework in distributed computing that employs a reliable supervisor, a reliable source, and a reliable target to orchestrate a community of potentially unreliable workers in order to solve problems whose computations can be decomposed into a task graph. Subsequently, we demonstrated the power of this framework by applying it to two important problems: matrix multiplication and sorting.

We believe that our framework can be applied to many other problems that can be parallelized well. The challenging aspect here is that lightweight verification mechanisms have to be found, though this might open up an interesting new direction of distributed verifiability of problems. Furthermore, there is no reason why it shouldn't be possible to apply our framework to the case where a majority of peers is adversarial (by using different scheduling approaches, for example), which would further underline the power of our approach.

\bibliographystyle{plainurl}
\bibliography{main}

\begin{thebibliography}{10}

\bibitem{AfekGP21}
Yehuda Afek, Gal Giladi, and Boaz Patt{-}Shamir.
\newblock Distributed computing with the cloud.
\newblock In {\em Proc.~of {SSS} 2021}, pages 1--20, 2021.
\newblock \href {https://doi.org/10.1007/978-3-030-91081-5\_1} {\path{doi:10.1007/978-3-030-91081-5\_1}}.

\bibitem{AjtaiKS83}
Mikl{\'{o}}s Ajtai, J{\'{a}}nos Koml{\'{o}}s, and Endre Szemer{\'{e}}di.
\newblock An {O}(n log n) sorting network.
\newblock In {\em Proc.~of 15th {ACM} {STOC}}, pages 1--9, 1983.

\bibitem{AlkassarBMR14}
Eyad Alkassar, Sascha B{\"{o}}hme, Kurt Mehlhorn, and Christine Rizkallah.
\newblock A framework for the verification of certifying computations.
\newblock {\em J. Autom. Reason.}, 52(3):241--273, 2014.
\newblock URL: \url{https://doi.org/10.1007/s10817-013-9289-2}, \href {https://doi.org/10.1007/S10817-013-9289-2} {\path{doi:10.1007/S10817-013-9289-2}}.

\bibitem{BOINC}
David~P. Anderson.
\newblock https://boinc.berkeley.edu/anderson/, March 2025.

\bibitem{AugustineBMPRT23}
John Augustine, Jeffin Biju, Shachar Meir, David Peleg, Srikkanth Ramachandran, and Aishwarya Thiruvengadam.
\newblock Byzantine resilient computing with the cloud.
\newblock {\em CoRR}, abs/2309.16359, 2023.
\newblock URL: \url{https://doi.org/10.48550/arXiv.2309.16359}.

\bibitem{AwerbuchS09}
Baruch Awerbuch and Christian Scheideler.
\newblock Towards a scalable and robust {DHT}.
\newblock {\em Theory of Computing Systems}, 45(2):234--260, 2009.

\bibitem{Batcher68}
Kenneth~E. Batcher.
\newblock Sorting networks and their applications.
\newblock In {\em Proc.~of 1968 {AFIPS} Spring Joint Computer Conference}, pages 307--314, 1968.

\bibitem{BEGKN91}
Manuel Blum, William~S. Evans, Peter Gemmell, Sampath Kannan, and Moni Naor.
\newblock Checking the correctness of memories.
\newblock In {\em 32nd Annual Symposium on Foundations of Computer Science, San Juan, Puerto Rico, 1-4 October 1991}, pages 90--99. {IEEE} Computer Society, 1991.
\newblock \href {https://doi.org/10.1109/SFCS.1991.185352} {\path{doi:10.1109/SFCS.1991.185352}}.

\bibitem{Cannon69}
Lynn~Elliot Cannon.
\newblock {\em A cellular computer to implement the Kalman Filter Algorithm}.
\newblock PhD thesis, Montana State University, July 1969.

\bibitem{chiesa2014succinct}
Alessandro Chiesa.
\newblock {\em Succinct non-Interactive arguments}.
\newblock PhD thesis, Massachusetts Institute of Technology, 2014.

\bibitem{DBLP:conf/dlt/CrescenzoKKS22}
Giovanni~Di Crescenzo, Matluba Khodjaeva, Delaram Kahrobaei, and Vladimir Shpilrain.
\newblock A survey on delegated computation.
\newblock In Volker Diekert and Mikhail~V. Volkov, editors, {\em Developments in Language Theory - 26th International Conference, {DLT} 2022, Tampa, FL, USA, May 9-13, 2022, Proceedings}, volume 13257 of {\em Lecture Notes in Computer Science}, pages 33--53. Springer, 2022.
\newblock \href {https://doi.org/10.1007/978-3-031-05578-2\_3} {\path{doi:10.1007/978-3-031-05578-2\_3}}.

\bibitem{feigenbaum1992overview}
Joan Feigenbaum.
\newblock Overview of interactive proof systems and zero-knowledge.
\newblock {\em Contemporary Cryptology: The Science of Information Integrity}, pages 423--439, 1992.

\bibitem{FiatS07}
Amos Fiat and Jared Saia.
\newblock Censorship resistant peer-to-peer networks.
\newblock {\em Theory of Computing}, 3(1):1--23, 2007.

\bibitem{FiatSY05}
Amos Fiat, Jared Saia, and Maxwell Young.
\newblock Making {C}hord robust to {B}yzantine attacks.
\newblock In {\em Proc.~of {ESA} 2005}, pages 803--814, 2005.

\bibitem{Freivalds77}
Rusins~M. Freivalds.
\newblock Probabilistic machines can use less running time.
\newblock In {\em Proc.~of {IFIP Congress} 1977}, pages 839--842, 1977.

\bibitem{FriedmanKK13}
Roy Friedman, Gabriel Kliot, and Alex Kogan.
\newblock Hybrid distributed consensus.
\newblock In {\em Proc.~of the 17th International Conference on Principles of Distributed Systems ({OPODIS} 2013)}, pages 145--159. Springer, 2013.
\newblock \href {https://doi.org/10.1007/978-3-319-03850-6\_11} {\path{doi:10.1007/978-3-319-03850-6\_11}}.

\bibitem{Goodrich08}
Michael~T. Goodrich.
\newblock Pipelined algorithms to detect cheating in long-term grid computations.
\newblock {\em Theor. Comput. Sci.}, 408(2-3):199--207, 2008.
\newblock URL: \url{https://doi.org/10.1016/j.tcs.2008.08.008}, \href {https://doi.org/10.1016/J.TCS.2008.08.008} {\path{doi:10.1016/J.TCS.2008.08.008}}.

\bibitem{HildrumK03}
Kirsten Hildrum and John Kubiatowicz.
\newblock Asymptotically efficient approaches to fault-tolerance in peer-to-peer networks.
\newblock In {\em Proc.~of {DISC} 2003}, pages 321--336, 2003.

\bibitem{JaiyeolaPSYZ18}
Mercy~O. Jaiyeola, Kyle Patron, Jared Saia, Maxwell Young, and Qian~M. Zhou.
\newblock Tiny groups tackle {Byzantine} adversaries.
\newblock In {\em Proc.~Of {IPDPS} 2018}, pages 1030--1039, 2018.

\bibitem{Kokoris-KogiasM20}
Eleftherios Kokoris{-}Kogias, Dahlia Malkhi, and Alexander Spiegelman.
\newblock Asynchronous distributed key generation for computationally-secure randomness, consensus, and threshold signatures.
\newblock In {\em {CCS} '20}, pages 1751--1767. {ACM}, 2020.

\bibitem{LeightonMRR94}
Frank~Thomson Leighton, Bruce~M. Maggs, Abhiram~G. Ranade, and Satish Rao.
\newblock Randomized routing and sorting on fixed-connection networks.
\newblock {\em Journal of Algorithms}, 17(1):157--205, 1994.

\bibitem{Leung2004}
Joseph~Y.{-}T. Leung, editor.
\newblock {\em Handbook of Scheduling - Algorithms, Models, and Performance Analysis}.
\newblock Chapman and Hall/CRC, 2004.

\bibitem{McConnellMNS11}
Ross~M. McConnell, Kurt Mehlhorn, Stefan N{\"{a}}her, and Pascal Schweitzer.
\newblock Certifying algorithms.
\newblock {\em Comput. Sci. Rev.}, 5(2):119--161, 2011.
\newblock URL: \url{https://doi.org/10.1016/j.cosrev.2010.09.009}, \href {https://doi.org/10.1016/J.COSREV.2010.09.009} {\path{doi:10.1016/J.COSREV.2010.09.009}}.

\bibitem{NaorW03}
Moni Naor and Udi Wieder.
\newblock A simple fault tolerant distributed hash table.
\newblock In {\em Proc.~of {IPTPS} 2003}, pages 88--97, 2003.

\bibitem{nitulescu2020zk}
Anca Nitulescu.
\newblock zk-snarks: A gentle introduction.
\newblock {\em Ecole Normale Superieure}, 2020.

\bibitem{Pastrana-CruzL23}
Andres Pastrana{-}Cruz and Manuel Lafond.
\newblock A lightweight semi-centralized strategy for the massive parallelization of branching algorithms.
\newblock {\em Parallel Comput.}, 116:103024, 2023.
\newblock URL: \url{https://doi.org/10.1016/j.parco.2023.103024}, \href {https://doi.org/10.1016/J.PARCO.2023.103024} {\path{doi:10.1016/J.PARCO.2023.103024}}.

\bibitem{Paterson90}
Mike Paterson.
\newblock Improved sorting networks with o(log {N)} depth.
\newblock {\em Algorithmica}, 5(1):65--92, 1990.

\bibitem{PeaseSL80}
Marshall~C. Pease, Robert~E. Shostak, and Leslie Lamport.
\newblock Reaching agreement in the presence of faults.
\newblock {\em J. {ACM}}, 27(2):228--234, 1980.
\newblock \href {https://doi.org/10.1145/322186.322188} {\path{doi:10.1145/322186.322188}}.

\bibitem{Ranade91}
Abhiram~G. Ranade.
\newblock How to emulate shared memory.
\newblock {\em Journal of Computer and System Sciences}, 42(3):307--326, 1991.

\bibitem{SaiaFGKS02}
Jared Saia, Amos Fiat, Steven~D. Gribble, Anna~R. Karlin, and Stefan Saroiu.
\newblock Dynamically fault-tolerant content addressable networks.
\newblock In {\em Prof.~of {IPTPS} 2002}, pages 270--279, 2002.

\bibitem{Tamassia03}
Roberto Tamassia.
\newblock Authenticated data structures.
\newblock In {\em Proc.~of 11th European Symposium on Algorithms ({ESA} 2003)}, pages 2--5, 2003.
\newblock \href {https://doi.org/10.1007/978-3-540-39658-1\_2} {\path{doi:10.1007/978-3-540-39658-1\_2}}.

\bibitem{Hadoop}
Hadoop~MapReduce Tutorial.
\newblock https://hadoop.apache.org/docs/r1.2.1/mapred\_tutorial.html, March 2025.

\bibitem{ZhangPMTDMZJ24}
Gengrui Zhang, Fei Pan, Yunhao Mao, Sofia Tijanic, Michael Dang'ana, Shashank Motepalli, Shiquan Zhang, and Hans{-}Arno Jacobsen.
\newblock Reaching consensus in the {B}yzantine empire: {A} comprehensive review of {BFT} consensus algorithms.
\newblock {\em {ACM} Computing Surveys}, 56(5):134:1--134:41, 2024.

\bibitem{ZHAO2019357}
Chuan Zhao, Shengnan Zhao, Minghao Zhao, Zhenxiang Chen, Chong-Zhi Gao, Hongwei Li, and Yu~an~Tan.
\newblock Secure multi-party computation: Theory, practice and applications.
\newblock {\em Information Sciences}, 476:357--372, 2019.
\newblock \href {https://doi.org/10.1016/j.ins.2018.10.024} {\path{doi:10.1016/j.ins.2018.10.024}}.

\end{thebibliography}

\end{document}